%% file: main.tex
\documentclass[sigconf,authorversion,nonacm]{acmart}
\AtBeginDocument{%
  }

\setcopyright{acmlicensed}
\copyrightyear{2018}
\acmYear{2018}
\acmDOI{XXXXXXX.XXXXXXX}
\acmConference[Conference acronym 'XX]{Make sure to enter the correct
  conference title from your rights confirmation email}{June 03--05,
  2018}{Woodstock, NY}
\acmISBN{978-1-4503-XXXX-X/2018/06}

\usepackage{xcolor}

\usepackage[linesnumbered,ruled]{algorithm2e}
\makeatletter
\let\oldnl\nl
\newcommand{\nonl}{\renewcommand{\nl}{\let\nl\oldnl}}
\makeatother

\SetCommentSty{mycommfont}

\usepackage{amsmath,amsthm,graphicx, algpseudocode}
\usepackage{booktabs}   
\usepackage{tabularx}   
\usepackage{graphicx}  

\usepackage{enumitem}

\SetAlFnt{\small}
\SetKwBlock{KwFunction}{function}{}
\SetKwComment{Comment}{$\triangleright$ }{}
\SetKwInput{KwOn}{On}

\usepackage{listings}

\usepackage[font=small,skip=2pt]{caption}

\usepackage{tikz}
\usetikzlibrary{arrows.meta, positioning, shapes.geometric}

\input{my_macros.tex}

\newtheorem{definition}{Definition}
\newtheorem{theorem}{Theorem}
\newtheorem{claim}{Claim}
\newtheorem{proposition}{Proposition}
\newtheorem{lemma}{Lemma}
\newtheorem{corollary}{Corollary}

\newtheorem{note2}{Note}

\title{Resilient Alerting Protocols for Blockchains}
\thanks{This paper was accepted for publication at CCS26}

\author{Marwa Mouallem}
\email{marwamouallem@cs.technion.ac.il}
\affiliation{%
  \institution{Technion, Chainlink Labs, IC3}
  \country{}
}

\author{Lorenz Breidenbach}
\email{lorenz@chainlinklabs.com}
\affiliation{%
  \institution{Chainlink Labs}
  \country{}
}

\author{Ittay Eyal}
\authornote{Part of the work was done while being an advisor to Chainlink Labs.}
\email{ittay@technion.ac.il}
\affiliation{%
  \institution{Technion, IC3}
  \country{}
}

\author{Ari Juels}
\email{juels@cornell.edu}
\affiliation{%
  \institution{ Cornell Tech, Chainlink Labs, IC3}
  \country{}
}

\date{}

\begin{document}

\begin{abstract}
Smart contracts are stateful programs deployed on blockchains; they secure over a trillion dollars in transaction value per year. 
High-stakes smart contracts often rely on timely alerts about external events, but prior work has not analyzed their resilience to an attacker suppressing alerts via bribery.
We formalize this challenge in a cryptoeconomic setting as the \emph{alerting problem}, giving rise to a game between a bribing adversary and~$n$ rational participants, who pay a penalty if they are caught deviating from the protocol.
We establish a quadratic, i.e.,~$O(n^2)$, upper bound, whereas a straightforward alerting protocol only achieves~$O(n)$ bribery cost. 

We present a \emph{simultaneous game} that asymptotically achieves the quadratic upper bound and thus asymptotically-optimal bribery resistance. 
We then present two protocols that implement our simultaneous game:
The first leverages a strong network synchrony assumption. 
The second relaxes this strong assumption and instead takes advantage of trusted hardware and blockchain proof-of-publication to establish a timed commitment scheme. 
These two protocols are constant-time but incur a linear storage overhead on the blockchain. 
We analyze a third, \emph{sequential alerting} protocol that optimistically incurs no on-chain storage overhead, at the expense of~$O(n)$ worst-case execution time. 
All three protocols achieve asymptotically-optimal bribery costs, but with different resource and performance tradeoffs.
Together, they illuminate a rich design space for practical solutions to the alerting problem.

\end{abstract}



\keywords{blockchain security, cryptoeconomics, bribery resistance, incentive mechanisms, game-theoretic security}

\settopmatter{printacmref=false}

\maketitle

\input{in_progress.tex}

\begin{acks}
This paper was edited for grammar using ChatGPT.
\end{acks}

\section*{Ethical considerations}
This work studies the alerting problem in blockchain systems at the level of abstract models and incentive mechanisms. It does not evaluate, attack, or disclose vulnerabilities in any specific deployed protocol or implementation.

\paragraph{Stakeholders.}
Relevant stakeholders include protocol designers, smart contract developers, and system operators who rely on off-chain alerting mechanisms, as well as end users whose assets depend on the correct and timely functioning of such systems. These stakeholders benefit from a clearer understanding of the economic limits of alerting mechanisms and from protocol designs with provable resistance to bribery-based suppression.

\paragraph{Potential risks and misuse considerations.}
Our results characterize fundamental bounds on bribery resistance under explicit modeling assumptions. Our analysis is purely theoretical and does not provide system-specific parameters or attack implementations, and all results are presented alongside defensive constructions that achieve provably optimal resistance within the same model.

\paragraph{Risk mitigation and justification.}
We believe that making these limitations explicit strengthens, rather than weakens, the security of real systems by preventing overconfidence in alerting mechanisms that do not meet the necessary economic criteria. By focusing on abstract models and provably robust designs, the paper aims to guide protocol designers toward safer constructions rather than enable practical exploitation.

\paragraph{Human subjects and data.}
This research does not involve human subjects, personal data, or experimentation on live systems. No IRB approval was required.

\section*{Open Science}
The results in this paper do not rely on experimental data or code.
All proofs and analyses are included in the paper and its appendices.

\bibliographystyle{ACM-Reference-Format}
\bibliography{main}

\appendix

\section{Conditional Bribes}
\label{app:conditional_bribes}
In this section, we consider an alternative adversary strategy where the bribe is conditional on the success of the attack.
The game structure is the same as in section~\ref{sec:simultaneous_alerting_game}.
The only difference is the adversary's bribe strategy.

First, the adversary offers a bribe vector $(\beta_1,\dots,\beta_n)$ specifying how much each node would be paid for not alerting if no alert is raised.
Otherwise, nodes are not paid a bribe.
The bribes are private information between the adversary and each node, meaning that no other node knows the bribe offered to a specific node. 
Then, given these offers, each node $i \in \mathcal{N}$ decides whether to accept the bribe and skip alerting or to alert and not get the bribe, choosing an action $a_i \in \{\mathsf{Alert}, \noalert\}$. 
At the end of the round, the adversary pays the bribe only if all nodes choose \noalert.
Nodes that follow the protocol and alert when they should get rewarded.

Similar to the original analysis, denote by~$\alerters$ the set of players who chose $\alert$.
If~$\alerters$ is not empty, all nodes~$j \notin \alerters$ pay a penalty of~$\penaltySlash$
and all nodes~$i \in \alerters$, share the slashed value equally as a reward.
Each alerter gets~$\alertReward{i} = \frac{ \penaltySlash \cdot (n - |\alerters|) }{|\alerters|}$.
Node~$i$ may choose not to alert in exchange for a bribe of~$\beta_i$ conditioned on the attack's success.

In summary, the utility of each node~$i$ is her payoff:
$$
\label{eq:sim_payoff2}
\rho_i(a_i,\beta_i,\alerters)
= 
\begin{cases}
  \begin{array}{ll}
    \dfrac{\penaltySlash \cdot (n-|\alerters|)}{|\alerters|} 
      & \text{if } a_i = \mathsf{Alert}\\  

     - \penaltySlash 
      & \text{if } a_i \neq \mathsf{Alert} \wedge \alerters \neq \emptyset \\

    \beta_i 
      & \text{if } \alerters = \emptyset \\ 
  \end{array}
\end{cases}
$$

And for the adversary, if the attack is successful, i.e., no alert is raised, she gains~$\advGain$ but pays all bribes.
If an alert is raised, she gains nothing but pays the bribes to nodes that did not alert.
Thus, the adversary's utility is her payoff:
\[
\label{eq:sim_payoff_adv2}
\rho_{\text{adv}}(\beta_1,\dots,\beta_n,\alerters) = 
\begin{cases}
\advGain - \sum\limits_{i \in \mathcal{N}} \beta_i, & \alerters = \emptyset\\

0, & \alerters \neq \emptyset \\
\end{cases}
\]

We now show that if the adversary's gain from a successful attack is~$\advGain < \penaltySlash \, n (n-1)$, there exists at least one node that will alert.

\begin{claim}[\alert\ is dominant when $\beta_i  < \penaltySlash (n-1)$]
\label{clm:alert-dominant2}
Consider the second step of the simultaneous alerting game assuming conditional bribes.
If node~$i$'s offered bribe satisfies~$\beta_i< \penaltySlash (n-1)$, then the action $\alert$ dominates $\noalert$. 
\end{claim}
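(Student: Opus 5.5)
We compare node~$i$'s payoff under $\alert$ and $\noalert$ for every fixed profile of the other nodes' actions, and show $\alert$ gives at least as much, and strictly more in at least one case. Let $A_{-i}$ be the set of other nodes that choose $\alert$, and write $k = |A_{-i}| \in \{0,\dots,n-1\}$. Since bribes are private and conditional, node~$i$'s payoff depends only on her own action, $\beta_i$, and $k$. So a pointwise comparison over $k$ suffices.

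\emph{Case $k=0$} (no other node alerts). If $i$ plays $\alert$, then $\alerters=\{i\}$ and she receives $\frac{\penaltySlash (n-1)}{1} = \penaltySlash (n-1)$. If $i$ plays $\noalert$, then $\alerters=\emptyset$, the attack succeeds, and she receives the conditional bribe $\beta_i$. The hypothesis $\beta_i < \penaltySlash (n-1)$ gives a strict advantage to $\alert$. This is the only case in which the bribe actually matters, and it is exactly where the hypothesis is used.

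\emph{Case $k \ge 1$} (some other node alerts). If $i$ plays $\noalert$, then $\alerters = A_{-i} \neq \emptyset$. By the payoff definition she pays the penalty, receiving $-\penaltySlash$, and gets no bribe because the bribe is conditional on no alert being raised. If $i$ plays $\alert$, then $|\alerters| = k+1 \le n$ and she receives $\frac{\penaltySlash (n-k-1)}{k+1} \ge 0 > -\penaltySlash$, assuming $\penaltySlash>0$. So $\alert$ is strictly better here as well.

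Combining the two cases, $\alert$ yields a strictly higher payoff than $\noalert$ against every profile of the others. It is therefore (strictly) dominant, which proves the claim. I expect no real obstacle. The only point needing care is reading the piecewise payoff correctly: the third line ($\beta_i$ when $\alerters=\emptyset$) applies only when $i$ herself does not alert. The case $k=n-1$, where $i$'s alert reward is $0$, should also be checked; it still beats $-\penaltySlash$.
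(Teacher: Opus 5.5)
Your proof is correct and follows essentially the same route as the paper. Both split on the number of other alerters ($0$ versus $\ge 1$). The hypothesis $\beta_i < \lambda(n-1)$ is used only in the first case, and in the second case $\lambda\frac{n-1-y}{y+1} \ge 0 > -\lambda$ decides it. Your explicit checks of the edge case $k=n-1$ and of strictness are small clarifications the paper leaves implicit.
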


\begin{proof}
Consider a node~$i$ and let~$y\in\{0,1,\ldots,n-1\}$ be the number of other alerters.
We consider two cases based on the value of~$y$.
\begin{enumerate}
  \item \textbf{$y=0$.} If nobody else alerts, then~$u_i(\noalert,\beta_i,\alerters)=\beta_i$ and~$u_i(\alert,\beta_i,\alerters)=\penaltySlash(n-1)$.
Since $\beta_i < \penaltySlash (n-1)$, choosing $\alert$ is better.

\item \textbf{$y\ge 1$.} If at least one other node alerts, then~$u_i(\noalert,\beta_i,\alerters)=-\penaltySlash<0$ and~${u_i(\alert,\beta_i,\alerters)=\penaltySlash\frac{n-1-y}{y+1}\ge 0}$.
Thus $\alert$ is better.
\end{enumerate}

Since $\alert$ yields a higher utility in every case, it dominates $\noalert$ for all players.
\end{proof}

We showed that if the offered bribe~$\beta_i\leq\penaltySlash$, then the dominant strategy for the node~$i$ is to choose $\alert$.
Therefore, if the adversary's gain from a successful attack is~$\advGain < \penaltySlash \, n (n-1)$, there exists at least one node that will alert.

\section{Early Reveal Attacks}
\label{app:early_reveal_attacks}
In this section, we sketch an example of an attack that an adversary can launch against a simultaneous alerting protocol if nodes can reveal their actions before other nodes have committed to theirs.
We show the attack costs~$O(n \log n)$. 
Consider the following adversarial strategy.
First, the adversary chooses an arbitrary ordering of the nodes.
For simplicity, assume the ordering is by node indices, i.e., node~$1$ goes first, then node~$2$, and so on.
The adversary offers bribes to one node at a time, starting from node~$1$.

The adversary offers the first node a bribe of~$\beta_1 = \penaltySlash+ \varepsilon > 0$, in exchange for choosing \noalert and revealing its action immediately. 
If node~$1$ declines, the adversary stops the attack, in which case all nodes will choose \alert\ by Claim~\ref{clm:alert-dominant}, and node~$1$'s utility will be~$0$ since no other node is slashed.
If, however, node~$1$ accepts the bribe, its utility becomes~$\varepsilon$, which is strictly better. 
Therefore, it chooses not to alert, it reveals its action immediately, and the adversary proceeds to node~$2$.

The adversary proves to node~$2$ that node~$1$ chose \noalert, and offers node~$2$ a bribe of~$\beta_2 = \frac{\penaltySlash}{n-1} + \penaltySlash + \varepsilon$, in exchange for choosing \noalert\ and revealing its action immediately.
If node~$2$ declines the bribe, then by Claim~\ref{clm:alert-dominant}, all nodes with higher indices will choose \alert, and its utility will be~$\frac{\penaltySlash}{n-1}$.
If, however, node~$2$ accepts the bribe, its utility is~$\beta_2 = \frac{\penaltySlash}{n-1} + \penaltySlash + \varepsilon$, which is strictly better.
Thus, it chooses not to alert, it reveals its action immediately, and the adversary proceeds to node~$3$.

The adversary continues this process, for each node~$i$, she proves that all previous~$i-1$ nodes chose \noalert and offers node~$i$ a bribe of~$\beta_i = \penaltySlash \frac{i-1}{n-i+1} + \penaltySlash+ \varepsilon$.
If node~$i$ declines the bribe, then by Claim~\ref{clm:alert-dominant}, all nodes with higher indices will choose \alert, and its utility will be~$\penaltySlash \frac{i-1}{n-i+1}$.
If, however, node~$i$ accepts the bribe, its utility is~$\beta_i = \penaltySlash \frac{i-1}{n-i+1} + \penaltySlash + \varepsilon$, which is strictly better.
Thus, it chooses not to alert, it reveals its action immediately, and the adversary proceeds to node~$i+1$.

This process continues until the last node~$n$, for which the adversary proves all previous nodes chose \noalert, and offers a bribe of~$\beta_n = \penaltySlash (n-1) + \varepsilon$, which is strictly higher than its utility if it were to alert, which equals~$\penaltySlash(n-1)$, so it chooses not to alert.

The total bribe in this attack is:
\[
\sum_{i=1}^{n} \beta_i = \sum_{i=1}^{n} \left( \penaltySlash \frac{i-1}{n-i+1} + \penaltySlash + \varepsilon \right) = O(n \log n)
\]

Overall, the adversary suppresses all alerts for a cost~$O(n \log n)$ by breaking the simultaneity assumption.

\section{Simultaneous Alerting Game With Non-Zero Operator Cost}
\label{app:non_zero_operator_cost}
We present a \emph{simultaneous alerting game} and show it achieves quadratic bribery resistance.
We start by defining the game as a Stackelberg game~(\S\ref{sec:simultaneous_alerting_game_opcost}).
We characterize its equilibria under different ranges of the adversary's gain from a successful attack.
First, we identify conditions under which either alerting or not alerting is a dominant strategy, leading to a symmetric pure equilibrium in each case~(\S\ref{sec:pure-strategy-nash-equilibria_opcost}).
We then show that in the intermediate range of bribes no symmetric pure equilibrium exists~(\S\ref{sec:pure-strategy-nash-equilibria_opcost}). 
Finally, we extend the analysis to mixed-strategy equilibria and study the adversary's optimal bribe choice and expected utility~(\S\ref{sec:mixed-strategy-equilibria_opcost}).

\subsection{Game Definition}
\label{sec:simultaneous_alerting_game_opcost}
We provide the full game analysis considering an operator cost of $\operatorCost \geq 0$.
Again, two types of players are involved: the set of nodes $\mathcal{N}$ and an adversary.
The game proceeds in two stages.
First, the adversary offers bribes by choosing a vector $(\beta_1,\dots,\beta_n)$ specifying how much each node would be paid for not alerting, regardless of the game outcome.

The adversary pays the bribe to any node that chooses \noalert, regardless of the game's final outcome. 
The bribes are private information between the adversary and each node, meaning that no other node knows the bribe offered to another node.
Then, given these offers, each node $i \in \mathcal{N}$ decides whether to accept the bribe and skip alerting or to alert and not get the bribe, choosing an action $a_i \in \{\alert, \noalert\}$.
Afterwards, the adversary pays the bribe only to nodes whose on-chain behavior matches the \noalert action.
The protocol rewards nodes that follow the protocol.

In an execution of the game, denote by~$\alerters$ the set of players who choose $\alert$.
If~$\alerters$ is not empty, all nodes~$j \notin \alerters$ pay a penalty of~$\penaltySlash$
and all nodes~$i \in \alerters$ share the slashed value equally as a reward.
Each alerter gets~$\frac{ \penaltySlash \cdot (n - |\alerters|) +\operatorCost }{|\alerters|}$.
Node~$i$ may choose not to alert in exchange for a bribe of~$\beta_i$.
A node that alerts receives no bribe.

In summary, the utility of each node~$i$ is its payoff:
\begin{equation}
\label{eq:sim_payoff_opcost}
\rho_i(a_i,\beta_i,\alerters)
= 
\begin{cases}
    \frac{\penaltySlash \cdot (n-|\alerters|) +\operatorCost }{|\alerters|}, 
      &  a_i = \alert\\  

    \beta_i - \penaltySlash, 
      &  a_i = \noalert, \alerters \neq \emptyset \\

    \beta_i, 
      & \alerters = \emptyset .
\end{cases}
\end{equation}

For the adversary, if none of the nodes alert, she gains~$\advGain$ but pays all bribes.
If a node alerts, the adversary gains nothing but pays the bribes to nodes that did not alert.
Thus, the adversary's utility is her payoff:
\begin{equation}
\label{eq:sim_payoff_adv_opcost}
\rho_{\text{adv}}(\beta_1,\dots,\beta_n,\alerters) = 
\begin{cases}
\advGain - \sum\limits_{i \in \mathcal{N}} \beta_i, & \alerters = \emptyset\\

- \sum\limits_{i \in \mathcal{N} \setminus \alerters} \beta_i, & \alerters \neq \emptyset 
\end{cases}
\end{equation}

\subsection{Pure Strategy Nash Equilibria}
\label{sec:pure-strategy-nash-equilibria_opcost}
We analyze the pure-strategy Nash equilibria of the simultaneous alerting game.
We show that there are two cases where a pure dominant strategy exists.
If the bribe is sufficiently high, then each node's dominant strategy is to choose $\noalert$.
And if the bribe is sufficiently low, then each node's dominant strategy is to choose $\alert$.
In the latter case, the attacker's optimal strategy is to not offer any bribe.

\begin{claim}[{\noalert} dominance]
\label{clm:noalert-dominant_opcost}
In the simultaneous alerting game, if the adversary's offered bribe for a node~$i$ is~$\beta_i\ge \penaltySlash(n-1) + \operatorCost$, then the action~$\noalert$ weakly dominates $\alert$ for node~$i$. 
And $\noalert$ is strictly dominant if $\beta_i>\penaltySlash(n-1) + \operatorCost$.
\end{claim}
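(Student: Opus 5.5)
The plan is a direct case analysis on what the other nodes do, in the same style as the proof of Claim~\ref{clm:alert-dominant2}. Fix node~$i$ and its offer~$\beta_i$. Let~$y\in\{0,1,\ldots,n-1\}$ be the number of other nodes that choose $\alert$. By~\eqref{eq:sim_payoff_opcost}, node~$i$'s payoff depends on the others only through~$y$, so I will compare the two actions for each~$y$. I assume~$\penaltySlash\ge 0$,~$\operatorCost\ge 0$ and~$n\ge 2$.

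\textbf{Case $y=0$.} If node~$i$ alerts, then~$\alerters=\{i\}$ and its payoff is~$\penaltySlash(n-1)+\operatorCost$. If it chooses $\noalert$, then~$\alerters=\emptyset$ and it receives~$\beta_i$. The hypothesis~$\beta_i\ge\penaltySlash(n-1)+\operatorCost$ therefore gives $\noalert$ at least as large a payoff, and a strictly larger one when the inequality is strict.

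\textbf{Case $y\ge 1$.} If node~$i$ alerts, then~$|\alerters|=y+1$ and its payoff is
\[
\frac{\penaltySlash(n-1-y)+\operatorCost}{y+1}.
\]
If it chooses $\noalert$, it is slashed and gets~$\beta_i-\penaltySlash\ge\penaltySlash(n-2)+\operatorCost$. Since~$y\ge 1$, we have~$n-1-y\le n-2$ and~$y+1\ge 2$, so the alert payoff is at most
\[
\frac{\penaltySlash(n-2)+\operatorCost}{2}\le\penaltySlash(n-2)+\operatorCost,
\]
where the last step uses that~$\penaltySlash(n-2)+\operatorCost$ is nonnegative. Hence $\noalert$ is weakly better for every~$y\ge 1$. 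If~$\beta_i>\penaltySlash(n-1)+\operatorCost$, the first inequality in the chain~$\beta_i-\penaltySlash\ge\penaltySlash(n-2)+\operatorCost$ becomes strict, so $\noalert$ is strictly better here as well.

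Combining the two cases yields weak dominance under~$\beta_i\ge\penaltySlash(n-1)+\operatorCost$ and strict dominance under~$\beta_i>\penaltySlash(n-1)+\operatorCost$.

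The only delicate point is the case~$y\ge 1$. Here the $\noalert$ payoff carries the~$-\penaltySlash$ slash, so we must check that the margin~$\penaltySlash(n-1)$ built into the threshold still covers it. This works because the alerter's share is diluted by the factor~$y+1\ge 2$. Beyond that, I only need to note the nonnegativity used above. If the paper's notion of weak dominance requires strict improvement against some profile, the~$y\ge 1$ case provides it whenever~$\penaltySlash(n-2)+\operatorCost>0$, since then the dilution by~$y+1\ge 2$ is strict.
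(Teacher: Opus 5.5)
Your proof is correct and follows the paper's argument: the same case split on the number $y$ of other alerters, with the $y\ge 1$ case reduced to the worst case $y=1$, where the alert payoff is $\frac{\penaltySlash(n-2)+\operatorCost}{2}$. The paper reaches that bound via monotonicity of $f(y)$ and notes that $\beta_i\ge\frac{\penaltySlash n+\operatorCost}{2}$ already suffices for $y\ge 1$, whereas you bound numerator and denominator separately and compare against $\beta_i-\penaltySlash\ge\penaltySlash(n-2)+\operatorCost$; this is the same step in substance.
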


\begin{proof}
Consider a node~$i$ and let~${y\in\{0,1,\ldots,n-1\}}$ denote the number of other alerters in an execution of the game.
If~$i$ alerts while~$y$ others alert, i.e.,~${|\alerters|=y+1}$, its payoff is~$\frac{\penaltySlash(n-y-1) + \operatorCost}{y+1}$.
Otherwise, if~$i$ does not alert and someone alerts, its payoff is $\beta_i-\penaltySlash$.
Finally, if no one alerts i.e.,~$y=0$, choosing $\noalert$ earns node~$i$ the offered bribe~$\beta_i$ while a single alerter earns $\penaltySlash(n-1) + \operatorCost$.

We divide the proof into two cases based on the value of $y$.

\begin{enumerate}
\item \textbf{$y=0$.}
Comparing between $\noalert$ and $\alert$ gives ${u_i(\noalert,\beta_i, \alerters=\varnothing)=\beta_i}$ and ${u_i(\alert, \beta_i, \alerters=\{i\})=\penaltySlash(n-1) + \operatorCost}$.
By assumption $\beta_i\ge \penaltySlash(n-1) + \operatorCost$, so $\noalert$ is at least as good, and strictly better if~${\beta_i>\penaltySlash(n-1) + \operatorCost}$.

\item \textbf{ $y\ge 1$.}
Again we compare the utility of the two actions.
$u_i(\noalert, \beta_i, \alerters)=\beta_i-\penaltySlash$, while~$u_i(\alert, \beta_i, \alerters)=\frac{\penaltySlash(n-1-y) + \operatorCost}{y+1}$.
The function~$f(y):=\frac{\penaltySlash(n-y-1) + \operatorCost}{y+1}$ is decreasing in~$y$ for
$y\ge 1$, so its maximum over $\{1,\ldots,n-1\}$ is at $y=1$ with~${f(1)=\frac{\penaltySlash(n-2) + \operatorCost}{2}}$. 
Thus it suffices to show~$
\beta_i-\penaltySlash \;\ge\; \frac{\penaltySlash(n-2) + \operatorCost}{2}.
$

This holds whenever $\beta_i\ge \frac{\penaltySlash \, n +\operatorCost}{2}$, and in particular under our assumption $\beta_i\ge \penaltySlash (n-1) + \operatorCost$ because ${\penaltySlash(n-1) + \operatorCost \ge \frac{\penaltySlash \, n + \operatorCost}{2}}$ for all $n\ge 2$. 
Therefore, for every~${y\ge 1}$,~${u_i(\noalert, \alerters)\ge u_i(\alert, \alerters)}$, with strict inequality when ${\beta_i>\penaltySlash(n-1)+ \operatorCost}$.
\end{enumerate}
Combining the two cases, $\noalert$ dominates $\alert$ for all nodes~$i$ whose~$\beta_i\ge \penaltySlash(n-1) + \operatorCost$, and is strictly dominant when $\beta_i>\penaltySlash(n-1) + \operatorCost$.
\end{proof}

\begin{corollary}[Profitable bribery when~$\advGain > \penaltySlash \, n(n-1)+ n \, \operatorCost$]
\label{cor:profitable-bribery_opcost}
In the full two-stage alerting game, if the adversary's gain from a successful attack is~${\advGain > \penaltySlash \, n(n-1) + n \, \operatorCost}$, then there exists a profitable bribery strategy for the adversary that induces a unique pure-strategy equilibrium in the second stage of the game where all nodes choose $\noalert$.
\end{corollary}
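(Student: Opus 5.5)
The plan is an explicit construction followed by a direct appeal to Claim~\ref{clm:noalert-dominant_opcost}. Let $\varepsilon := \frac{\advGain - \penaltySlash\, n(n-1) - n\,\operatorCost}{2n}$, which is strictly positive by the hypothesis $\advGain > \penaltySlash\, n(n-1) + n\,\operatorCost$. The adversary offers every node the same bribe,
\[
\beta_i = \penaltySlash(n-1) + \operatorCost + \varepsilon, \qquad i \in \mathcal{N}.
\]

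First, we show that this bribe pins down the second stage. For each node we have $\beta_i > \penaltySlash(n-1)+\operatorCost$, so by the strict part of Claim~\ref{clm:noalert-dominant_opcost}, $\noalert$ strictly dominates $\alert$ for node~$i$, whatever the other nodes do. Since every node has a strictly dominant action, the profile in which all nodes choose $\noalert$ is a Nash equilibrium of the second stage. It is also the unique one, pure or otherwise: a strictly dominated action cannot appear in any equilibrium, so no profile in which some node alerts can be an equilibrium. Because the claim is stated with respect to node~$i$'s own bribe only, the fact that bribes are private information causes no difficulty. Each node's dominance argument holds regardless of what it believes about the others' bribes.

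Second, we compute the adversary's utility at this equilibrium. Here $\alerters = \emptyset$, so by~\eqref{eq:sim_payoff_adv_opcost},
\[
\rho_{\text{adv}} = \advGain - n\bigl(\penaltySlash(n-1)+\operatorCost+\varepsilon\bigr) = \frac{\advGain - \penaltySlash\, n(n-1) - n\,\operatorCost}{2} > 0.
\]
The strategy is therefore profitable. It also beats the no-bribe alternative: with zero bribes, $\alert$ is weakly better for every node, so the adversary's utility there is at most $0$.

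I expect no real obstacle; the work is essentially bookkeeping. The only step needing care is that the inequality must be strict. This is why $\varepsilon>0$ is necessary: with $\beta_i = \penaltySlash(n-1)+\operatorCost$ exactly, $\noalert$ is only weakly dominant, and uniqueness of the all-$\noalert$ equilibrium could fail. The strictness of the hypothesis on $\advGain$ is precisely what leaves room for a positive $\varepsilon$ while keeping the adversary's utility positive.
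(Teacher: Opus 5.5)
Your proof is correct and follows the paper's argument. You offer every node a bribe strictly above $\penaltySlash(n-1)+\operatorCost$, use the strict part of Claim~\ref{clm:noalert-dominant_opcost} to force the unique all-$\noalert$ profile, and compare $\advGain-\sum_i\beta_i>0$ with the zero-bribe outcome; your explicit $\varepsilon$ just makes the paper's ``$\beta_i>\penaltySlash(n-1)+\operatorCost$'' concrete. One small fix: for the zero-bribe benchmark, cite the strict part of Claim~\ref{clm:alert-dominant_opcost} (since $0<\penaltySlash+\operatorCost/n$), because weak dominance of $\alert$ alone would not rule out an all-$\noalert$ equilibrium and so would not bound her utility by $0$.
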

\begin{proof}
By Claim~\ref{clm:noalert-dominant_opcost}, for every node $i$, if~${\beta_i > \penaltySlash \, (n-1) + \operatorCost}$, then $\noalert$ dominates $\alert$.
Hence, if the adversary offers $\beta_i > \penaltySlash(n-1) + \operatorCost$ to all nodes, the nodes' unique best-response profile is that all nodes choose $\noalert$. 
The adversary's utility at these $\beta_i$ values and nodes' response is $u_{\mathrm{adv}} = \advGain - \sum_{i=1}^{n} \beta_i$ if she chooses to bribe the nodes, and $u_{\mathrm{adv}} = 0$ if she does not. 
Thus, when the adversary's gain from a successful attack is~${\advGain > \penaltySlash \, n(n-1) + n \, \operatorCost}$, then bribery is profitable.
\end{proof}

We conclude that when the briber's gain from a successful attack is over~$\penaltySlash n(n-1) + n \, \operatorCost$, she can ensure that all nodes play $\noalert$ and achieve a positive utility.
We now consider the opposite case where the bribe is too low.

\begin{claim}[\alert\ is dominant when $\beta_i  \leq \penaltySlash + \frac{\operatorCost}{n}$]
\label{clm:alert-dominant_opcost}
In the simultaneous alerting game, if the adversary's offered bribe for a node~$i$ is~$\beta_i \leq \penaltySlash + \frac{\operatorCost}{n}$, then the action~$\alert$ weakly dominates $\noalert$ for node~$i$. 
And $\alert$ is strictly dominant if $\beta_i<\penaltySlash + \frac{\operatorCost}{n}$.
\end{claim}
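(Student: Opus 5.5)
The proof will mirror that of Claim~\ref{clm:noalert-dominant_opcost}. Fix a node~$i$ and let $y\in\{0,1,\ldots,n-1\}$ be the number of other alerters. I will compare $i$'s payoff under $\alert$ and under $\noalert$ for every such $y$, using Equation~\eqref{eq:sim_payoff_opcost}. If $i$ alerts, then $|\alerters|=y+1$ and its payoff is $f(y):=\frac{\penaltySlash(n-y-1)+\operatorCost}{y+1}$. If $i$ does not alert, its payoff is $\beta_i$ when $y=0$ and $\beta_i-\penaltySlash$ when $y\ge 1$. I will split into these two cases.

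\textbf{Case $y=0$.} Here I compare $\beta_i$ with $f(0)=\penaltySlash(n-1)+\operatorCost$. The hypothesis gives $\beta_i\le \penaltySlash+\frac{\operatorCost}{n}$. It then suffices to note that $\penaltySlash+\frac{\operatorCost}{n}\le \penaltySlash(n-1)+\operatorCost$ whenever $n\ge 2$, since $\penaltySlash\ge 0$ and $\operatorCost\ge 0$. Hence $\alert$ is at least as good, and strictly better when $\beta_i<\penaltySlash+\frac{\operatorCost}{n}$.

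\textbf{Case $y\ge 1$.} Here I must show $\beta_i-\penaltySlash\le f(y)$ for all $y\in\{1,\ldots,n-1\}$. The key observation is that $f$ is decreasing in $y$: the numerator decreases and the denominator increases. So the worst case is $y=n-1$, where $f(n-1)=\frac{\operatorCost}{n}$; in that case all other nodes alert and the alerters share only the operator cost. The requirement $\beta_i-\penaltySlash\le \frac{\operatorCost}{n}$ is exactly the hypothesis $\beta_i\le \penaltySlash+\frac{\operatorCost}{n}$. Strict inequality carries through when the hypothesis is strict.

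The only delicate point is identifying the binding case. Unlike the $\noalert$-dominance claim, where the maximum of $f$ (at $y=1$) mattered, here the minimum of $f$ (at $y=n-1$) is what matters. This minimum is precisely what produces the threshold $\penaltySlash+\frac{\operatorCost}{n}$. The step also relies on $n\ge 2$ in the $y=0$ case. Combining the two cases shows that $\alert$ weakly dominates $\noalert$, and strictly dominates it when the inequality is strict.
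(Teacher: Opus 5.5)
Your proposal is correct and follows essentially the same route as the paper: the same split into $y=0$ and $y\ge 1$. For $y=0$ you compare $\beta_i$ with $\penaltySlash(n-1)+\operatorCost$ using $n\ge 2$, and for $y\ge 1$ you bound the alerting payoff below by its minimum $\frac{\operatorCost}{n}$ at $y=n-1$; your monotonicity argument for $f$ just makes explicit the lower bound the paper states directly.
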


\begin{proof}
Consider a node~$i$ and let~$y\in\{0,1,\ldots,n-1\}$ be the number of other alerters.
We consider two cases based on the value of~$y$.
\begin{enumerate}
  \item \textbf{$y=0$.} If nobody else alerts, then~$
u_i(\noalert,\beta_i,\alerters)=\beta_i$ and ~$u_i(\alert,\beta_i,\alerters)=\penaltySlash(n-1) + \operatorCost$.
Since $n\ge 2$ and $\beta_i\leq\penaltySlash + \frac{\operatorCost}{n}$, we have $\penaltySlash(n-1) + \operatorCost \geq \beta_i$, so $\alert$ is better, and is strictly better when $\beta_i<\penaltySlash + \frac{\operatorCost}{n}$.

\item \textbf{$y\ge 1$.} If at least one other node alerts, then~${u_i(\noalert,\beta_i,\alerters)=\beta_i-\penaltySlash< \frac{\operatorCost}{n}}$ and ${u_i(\alert,\beta_i,\alerters)=\frac{\penaltySlash(n-1-y) + \operatorCost}{y+1}\ge \frac{\operatorCost}{n}}$.
Thus $\alert$ is better.
\end{enumerate}

Since $\alert$ yields a higher utility in every case, it dominates $\noalert$ for all players, and is strictly dominant when $\beta_i<\penaltySlash + \frac{\operatorCost}{n}$.
\end{proof}

\begin{corollary}[Bribery is unprofitable when $\advGain < \penaltySlash \, n + \operatorCost$]
If the adversary's gain from a successful attack is less than the total penalty incurred by all nodes~$\advGain < \penaltySlash \, n + \operatorCost$, then bribery is unprofitable, and the adversary's optimal strategy is to not bribe any node.
\end{corollary}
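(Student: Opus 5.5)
The plan is to show that for any bribe vector $(\beta_1,\dots,\beta_n)$ the adversary's utility is at most $0$, so offering nothing is optimal. Offering nothing gives utility exactly $0$: by Claim~\ref{clm:alert-dominant_opcost} with $\beta_i=0<\penaltySlash+\frac{\operatorCost}{n}$, every node strictly prefers $\alert$, and by~\eqref{eq:sim_payoff_adv_opcost} the adversary then pays nothing. Throughout, I assume nodes best-respond rationally, so a node never picks a strictly dominated action.

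Now fix an arbitrary bribe vector and split the nodes by threshold. Let $L=\{i:\beta_i<\penaltySlash+\frac{\operatorCost}{n}\}$ and $H=\mathcal{N}\setminus L$. By Claim~\ref{clm:alert-dominant_opcost}, every node in $L$ strictly prefers $\alert$.

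\textbf{Case $L\neq\emptyset$.} Some node alerts, so $\alerters\neq\emptyset$. By~\eqref{eq:sim_payoff_adv_opcost} the adversary's utility is $-\sum_{i\notin\alerters}\beta_i\le 0$, since bribes are nonnegative.

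\textbf{Case $L=\emptyset$.} Every node has $\beta_i\ge\penaltySlash+\frac{\operatorCost}{n}$. There are two sub-outcomes.
\begin{itemize}
\item If all nodes play $\noalert$, the adversary gets
\[
\advGain-\sum_i\beta_i\le \advGain-\penaltySlash\, n-\operatorCost<0,
\]
using the hypothesis $\advGain<\penaltySlash\, n+\operatorCost$.
\item If some node alerts, the adversary again gets $-\sum_{i\notin\alerters}\beta_i\le 0$.
\end{itemize}

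In every case the adversary's utility is at most $0$. Whenever the attack succeeds, the utility is in fact strictly negative. So no bribery strategy beats offering nothing.

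The only delicate point is the boundary $\beta_i=\penaltySlash+\frac{\operatorCost}{n}$, where Claim~\ref{clm:alert-dominant_opcost} gives only weak dominance. I avoid this by placing such nodes in $H$: the case $L=\emptyset$ never relies on their behaviour, and the bound $\sum_i\beta_i\ge \penaltySlash\, n+\operatorCost$ still holds with equality allowed. The strict inequality $\advGain<\penaltySlash\, n+\operatorCost$ then keeps a successful attack unprofitable.

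It remains to compare with the zero-bribe baseline. When the adversary's utility is exactly $0$ under some nonzero vector, she is indifferent rather than strictly worse off. Because not bribing is risk-free and attains the maximum of $0$, it is an optimal strategy, as the statement asserts.
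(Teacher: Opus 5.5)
Your proof is correct and takes the same route as the paper's. The paper splits on whether some node's bribe lies below the threshold $\lambda + c/n$: if so, that node alerts by Claim~\ref{clm:alert-dominant_opcost} and the adversary gets $-\sum_{i\notin F}\beta_i \le 0$; otherwise a successful attack costs at least $\lambda n + c > G$. Your version, which defines $L$ with the strict threshold and absorbs the boundary $\beta_i = \lambda + c/n$ into the sum bound, is slightly cleaner than the paper's, which relies on weak dominance at the boundary.
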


\begin{proof}
Assume the adversary's gain from a successful attack is~$\advGain < \penaltySlash \, n + \operatorCost$.
If she offers bribes~${\beta_i > \penaltySlash + \frac{\operatorCost}{n}}$ to all nodes, then her expected utility is either~${\advGain - \sum_{i=1}^{n} \beta_i <0}$ or~${- \sum_{i \in \mathcal{N} \setminus \alerters} \beta_i <0}$.
In both cases, bribery is unprofitable.
Therefore, her total bribery is smaller than~$\penaltySlash \, n + \operatorCost$, and there exists at least one node~$i$ whose bribe satisfies~$\beta_i \leq \penaltySlash + \frac{\operatorCost}{n}$.
By Claim~\ref{clm:alert-dominant_opcost} node~$i$'s dominant strategy is to choose $\alert$.
In that case, the attack fails and the adversary's expected utility is~$- \sum_{i \in \mathcal{N} \setminus \alerters} \beta_i$.
Therefore, the optimal strategy for the adversary is to not bribe any nodes, i.e., to choose $\beta_i=0$ for all nodes~$i$.
We showed that in both cases, bribery is unprofitable and the optimal strategy for the adversary is to not bribe any node.
\end{proof}

We showed that if the offered bribe~$\beta_i\leq\penaltySlash$, or~$\beta_i\geq \penaltySlash(n-1)$, then there exists a dominant strategy for the node~$i$.
However, we now show that if the offered bribes satisfy~$\penaltySlash<\beta_i<\penaltySlash(n-1)$, then no symmetric pure Nash equilibrium exists.

\begin{lemma}[No symmetric pure equilibrium]
\label{lem:no-symm-pure_opcost}
 Consider the second phase of the simultaneous alerting game 
with~$n\ge2$ and a~$\penaltySlash>0$.
If there exists a node~$i$ whose bribe~$\beta_i$ satisfies

\begin{equation}
\label{eq:interior_opcost}
\penaltySlash  + \frac{\operatorCost}{n} \;<\; \beta_i \;<\; \penaltySlash(n-1) + \operatorCost,
\end{equation}
then there exists no symmetric pure Nash equilibrium.
\end{lemma}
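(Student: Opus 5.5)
The plan is to argue by contradiction. There are only two symmetric pure profiles in the second stage: everyone plays $\alert$, or everyone plays $\noalert$. I will show that neither is a Nash equilibrium once node~$i$'s bribe lies strictly inside the interval~\eqref{eq:interior_opcost}. In each case the witness is a profitable unilateral deviation by node~$i$ itself, computed from the payoff function~\eqref{eq:sim_payoff_opcost}. As in the proofs of Claim~\ref{clm:noalert-dominant_opcost} and Claim~\ref{clm:alert-dominant_opcost}, I will write $y$ for the number of alerters other than $i$.

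\textbf{Case 1: all nodes play $\noalert$.} Here $\alerters=\emptyset$, so node~$i$ receives $\beta_i$. If $i$ deviates to $\alert$ while everyone else stays silent, then $y=0$ and $|\alerters|=1$. Node~$i$ becomes the sole alerter and receives $\penaltySlash(n-1)+\operatorCost$. By the right-hand inequality of~\eqref{eq:interior_opcost}, $\beta_i<\penaltySlash(n-1)+\operatorCost$, so the deviation is strictly profitable. Hence this profile is not an equilibrium.

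\textbf{Case 2: all nodes play $\alert$.} Here $|\alerters|=n$, so node~$i$ receives $\frac{\penaltySlash\cdot 0+\operatorCost}{n}=\frac{\operatorCost}{n}$. If $i$ deviates to $\noalert$, the set $\alerters$ contains the other $n-1\ge 1$ nodes, so it is still nonempty. Node~$i$ is therefore slashed but keeps the bribe, receiving $\beta_i-\penaltySlash$. By the left-hand inequality of~\eqref{eq:interior_opcost}, $\beta_i-\penaltySlash>\frac{\operatorCost}{n}$, so this deviation is also strictly profitable. Hence this profile is not an equilibrium either.

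Since both symmetric pure profiles admit a profitable deviation by node~$i$, no symmetric pure Nash equilibrium exists, which proves the lemma. The only step needing care is Case~2: we must confirm that after $i$ deviates the alerter set is still nonempty, which uses $n\ge2$, so that the second branch of~\eqref{eq:sim_payoff_opcost} applies rather than the third. We should also note that the all-$\alert$ payoff is exactly $\operatorCost/n$, because no one is slashed. Beyond this, the argument is a direct payoff comparison.
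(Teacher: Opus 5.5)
Your proposal is correct and is essentially the paper's own proof. The paper also rules out the only two symmetric pure profiles via a unilateral deviation by node~$i$: from all-\alert\ (payoff $\operatorCost/n$ versus $\beta_i-\penaltySlash$, using $\alerters\setminus\{i\}\neq\emptyset$) and from all-\noalert\ (payoff $\beta_i$ versus $\penaltySlash(n-1)+\operatorCost$); you only present the two cases in the opposite order, which is immaterial.
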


\begin{proof}
There are only two symmetric pure profiles:
\begin{enumerate}
  \item \emph{All \alert: $\alerters=\mathcal N$.}
Then each node's payoff is~$\frac{\operatorCost}{n}$ (Equation~\ref{eq:sim_payoff_opcost}).
A unilateral deviation by node~$i$ to \noalert yields a payoff of~$\beta_i-\penaltySlash$ (Equation~\ref{eq:sim_payoff_opcost}) since~$\alerters\setminus\{i\}\neq\emptyset$.
By the left inequality in Equation~\ref{eq:interior_opcost},~$\beta_i-\penaltySlash>\frac{\operatorCost}{n}$, so deviating is strictly profitable.
Hence all~\alert is not a Nash equilibrium.

  \item  \emph{All \noalert: $\alerters=\emptyset$.}
Node~$i$'s payoff is~$\beta_i$.
A unilateral deviation by node~$i$ to~$\alert$ yields a payoff of~$\penaltySlash\frac{|\,\mathcal N\setminus\{i\}\,|}{|\,\{i\}\,|} + \operatorCost=\penaltySlash(n-1) + \operatorCost$.
By the right inequality in Equation~\ref{eq:interior_opcost},~$\penaltySlash(n-1) + \operatorCost>\beta_i$, so deviating is strictly profitable.
Hence all~\noalert is not a Nash equilibrium.
\end{enumerate}

These are the only symmetric pure profiles and neither is a Nash equilibrium.
\end{proof}

\subsection{Mixed Strategy Nash Equilibria}
\label{sec:mixed-strategy-equilibria_opcost}
By Lemma~\ref{lem:no-symm-pure_opcost}, whenever a bribe satisfies~${\penaltySlash + \frac{\operatorCost}{n} < \beta_i < \penaltySlash(n-1) + \operatorCost}$, no symmetric pure Nash equilibrium exists.
Since a node that is offered such a bribe has no dominant strategy, it may choose its action with some probability. 
We study the existence of mixed-strategy equilibria in this interior range of bribes, where the strategy of a node is its probability of choosing $\alert$.
In the next lemma, we give a lower bound on the adversary's \emph{expected} total bribe outlay that holds for any bribe vector and any strategy profile of the nodes.

\begin{lemma}[Lower bound on expected bribe]
\label{lem:bribe-lower-bound_opcost}
For any bribe vector~$(\beta_1,\ldots,\beta_n)$,
let each node $i$ alert with probability $p_i\in[0,1]$ and not alert with probability $q_i=1-p_i$, and assume those probabilities are independent.
Denote by  
\[
Q \;:=\; \Pr(\alerters = \emptyset) \;=\; \prod_{i=1}^n q_i 
\]
 the probability that no node alerts.

Then, in any mixed strategy equilibria, the expected total bribe paid by the adversary satisfies
\begin{equation}
\label{eq:lower-bound-bribes_opcost}
\sum_{i=1}^n \beta_i\, q_i
\;\ge\;
(\penaltySlash\, n(n-1) + n \,\operatorCost )\, Q.
\end{equation}
\end{lemma}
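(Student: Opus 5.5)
The plan is to exploit the equilibrium (best-response) condition of each node separately and then sum. In any mixed equilibrium, each node $i$ either mixes, is pure \noalert, or is pure \alert. For every node with $q_i>0$, I will derive the per-node inequality
\[
\beta_i\, q_i \;\ge\; \bigl(\penaltySlash(n-1)+\operatorCost\bigr)\, Q .
\]
Summing this over all $i$ then gives Equation~\ref{eq:lower-bound-bribes_opcost}. The inequality is trivial when $q_i=0$, since then $Q=0$.

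To obtain the per-node inequality, fix $i$ with $q_i>0$. Then \noalert is in the support of $i$'s strategy, so it is a best response, and $\mathbb{E}[u_i(\noalert)] \ge \mathbb{E}[u_i(\alert)]$ with the expectation taken over the others' independent actions. Let $Q_{-i}=\prod_{j\neq i} q_j$ be the probability that no other node alerts, and let $Y$ be the number of other alerters. By Equation~\ref{eq:sim_payoff_opcost},
\[
\mathbb{E}[u_i(\noalert)] = \beta_i - \penaltySlash\,(1-Q_{-i}),
\]
\[
\mathbb{E}[u_i(\alert)] = Q_{-i}\bigl(\penaltySlash(n-1)+\operatorCost\bigr) + \mathbb{E}\!\left[\tfrac{\penaltySlash(n-1-Y)+\operatorCost}{Y+1}\,\mathbf{1}_{Y\ge1}\right].
\]
The last expectation is nonnegative, and $\penaltySlash(1-Q_{-i})\ge 0$. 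Dropping both gives $\beta_i \ge \beta_i-\penaltySlash(1-Q_{-i}) \ge Q_{-i}\bigl(\penaltySlash(n-1)+\operatorCost\bigr)$. Multiplying by $q_i$ and using $q_iQ_{-i}=Q$ yields the per-node inequality.

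I expect the only delicate point to be justifying that \noalert is a best response whenever $q_i>0$. This is the standard support property of mixed Nash equilibria and relies on independence across nodes, which the statement assumes. The slack we discard (the penalty term and the payoff of sharing the reward) only strengthens the bound, so I anticipate no real obstacle beyond stating the equilibrium condition carefully. Summing over $i$ completes the argument.
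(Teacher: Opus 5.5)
Your proposal is correct and follows essentially the same route as the paper: use each node's equilibrium condition, discard the contributions from outcomes in which some other node alerts to get $\beta_i \ge (\penaltySlash(n-1)+\operatorCost)\,Q_{-i}$, then multiply by $q_i$ and sum. The one difference is in your favor. You use only the best-response inequality for nodes with $q_i>0$ and treat $q_i=0$ as trivial. The paper instead asserts indifference between \alert and \noalert for every node, which holds only for nodes that genuinely mix, so your version also covers equilibria in which some nodes play pure strategies.
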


\begin{proof}
Consider a node~$i$ and let~$Y_{-i}$ be the number of other alerters (whether or not~$i$ alerts). 
Denote by~${Q_{-i}:=\Pr(Y_{-i}=0)=\prod_{j\ne i} q_j}$ the probability that other nodes do not alert.
If~$i$ alerts while~$Y_{-i}=y$, its payoff is~${\frac{\penaltySlash(n-(y+1))+ \operatorCost}{y+1}=\penaltySlash(\frac{n}{1+y}-1) + \frac{\operatorCost}{1+y}}$,
so
\begin{equation}
\label{eq:alert-exp_opcost}
\begin{aligned}
\mathbb{E}\!\big[\rho_i(a_i = \alert,\beta_i,\alerters)\big]
&= \penaltySlash\!\left(
    n\,\mathbb{E}\!\Big[\frac{1}{1+Y_{-i}}\Big] - 1
\right) \\
&\quad + \operatorCost\,\mathbb{E}\!\Big[\frac{1}{1+Y_{-i}}\Big].
\end{aligned}
\nonumber
\end{equation}

We denote this expected payoff by $U_i^{\alert}$.
If $i$ chooses $\noalert$, it always receives $\beta_i$ and pays the penalty if and only if someone else alerts:
\begin{equation}
\label{eq:noalert-exp_opcost}
\begin{aligned}
\mathbb{E}[\rho_i(a_i = \noalert,\beta_i,\alerters)]
&= \beta_i - \penaltySlash\,\Pr(Y_{-i}\ge1) \\
&= \beta_i + \penaltySlash\,Q_{-i} - \penaltySlash.
\end{aligned}
\nonumber
\end{equation}
We denote this expected payoff by $U_i^{\noalert}(\beta_i)$.

Notice that~$U_i^{\alert}$ is independent of~$\beta_i$ while~$U_i^{\noalert}(\beta_i)$ is strictly increasing in~$\beta_i$.
In equilibrium, for each node~$i$, its utility from alerting and not alerting must be equal. 
I.e., $U_i^{\alert} = U_i^{\noalert}(\beta_i)$.
Otherwise, a node would deviate to the action with the higher expected utility.
Thus, in equilibrium, it holds that 
\[
\penaltySlash(n\,\mathbb{E}\!\Big[\frac{1}{1+Y_{-i}}\Big]-1) + \operatorCost \mathbb{E}\!\Big[\frac{1}{1+Y_{-i}}\Big]
= \beta_i + \penaltySlash\,Q_{-i} - \penaltySlash.
\]

Equivalently, 
\begin{equation}
\beta_i = \penaltySlash\!\left(n\,\mathbb{E}\!\Big[\frac{1}{1+Y_{-i}}\Big]-Q_{-i}\right) + \operatorCost \mathbb{E}\!\Big[\frac{1}{1+Y_{-i}}\Big].
\nonumber
\end{equation}

Since $\tfrac{1}{1+y}\ge 1$ when $y=0$ and $\tfrac{1}{1+y}\ge0$ otherwise, we have
\begin{equation}
  \begin{aligned}
\mathbb{E}\!\left[\frac{1}{1+Y_{-i}}\right]
&=\sum_{y=0}^{n-1}\frac{1}{1+y}\,\Pr(Y_{-i}=y)\\
&\;\ge\; 1\cdot\Pr(Y_{-i}=0)+0\cdot\Pr(Y_{-i}\ge 1)\\
&=\Pr(Y_{-i}=0)=Q_{-i},
\end{aligned}
\nonumber
\end{equation}

and thus
\begin{equation}
\label{eq:beta-simple-lb_opcost}
\beta_i \;\ge\; (\penaltySlash\,(n-1)+ \operatorCost)\,Q_{-i}
\end{equation}
Multiplying Equation~\ref{eq:beta-simple-lb_opcost} by~$q_i$ and using~${q_i Q_{-i}=\prod_{j=1}^n q_j=Q}$ yields~$\beta_i q_i \ge (\penaltySlash\,(n-1)+ \operatorCost)\,Q$ for all such $i$.
Summing over $i=1,\ldots,n$, we get
\[
\sum_{i=1}^n \beta_i q_i \;\ge\; (\penaltySlash\,(n-1)+ \operatorCost)\,\sum_{i=1}^n Q
\;=\; (\penaltySlash\,n(n-1)+ \operatorCost n)\,Q,
\]
concluding the proof.
\end{proof}

In the boundary cases where a dominant strategy exists, if all bribes are at least~$\penaltySlash(n-1)$, then all nodes choose $\noalert$, i.e., $q_i=1$ for all $i$ and $Q=1$.
Inequality~\ref{eq:lower-bound-bribes_opcost} holds.
If all bribes are at most~$\penaltySlash$, then all nodes choose $\alert$, i.e., $q_i=0$ for all $i$ and $Q=0$.
Inequality~\ref{eq:lower-bound-bribes_opcost} holds as well.

To analyze the adversary's optimal strategy, we consider her expected payoff.
The adversary's expected payoff is her gain from a successful attack times the probability that no node alerts, minus the expected total bribe paid to the nodes.
I.e.,~${\mathbb{E}[\rho_{\text{adv}}]=\advGain\,Q-\sum_i \beta_i q_i}$.
By Lemma~\ref{lem:bribe-lower-bound_opcost}, we get the upper bound~${\mathbb{E}[\rho_{\text{adv}}]\le (\advGain-\penaltySlash\,n(n-1))\,Q}$,
which is nonpositive for~$\advGain\le \penaltySlash\,n(n-1)$.

Therefore, if the adversary's gain from a successful attack is less than~$\penaltySlash\,n(n-1)$, then her expected payoff is nonpositive for any bribe vector at any mixed-strategy equilibrium of the nodes.
So her optimal strategy is to not bribe any node, yielding zero expected payoff.

\section{Simultaneous Alerting Game - \alert Dominance for Low Bribes}
\label{app:alert-dominance}

\begin{restate}{Claim~\ref{clm:alert-dominant}}
\lemmaAlertDominance
\end{restate}

\begin{proof}
Consider a node~$i$ and let~$y\in\{0,\ldots,n-1\}$ be the number of other alerters.
We consider two cases based on the value of~$y$.
\begin{enumerate}
  \item \textbf{$y=0$.} If nobody else alerts, then~$
u_i(\noalert,\beta_i,\alerters)=\beta_i$ and ~$u_i(\alert,\beta_i,\alerters)=\penaltySlash(n-1)$.
Since $n\ge 2$ and $\beta_i\leq\penaltySlash$, we have $\penaltySlash(n-1)\geq \beta_i$, so $\alert$ is better, and is strictly better when $\beta_i<\penaltySlash$.

\item \textbf{$y\ge 1$.} If at least one other node alerts, then~${u_i(\noalert,\beta_i,\alerters)=\beta_i-\penaltySlash<0}$ and ${u_i(\alert,\beta_i,\alerters)=\penaltySlash\frac{n-1-y}{y+1}\ge 0}$.
Thus $\alert$ is better.
\end{enumerate}

Since $\alert$ yields a higher utility in every case, it dominates $\noalert$ for all players, and is strictly dominant when $\beta_i<\penaltySlash$.
\end{proof}

\begin{restate}{Claim~\ref{cor:unprofitable-bribery}}
\corUnprofitableBribery
\end{restate}

\begin{proof}
Assume the adversary's gain from a successful attack is~$\advGain < \penaltySlash \, n$.
If she offers bribes~${\beta_i > \penaltySlash}$ to all nodes, then her expected utility is either~${\advGain - \sum_{i=1}^{n} \beta_i <0}$ or~${- \sum_{i \in \mathcal{N} \setminus \alerters} \beta_i <0}$.
In both cases, bribery is unprofitable.
Therefore, her total bribery is smaller than~$\penaltySlash \, n$, and there exists at least one node~$i$ whose bribe satisfies~$\beta_i \leq \penaltySlash$.
By Claim~\ref{clm:alert-dominant} node~$i$'s dominant strategy is to choose $\alert$.
In that case, the attack fails and the adversary's expected utility is~$- \sum_{i \in \mathcal{N} \setminus \alerters} \beta_i$.
Therefore, the optimal strategy for the adversary is to not bribe any nodes, i.e., to choose for all nodes~$i$, $\beta_i=0$.
We showed that in both cases, bribery is unprofitable and the optimal strategy for the adversary is to not bribe any node.
\end{proof}

\section{Simultanious Alerting Game - Mixed Strategy Equilibria}
\label{app:mixed-strategy-equilibria}

\begin{restate}{Lemma~\ref{lem:bribe-lower-bound}}
\lemmaLowerBoundBribery
\begin{equation}
\label{eq:lower-bound-bribes2}
\sum_{i=1}^n \beta_i\, q_i
\;\ge\;
\penaltySlash\, n(n-1)\, Q.
\end{equation}
\end{restate}

\begin{proof}
Consider a node~$i$ and let~$Y_{-i}$ be the number of other alerters (whether or not~$i$ alerts). 
Denote by~${Q_{-i}:=\Pr(Y_{-i}=0)=\prod_{j\ne i} q_j}$ the probability that other nodes do not alert.
If~$i$ alerts while~$Y_{-i}=y$, its payoff is~${\penaltySlash\,\frac{n-(y+1)}{y+1}=\penaltySlash(\frac{n}{1+y}-1)}$,
so
\begin{equation}
\label{eq:alert-exp}
\mathbb{E}[\rho_i(a_i = \alert,\beta_i,\alerters)]
=\penaltySlash(n\,\mathbb{E}\!\Big[\frac{1}{1+Y_{-i}}\Big]-1).
\nonumber
\end{equation}
We denote this expected payoff by $U_i^{\alert}$.
If $i$ chooses $\noalert$, it always receives $\beta_i$ and pays the penalty if and only if someone else alerts:
\begin{equation}
\label{eq:noalert-exp}
\begin{aligned}
\mathbb{E}[\rho_i(a_i = \noalert,\beta_i,\alerters)]
&= \beta_i - \penaltySlash\,\Pr(Y_{-i}\ge1) \\
&= \beta_i + \penaltySlash\,Q_{-i} - \penaltySlash.
\end{aligned}
\nonumber
\end{equation}
We denote this expected payoff by $U_i^{\noalert}(\beta_i)$.

Notice that~$U_i^{\alert}$ is independent of~$\beta_i$ while~$U_i^{\noalert}(\beta_i)$ is strictly increasing in~$\beta_i$.
In equilibrium, for each node~$i$, its utility from alerting and not alerting must be equal. 
I.e., $U_i^{\alert} = U_i^{\noalert}(\beta_i)$.
Otherwise, a node would deviate to the action with the higher expected utility.
Thus, in equilibrium, it holds that 
\[
\penaltySlash(n\,\mathbb{E}\!\Big[\frac{1}{1+Y_{-i}}\Big]-1) = \beta_i + \penaltySlash\,Q_{-i} - \penaltySlash.
\]

Equivalently, 
\begin{equation}
\beta_i = \penaltySlash\!\left(n\,\mathbb{E}\!\Big[\frac{1}{1+Y_{-i}}\Big]-Q_{-i}\right).
\nonumber
\end{equation}

Since $\tfrac{1}{1+y}\ge 1$ when $y=0$ and $\tfrac{1}{1+y}\ge0$ otherwise, we have
\begin{equation}
  \begin{aligned}
\mathbb{E}\!\left[\frac{1}{1+Y_{-i}}\right]
&=\sum_{y=0}^{n-1}\frac{1}{1+y}\,\Pr(Y_{-i}=y)\\
&\;\ge\; 1\cdot\Pr(Y_{-i}=0)+0\cdot\Pr(Y_{-i}\ge 1)\\
&=\Pr(Y_{-i}=0)=Q_{-i},
\end{aligned}
\nonumber
\end{equation}

and thus
\begin{equation}
\label{eq:beta-simple-lb}
\beta_i \;\ge\; \penaltySlash\,(n-1)\,Q_{-i}
\end{equation}
Multiplying Equation~\ref{eq:beta-simple-lb} by~$q_i$ and using~${q_i Q_{-i}=\prod_{j=1}^n q_j=Q}$ yields~$\beta_i q_i \ge \penaltySlash\,(n-1)\,Q$ for all such $i$.
Summing over $i=1,\ldots,n$, we get
\[
\sum_{i=1}^n \beta_i q_i \;\ge\; \penaltySlash\,(n-1)\,\sum_{i=1}^n Q
\;=\; \penaltySlash\,n(n-1)\,Q,
\]
concluding the proof.
\end{proof}

\begin{algorithm}[t]
\caption{$\penaltySlash$-Trusted-Hardware-Based Alerting: Contract Side ($\AlertsCon$)}
\label{alg:contract}
\KwIn{$\CommitBlocks=\lceil \CommitInterval/\blockInterval\rceil$, $\RevealBlocks=\lceil \RevealInterval/\blockInterval\rceil$ with $\CommitBlocks<\RevealBlocks$; penalty $\penaltySlash$.}
\SetAlgoNoLine
\DontPrintSemicolon

\nonl \textbf{Round start:}\;
Record current height $H_0$; set reveal barrier $H^\star \leftarrow H_0+\CommitBlocks$\;
Define commit window $[H_0,H^\star)$ and reveal window $[H^\star,H^\star+\RevealBlocks)$\;
Initialize arrays $\textit{commit}[], \textit{reveal}[]$ to $\bot$\;

\nonl \textbf{On \emph{commit tx} from node $i$ carrying $\committment_i$:}\;
\eIf{$\text{height} \notin [H_0,H^\star)$ \textbf{or} node $i$ already committed}{
    reject\;
}
{
Store $\textit{commit}[i]\leftarrow (\committment_i,\text{height})$\;
}
\nonl \textbf{On \emph{reveal tx} from node $i$ carrying $(a,\Pi_\alert,r,A)$:}\;
$\committment_i, \text{height} \leftarrow \textit{commit}[i]$\;
\If{$\text{height} \notin [H^\star,H^\star+\RevealBlocks)$ \textbf{or} $\textit{commit}[i]=\bot$ \textbf{or} attestation $A$ invalid or not binding $(\committment_i,\CommitBlocks)$ \textbf{or} $\committment_i \ne \text{Hash}(m,r,\CommitBlocks)$ with $m=(a \mathbin\Vert \text{Hash}(\Pi_\alert))$}{
    reject\;
}

\If{$a=\alert$}{
    \If{$\text{Hash}(\Pi_\alert)\ne h_\alert$ \textbf{or} $\Pi_\alert$ invalid}{
        reject\;
    }
}
Store $\textit{reveal}[i]\leftarrow (a,\Pi_\alert,r,A)$\;

\nonl \textbf{Settlement at end of reveal window:}\;
Let $\alerters \leftarrow \{i : \textit{reveal}[i]\neq \bot \text{ and has } a=\alert\}$\;
Let $N_0 \leftarrow \{i : i\notin \alerters\}$ \tcp*{includes no-commit or no-reveal}
Let $N_1 \leftarrow \{i : \textit{reveal}[i]\neq \bot \text{ and has } a=\noalert\}$\;
\eIf{$\alerters\neq \emptyset$}{
    \ForEach{$j \in N_0$}{ slash $j$ by $\penaltySlash$ }
    \ForEach{$i \in \alerters$}{ pay $i$ the amount $\frac{\penaltySlash \cdot |N_0|}{|\alerters|}$ }
          Invoke \textbf{Alert()}
}{
\If{$|N_0 \setminus N_1| > 0$ \tcp*[h]{some nodes committed but did not reveal}}{
    Invoke \textbf{Alert()}
}

}
\end{algorithm}

\begin{algorithm}[t]
\caption{Per-Node~$i$ Procedure}
\label{alg:node}
\KwIn{Reveal barrier $H^\star$ from \AlertsCon; $\CommitBlocks$; reveal window $[H^\star,H^\star+\RevealBlocks)$; chain headers; PoP parameters.}
\BlankLine
\DontPrintSemicolon
\SetAlgoNoLine
\nonl \textbf{Commit phase ($[H_0,H^\star)$):}\;
Choose action $a_i \in \{\alert,\noalert\}$\;
\eIf{$a_i=\alert$}{
    form alert proof $\Pi_\alert$ and $h_\alert\leftarrow \text{Hash}(\Pi_\alert)$\;
}{
    set $\Pi_\alert \leftarrow \bot$ and $h_\alert\leftarrow \text{Hash}(\bot)$\;
}

$\committment_i \leftarrow \tee{\text{TEE\_}\Seal{m = (a_i \mathbin\Vert h_\alert),\CommitBlocks}}$ \tcp*{outputs commitment~$\committment_i$}
Send \emph{commit tx} carrying $\committment_i$ to \AlertsCon\;

\nonl \textbf{PoP anchoring:}\;
Node invokes TEE to update internal checkpoint to $C_B$.

\nonl \textbf{Reveal phase ($[H^\star,H^\star+\RevealBlocks)$):}\;
Obtain a PoP $\Pi_{\textit{PoP}}=(C_B,B_1,\dots,B_n)$ s.t. $\committment_i$ is included in some $B_j$ and $n-j \ge \CommitBlocks$ \;
$(m,r),A \leftarrow \tee{\text{TEE\_}\UnsealAfter{\committment_i,\CommitBlocks,\Pi_{\textit{PoP}}}}$\;
\If{TEE returns failure}{ abort (penalized as non-revealer)\;}
Send \emph{reveal tx} carrying $(a_i,\Pi_\alert,r,A)$\;
\end{algorithm}

\section{TEE-based Protocol - Algorithms}
\label{app:tee_based_protocol}
In Algorithm~\ref{alg:contract}, we present the smart contract~\AlertsCon that implements the TEE-based alerting protocol described in Section~\ref{sec:hardware_based_protocol}.
And in Algorithm~\ref{alg:node}, we present the node-side algorithm that implements the TEE-based alerting protocol described in Section~\ref{sec:hardware_based_protocol}.

\section{TEE-based protocol - Alert Capability}
\label{app:tee_alert_capability}
\begin{restate}{Lemma~\ref{lem:timing-feasibility}}
  \lemmaTEETimingFeasibility
\end{restate}

\begin{proof}
  We show that all nodes~$i$ can have a commit transaction included in the chain during the commit window~$[H_0, H^\star)$ and
   can later obtain a valid PoP for its commitment and have a reveal transaction included during the reveal window~$[H^\star, H^\star + \RevealBlocks)$.

We prove separately for the commit and reveal phases (see Figure~\ref{fig:commit_reveal}).
The commit window lasts for $\CommitBlocks$ blocks: it starts at height $H_0$ and ends just before
height $H^\star = H_0 + \CommitBlocks$.
In time, the commit window is~${[t_0, t_0 + \CommitInterval)}$, where
$t_0$ is the time corresponding to height $H_0$.
Recall, $\CommitInterval \ge \NetworkWrite$.

Consider any node $i$.
Local computation to form a commitment~$\committment_i$ is negligible compared to~$\NetworkWrite$.
If node $i$ sends her transaction at any time~$ {t \in [t_0,\, t_0 + \CommitInterval - \NetworkWrite]}$,
then by the synchrony assumption the transaction is included in some block by time~${t + \NetworkWrite \le t_0 + \CommitInterval}$, i.e.,
before the end of the commit window.

For the reveal phase, consider node $i$ that committed successfully at some height $h_c$ with~$H_0 \le h_c < H^\star = H_0 + \CommitBlocks.$
The TEE requires a PoP showing that at least $\CommitBlocks$ blocks have been added after $\committment_i$.
Therefore, node $i$ must wait until the chain reaches a height of~$h_c + \CommitBlocks$.

The reveal window is between blocks
\[
  [H^\star, H^\star + \RevealBlocks)
  = [H_0 + \CommitBlocks, H_0 + \CommitBlocks + \RevealBlocks).
\]
From $h_c < H_0 + \CommitBlocks$ we have
\[
  h_c + \CommitBlocks \;<\; H_0 + 2\CommitBlocks.
\]
Since $\CommitBlocks < \RevealBlocks$, we get that
\[
  H_0 + \CommitBlocks + \RevealBlocks
  > H_0 + 2\CommitBlocks.
\]
Hence~$h_c + \CommitBlocks
  < H_0 + \CommitBlocks + \RevealBlocks $,
so the height $h_c + \CommitBlocks$ is always reached strictly before
the end of the reveal window.

Once the chain has reached height $h_c + \CommitBlocks$, 
 PoP guarantees that node $i$ can get a valid PoP
$\Pi_{\mathit{PoP}}$ in negligible time and call
\[
  (m,r),A \gets \mathsf{Unseal\mbox{-}After}(\committment_i,\CommitBlocks,\Pi_{\mathit{PoP}}).
\]
The node then sends a reveal transaction containing~$(a,\Pi_\alert,h_\alert,r,A)$.

In terms of time, the reveal window has duration~$\RevealInterval$, and
by assumption $\RevealInterval \ge \NetworkWrite$.
Therefore, any reveal transaction that a node sent at least~$\NetworkWrite$
steps before the end of the reveal window is included on-chain before the end of the reveal phase, by the same synchrony argument as in the commitment phase.

Therefore, a node can always raise an alert before the end of the alerting period, so the protocol achieves Alert Capability.
\end{proof}

\section{TEE-based protocol achieves Reward Distribution}
\label{app:tee_reward_dist_proof}
\begin{restate}{Lemma~\ref{lem:payoff-equivalence}}
    \lemmaTEEPayoffEquivalence
\end{restate}

\begin{proof}
Consider an execution of the TEE-based alerting protocol and let~$\alerters$ be the set of nodes that revealed action~$\alert$ with valid proofs.
We analyze the payoff of a node~$i$ based on its induced action~$a_i$.
If~$i \in \alerters$, then by the protocol, node~$i$ receives a reward of~$\penaltySlash \frac{ (n - |\alerters|)}{|\alerters|}$, matching the payoff in Equation~\ref{eq:sim_payoff}.
If~$i \notin \alerters$, then node~$i$ either revealed action~$\noalert$, or deviated from the protocol.
In both cases, by the protocol, node~$i$ is penalized with~$\penaltySlash$ if~$\alerters \neq \emptyset$, matching the payoff in Equation~\ref{eq:sim_payoff}.
If~$\alerters = \emptyset$, then no nodes are penalized or rewarded, matching the payoff in Equation~\ref{eq:sim_payoff}.
For the adversary, if~$\alerters = \emptyset$, then by the protocol, then no node raised an alert (either by revealing~$\noalert$ or deviating), so the adversary gains~$\advGain - \sum_{i \in \mathcal{N}} \beta_i$, matching the payoff in Equation~\ref{eq:sim_payoff_adv}.
Otherwise,~$\alerters \neq \emptyset$, then at least one node raised an alert, so the adversary does not gain anything, but still pays the bribes, resulting in a payoff of~$- \sum_{i \in \mathcal{N}} \beta_i$, matching the payoff in Equation~\ref{eq:sim_payoff_adv}.
\end{proof}

\section{Deterministic Node Sequencing}
\label{app:node_sequencing}
The protocol uses a straightforward lexicographic next-permutation algorithm~\cite{Worlton1968TheAO} to sequence the nodes.
Each node and the smart contract maintain the permutation state representing the current order of nodes. 
The sequence begins with the canonical permutation~$\pi_1 = (1,2,\ldots,n)$, corresponding to the lexicographically smallest ordering of all~$n$ nodes. 
Both nodes and the contract implement the same algorithm: given the current permutation~$\pi_t$, the function~$\nextperm{\cdot}$ computes~${\pi_{t+1}= \nextperm{\pi_t}}$ by identifying the rightmost pair~$(i,j)$ such that~$\pi_t(i) < \pi_t(j)$, swapping these elements, and reversing the suffix following index $i$. 
This operation deterministically produces the next permutation in lexicographic order and requires no communication among nodes or with the smart contract.
This ensures that all nodes and the contract have a consistent view of the current permutation.

Before each alerting period~$t$, all nodes and the contract independently advance to the next permutation using the function~$\pi_{t} = \nextperm{\pi_{t-1}}$ and nodes alert based on this ordering.
When a node alerts, the contract verifies that the alerting node's identifier corresponds to the node assigned to the current slot in the permutation.
To do so, denote by~$t_0$ the starting time of the alerting period, and by~$t_\alert$ the time when the alert is raised.
the current slot is given by $s = \left\lceil \frac{t_\alert-t_0}{\SlotLen} \right\rceil$, so the contract verifies that the alerting node is~$\pi_t(s)$.

\section{Random Node Sequencing}
\label{sec:random_node_sequencing}
In this protocol, we focus on randomly sequencing the nodes as it offers improved protection against denial of service attacks.
The transmission sequence for the nodes is private to the nodes only, but our protocol requires a way to present the sequence~$\pi$ for a given alert verifiably on-chain. 

The sequence is generated from a private \textit{sequencing seed}~$\sk$ input to a VRF, and a function~$\seq$ that uses the VRF's output. 
The VRF public key~$\pk$ is stored in the smart contract \AlertsCon, and each node holds the secret key~$\sk$.
The public key~$\pk$ is stored in \AlertsCon, and each node holds the secret key~$\sk$.
Here,~$\seq_{\sk}()$ specifies the sequencing of the~$n$ nodes for a given alerting period.
That is,~${\seq_{\sk}() \in S_n}$ is a pseudorandom permutation on the~$n$ nodes indices, for example generated using a Fisher-Yates shuffle \cite{knuth1973taocp}.

We denote by~${\sf proof} \leftarrow \seq_{\sk}().{\sf proof}$ the proof of correctness generated for output~$\pi_{} \leftarrow \seq_{\sk}()$. 
Furthermore, the algorithm~$\seq_{\sk}().{\sf verify}(\pi,{\sf proof})$ returns either~$\{{\tt true}$ or~${\tt false}\}$ verifying the correctness of ${\sf proof}$ for a claimed output $\pi$ of $\seq_{\sk}()$. 
The sequencing seed~$\sk$ is shared among all nodes, each node independently computes the VRF output to determine its slot in the next publishing period.

\subsubsection{Raising an Alert}
\label{sec:random_round_begin}
During its assigned slot, the node can raise an alert if it suspects one is neededt.
To do so, the node sends an alert transaction to \AlertsCon. 
The alert transaction includes the permutation~$\pi$ for the current alerting period, and a proof that the node is assigned to the current slot~$s$ in the permutation~$\pi$.
We denote the alert transaction by~$\txalert = \{\pi, {\sf proof} \}$.

Before checking the alert validity, the smart contract $\AlertsCon$ verifies that the alerting node is assigned to the current slot~$s$ in the permutation~$\pi$ using the proof~${\sf proof}$ and the function~$\seq_{\sk}().{\sf verify}(\pi,{\sf proof})$. 
If so the alerting period ends, and the contract proceeds to slashing and rewarding as follows.

Let~$i$ be the node that raised the alert, and let~$s$ be the slot assigned to node~$i$ in the permutation~$\pi$ i.e., $\pi(s) = i$.
The contract~\AlertsCon slashes all nodes~$j$ that had the opportunity to alert but did not do so, i.e., all nodes~$j$ that~$s_j < s$.
And rewards the alerting node with~${\sum_{j=1}^{s-1}\penaltySlash = \penaltySlash (s-1)}$ for raising the alert.

\section{Sequential Protocol - Adversary's Optimal Strategy}
\label{app:sequential_adversary_optimal}

\begin{restate}{Corollary~\ref{cor:profitable-bribery}}
\corProfitableBribery
\end{restate}

\begin{proof}
  The proof follows from Proposition~\ref{claim:budget_threshold} and the fact that the adversary will not bribe any node not to alert if she cannot bribe all nodes to not alert.
  The adversary is rational and utility maximizing.
  By proposition~\ref{claim:budget_threshold}, if~$\advGain < \penaltySlash \frac{n(n-1)}{2}$, then at least one node will alert.
  In this case, in an execution where~$k$ is the first node to alert, the adversary's utility is~$-\sum_{i=1}^{k-1} \beta_i \leq 0$.
  Thus, to maximize her utility, the adversary's optimal strategy is to choose all bribes to be~$\beta_i = 0$ for all nodes~$i \in \mathcal{N}$.
\end{proof}

\section{Sequential Protocol - Adversary's Bribery Cost}
\label{app:sequential_bribery_cost}
\begin{restate}{Corollary~\ref{cor:sequential_bribery_cost}}
\corBriberyCost
\end{restate}

\begin{proof}
  By applying Lemma~\ref{claim:adversary_gain_threshold_helper} to all nodes, the adversary can bribe all nodes in slots $1, \dots, n$ to not alert if and only if her gain from a missing alert~$\advGain$ is larger than~$\sum_{i=1}^{n}\penaltySlash(i - 1) = \penaltySlash \frac{n(n - 1)}{2}$.
  Since bribing all nodes to not alert requires over~$\penaltySlash \frac{n(n-1)}{2}$, the attack cost grows quadratically with the number of nodes~$n$.
\end{proof}

\end{document}

%% file: my_macros.tex
\usepackage{lipsum}    
\newif\iflong

\SetKwIF{If}{ElseIf}{Else}{if}{}{else if}{else}{}%

\SetKwFor{ForEach}{for each}{}{}%
\SetKwFor{For}{for}{}{}%
\SetKwFor{While}{while}{}{}%

\newcommand{\seq}{{\sf seq}\xspace}
\newcommand{\sk}{\ensuremath{{\sf sk}}\xspace}
\newcommand{\pk}{\ensuremath{{\sf pk}}\xspace}

\newcommand{\tx}{\ensuremath{\textit{tx}}\xspace}
\newcommand{\AlertsCon}{\ensuremath{{\textbf{AlertingContract}}}\xspace}

\newcommand{\txalert}{\ensuremath{{\sf tx_{alert}}}\xspace}

\newcommand{\nextperm}[1]{\ensuremath{\textit{next\_perm}({#1})}\xspace}
\newcommand{\checkreport}[1]{\ensuremath{\textit{validate\_alert}({#1})}\xspace}

\newcommand{\penaltySlash}{{{\lambda}}\xspace}

\newcommand{\alertReward}[1]{{r({#1})}\xspace}

\newcommand{\CommitInterval}{\ensuremath{\Delta_{\sf commit}}\xspace}
\newcommand{\CommitBlocks}{\ensuremath{N_{\sf commit}}\xspace}
\newcommand{\RevealBlocks}{\ensuremath{N_{\sf reveal}}\xspace}
\newcommand{\RevealInterval}{\ensuremath{\Delta_{\sf reveal}}\xspace}

\newcommand{\NetworkWrite}{\ensuremath{\Delta_{\sf write}}\xspace}
\newcommand{\blockInterval}{\ensuremath{\Delta_{\sf block}}\xspace}
\newcommand{\advGain}{\ensuremath{G}\xspace}
\newcommand{\SlotLen}{\ensuremath{\Delta_{\sf slot}}\xspace}

\makeatletter
\renewcommand{\Function}[2]{%
  \csname ALG@cmd@\ALG@L @Function\endcsname{#1}{#2}%
  \def\jayden@currentfunction{#1}%
}
\newcommand{\funclabel}[1]{%
  \@bsphack
  \protected@write\@auxout{}{%
    \string\newlabel{#1}{{\jayden@currentfunction}{\thepage}}%
  }%
  \@esphack
}
\makeatother

\newcommand{\tr}{\ensuremath{\textbf{True}}\xspace}
\newcommand{\fa}{\ensuremath{\textbf{False}}\xspace}
\newcommand{\data}{\ensuremath{d}\xspace}

\newcommand{\alert}{\ensuremath{\mathsf{Alert}}\xspace}
\newcommand{\noalert}{\ensuremath{\mathsf{NoAlert}}\xspace}
\newcommand{\alerters}{\ensuremath{{F}}\xspace}
\newcommand{\operatorCost}{\ensuremath{{c}}\xspace}

\newcommand{\committment}{\ensuremath{{\gamma}}\xspace}

\newcommand{\Seal}[1]{\ensuremath{\textit{Seal}({#1})}\xspace}
\newcommand{\UnsealAfter}[1]{\ensuremath{\textit{UnsealAfter}({#1})}\xspace}

\newcommand{\tee}[1]{{\color{blue}{#1}}}

\newenvironment{restate}[1]{\begin{trivlist} \item {\bf #1 (restated)}. 
  \em} {\end{trivlist}}

\newcommand{\lemmaTEEPayoffEquivalence}{
  The TEE-based protocol achieves Reward Distribution.
  }

\newcommand{\lemmaAlertDominance}{
In the simultaneous alerting game, if the adversary's offered bribe for a node~$i$ is~$\beta_i \leq \penaltySlash$, then the action~$\alert$ weakly dominates $\noalert$ for node~$i$. 
And $\alert$ is strictly dominant if $\beta_i<\penaltySlash$.
}

\newcommand{\corUnprofitableBribery}{
If the adversary's gain from a successful attack is less than the total penalty incurred by all nodes~$\advGain < \penaltySlash \, n$, then bribery is unprofitable, and the adversary's optimal strategy is to not bribe any node.
}

\newcommand{\lemmaLowerBoundBribery}{
For any bribe vector~$(\beta_1,\ldots,\beta_n)$,
let each node $i$ alert with probability $p_i\in[0,1]$ and not alert with probability $q_i=1-p_i$, and assume those probabilities are independent.
Denote the probability that no node alerts by  
\[
Q \;:=\; \Pr(\alerters = \emptyset) \;=\; \prod_{i=1}^n q_i .
\]

Then, in any mixed strategy equilibria, the expected total bribe paid by the adversary satisfies
}

\newcommand{\lemmaTEETimingFeasibility}{
The TEE-based protocol achieves Alert Capability.
  }

\newcommand{\corProfitableBribery}{
  If the adversary's gain from a missing alert~$\advGain$ is smaller than~$\penaltySlash\frac{n(n-1)}{2}$, then the adversary's optimal strategy is to not bribe any node.
}

\newcommand{\corBriberyCost}{
  The adversary bribes all nodes to not alert if and only if her gain from a missing alert~$\advGain$ is larger than~$\penaltySlash \frac{n(n-1)}{2}$.
  And the required bribe grows quadratically with the number of nodes.
}

%% file: in_progress.tex
\section{Introduction}

\label{sec:intro}


Blockchains are replicated state machines that maintain a public, append-only ledger.
\emph{Smart contracts} are stateful programs deployed on-chain that collectively secure over $\$170\text{B}$ in total value locked~\cite{coindesk-defi-tvl-2025}.
These contracts increasingly make high-stakes decisions that rely on timely \emph{alerts} about external events, including changes in financial asset prices~\cite{chainlink-feeds}, outcomes of votes~\cite{snapshot-safesnap}, and results of off-chain computations~\cite{eth-optimistic-rollups, arbitrum-sequencer}.
Those alerts are often required for triggering on-chain actions.
Delays, censorship, or errors in these alerts have already produced \emph{multi-billion-dollar} losses~\cite{chainalysis-hacking-2023,reuters-crypto-hacks-2024}.
Yet, in existing designs, suppressing alerts is cheap. 
An adversary can often bribe participants into silence at a cost that grows only linearly in the number of alerters, making denial-of-service attacks economically feasible at scale.

Most previous work~(\S\ref{sec:related_work}) examined mechanisms for incentivising alerts, but did not model the economics of suppressing them.
Existing systems such as optimistic rollups~\cite{eth-optimistic-rollups,arbitrum-sequencer,optimism-batcher}, oracle networks~\cite{chainlink-feeds}, and optimistic-oracle designs~\cite{uma-oo-overview,tellor-disputes} rely on participants to raise alerts, but do not analyze the cost of bribing those participants into silence.
Chen et al.~\cite{Prrr} study incentivizing timely on-chain alerts but also do not consider a bribing adversary.
Meanwhile, other works show that rational adversaries can use bribes to induce protocol deviations in blockchain systems~\cite{bribes1, bribes2, lloyd2023emergent, vote-buying, bribes5}, motivating our model of a bribing adversary.

In this work, we study how to design protocols that incentivize participants to raise timely alerts despite bribery. 
We design and analyze protocols that economically guarantee that if a subset of the nodes fail to alert, others are incentivized to alert instead.

We formalize the \emph{alerting problem}~(\S\ref{sec:model}) as a cryptoeconomic game involving a set of $n$ rational nodes, a smart contract, and a rational adversary.
The smart contract is deployed by a protocol operator who provisions it with a constant budget.
Nodes deposit tokens with the contract in the form of collateral, of which they lose a constant share if they fail to alert when needed.
The adversary, who gains a profit from a required on-chain alert not occurring, can offer bribes to nodes to deviate from the protocol and not alert. 
Our central design goal is to maximize \emph{bribery resistance}: the amount an adversary must pay in bribes to prevent any alert from being published on-chain.

We begin by considering a straightforward protocol~(\S\ref{sec:publishing_protocol_example}) where alerters split a fixed reward and non-alerters lose a constant penalty (deposit).
We show it achieves~$\Theta(n)$ bribery resistance. 
If, on the other hand,
 we use penalties of non-alerters to reward alerters, the total per-node reward can grow linearly in $n$. 
 As there are $n$ nodes, this approach promises \emph{quadratic}, i.e.,~$\Omega(n^2)$, bribery resistance. 

We in fact offer a very simple proof of an $O(n^2)$ upper bound~(\S\ref{sec:max_bribery_resistance})~on bribery resistance for any protocol with constant operator budget and constant penalties for deviating.

We present the \emph{simultaneous alerting game}~(\S\ref{sec:base_game}), a Stackelberg game that asymptotically achieves this quadratic upper bound.
In this game, all nodes act simultaneously after observing the adversary's bribe offers.
If the bribe offered to a node is higher than the maximum reward she can get, then her dominant strategy is not to alert.
And if the bribe is lower than the penalty for not alerting, her dominant strategy is to alert.
Surprisingly, we prove that in the range of bribe amounts between these two extremes, no symmetric pure equilibrium exists.
We thus turn our attention to mixed-strategy equilibria. We show that in any such equilibrium,
if the adversary's gain from a successful attack is less than the amount required to bribe all nodes with the maximum possible reward each---a total quadratic in the number of nodes---then her expected utility from the bribe is non-positive.
Thus, the simultaneous alerting game achieves~$\Theta(n^2)$ bribery resistance.

We present two protocols~(\S\ref{sec:realizing_base}) that realize the simultaneous game under different assumptions.
The \emph{Lockstep alerting protocol} achieves constant-time finality with a short constant alerting window. 
It relies on a strict synchrony assumption, namely that nodes' transactions are added to the blockchain after an exact constant time.
It publishes no transactions when no alert is needed and up to~$n$ transactions when alerting; it achieves~$\Theta(n^2)$ bribery resistance. 

The strict synchrony assumption, however, does not always hold~\cite{tx-done-yet,finalization_time_blockchains}. 
To relax it, we employ a commit-reveal scheme.
A purely cryptographic approach would enable nodes to hide their actions.
But a node could \textit{collude} with an adversary to reveal and prove its commitment prematurely. Such early revelation could be forwarded to other nodes in support of a bribery strategy that deviates from the simultaneous game. 
We therefore present a second protocol that makes use of \emph{trusted hardware}.
Its alerting window is divided into a \emph{commit} phase and a \emph{reveal} phase.
Nodes commit during the commit phase using the trusted hardware, which uses a proof of publication (PoP) primitive~\cite{Ekiden} to ensure they cannot reveal their action before a predefined time period elapses.
They then reveal their action during the reveal phase.
This protocol requires~$\Theta(n)$ transactions, i.e.,~$\Theta(n)$ storage on the blockchain, and also achieves~$\Theta(n^2)$ bribery resistance. 

To reduce the on-chain overhead and avoid both the strict synchrony and trusted hardware assumptions, we consider the alerting protocol proposed in the Chainlink v2 whitepaper~\cite{chainlink2021whitepaper}. 
The authors sketch a sequential mechanism, where nodes take turns raising alerts.
We formalize this \emph{sequential alerting protocol}~(\S\ref{sec:sequential_alerting_protocol}).
It sequences the nodes in~$n$ distinct time slots, removing the need to enforce simultaneity.
Each node can alert only during its assigned time slot, after observing the actions of all nodes in earlier slots.
Once a node alerts, the protocol ends.
This calls for a different reward structure: If a node alerts, it receives as a reward the cumulative penalties of all nodes assigned to earlier slots that did not alert. 
We analyze the induced game by backward induction and show that it admits a \emph{Subgame-Perfect Nash Equilibrium (SPNE)} in which, to suppress all alerts, the adversary must spend a budget quadratic in the number of nodes.
The sequential protocol requires no transactions when an alert is not needed and a single transaction otherwise, therefore constant storage on-chain, and achieves~$\Theta(n^2)$ bribery resistance, albeit only half of the quadratic budget enforced by the simultaneous alerting game. 

Our analysis of the sequential protocol additionally reveals that if an adversary can gain a profit by delaying an alert for~$m<n$ time slots rather than suppressing it entirely, then she needs to spend a budget quadratic in the number of slots~$\Theta(m^2)$.  

A qualitative comparison of these three protocols, highlighting their assumptions, alert latency, transaction requirements, and bribery resistance, is summarized in Table~\ref{tab:protocol-comparison-simple}.

In summary, our main contributions are: (1) formalization of the alerting problem and bribery resistance;
(2) proof that the maximum bribery resistance a protocol can achieve is~$\Theta(n^2)$;
(3) design of the simultaneous alerting game that asymptotically achieves this bound; and 
(4) three protocols that operate under different assumptions and tradeoffs, all achieving asymptotically-optimal bribery resistance.


\begin{table}[t]
\centering
\renewcommand{\arraystretch}{1.3}
\setlength{\tabcolsep}{4pt}
\small   

\begin{tabular}{|p{0.18\columnwidth}|p{0.15\columnwidth}|p{0.12\columnwidth}|p{0.1\columnwidth}|p{0.1\columnwidth}|p{0.15\columnwidth}|}
\hline
\textbf{Mechanism} & \textbf{Assump-tions} & \textbf{Alert latency} & \textbf{Tx for $\noalert$} & \textbf{Tx for $\alert$} & \textbf{Bribe cost} \\
\hline
\textbf{Lockstep} & Constant time to write txs on-chain & $\Theta(1)$ & 0 & $\Theta(n)$ & $\penaltySlash  {n(n-1)}$ \\
\hline
\textbf{Trusted-Hardware} & Trusted hardware & $\Theta(1)$ & $\Theta(n)$ & $\Theta(n)$ & $\penaltySlash  n(n-1)$ \\
\hline
\textbf{Sequential} & - & $\Theta(n)$ & 0 & 1 & $\penaltySlash \frac{n(n-1)}{2}$ \\
\hline
\end{tabular}
\vspace{3em}
\caption{Qualitative comparison of the alerting protocols. $n$~denotes the number of nodes and $\penaltySlash$ denotes the penalty.}
\setlength{\belowcaptionskip}{-10pt}

\label{tab:protocol-comparison-simple}
\end{table}

\section{Related Work}
\label{sec:related_work}
\input{related_work.tex}

\section{Model}
\label{sec:model}
To reason about the problem at hand, we define the \emph{alerting model}~(\S\ref{sec:participants_communication}) and our goal~(\S\ref{sec:goals}).

\subsection{Participants and Communication}
\label{sec:participants_communication}
The system comprises a set~$\mathcal{N}$ of~$n$ \emph{nodes}, a \emph{protocol operator}, a \emph{blockchain}, and an \emph{adversary}. 
The blockchain is a trusted party that maintains a state comprising a token balance for each node and supports the execution of arbitrary stateful programs called~\emph{smart contracts}.
The smart contract can access the state of the blockchain, specifically which block it is on.
The protocol operator implements the \emph{alerting protocol} as a smart contract, \AlertsCon, and publishes it on the blockchain.

Initially, before the execution of the alerting protocol, each node has some token balance.
Nodes interact with the smart contract by issuing \emph{transactions} that update its state.
To participate in the protocol, each node locks a certain amount of its tokens in \AlertsCon as collateral by issuing a \emph{stake} transaction. 
The contract can \emph{slash} (discard) the node's stake with a penalty~$\penaltySlash > 0$ that the node loses if she fails to follow the protocol.
It can also reward nodes that follow the protocol with tokens that the contract adds to their balance.

Time progresses in steps.
In each step, participants receive messages from previous steps, execute local computations, update their state, and send messages.
All participants can observe the current state of the blockchain. 
Communication between nodes and the blockchain is synchronous:
Nodes can issue transactions that are added to the blockchain within a bounded time~$\NetworkWrite$, which is known to all participants~\cite{bitcoin-backbone, prism}.
We will examine two communication models:
One where transactions are appended after exactly~$\NetworkWrite$ steps, and another where transactions are appended within at most~$\NetworkWrite$ steps.
The first model, the \emph{lockstep model}, captures a strong synchrony assumption that does not typically hold~\cite{tx-done-yet,finalization_time_blockchains}, we use it as a stepping stone to design our protocols,
while the second, the \emph{bounded delay model}, captures a standard synchrony assumption~\cite{dist-sys}.

Nodes continuously monitor both the blockchain and events outside the chain.
The protocol operator specifies a set of conditions that define when an alert is needed, e.g., a \emph{prediction market} being resolved with an incorrect outcome~\cite{uma-oo-overview} or an \emph{oracle} price update that is missing~\cite{data-streams-cl}.
When an event meets these conditions, nodes have the chance to alert during a predefined \emph{alerting period} by sending an \emph{alert transaction} to \AlertsCon.

An \emph{alerting protocol} is a protocol that an operator implements in a smart contract, \AlertsCon.
The protocol incentivizes a set of participants, who stake tokens with the contract, to monitor the blockchain and off-chain events and to alert when events meet a specified set of conditions.

We consider alerts that are \emph{adjudicable}, i.e., there exists some mechanism (e.g., on-chain  logic~\cite{ethereum-rewards-penalties} or an external committee~\cite{chainlink2023decentralized-model}) that can verify the correctness of an alert, and we model this mechanism as a function~$\checkreport{\data}$ that the contract operator publishes with the smart contract.    
The function~$\checkreport{\data}$ takes as input data about off-chain events and returns \tr if the conditions for alerting are satisfied, and thus nodes should alert, and \fa otherwise.

Since the smart contract cannot directly access off-chain data, a node that detects an event that meets the alerting conditions must submit a proof~$\data$ along with the alert transaction such that~$\checkreport{\data} = \tr$.
 \AlertsCon verifies the proof upon receiving the transaction.
This proof may be the data itself or a cryptographic proof (e.g., a zero-knowledge proof), depending on the usecase.
If the contract verifies the proof successfully, the alert is considered \emph{valid}. 
In this case, the contract may invoke additional functionality that the operator defines to respond to the alert, and may slash or reward nodes based on their actions in the alerting protocol.
If the alerting period ends without a valid alert, the contract takes no additional actions.
The response to an alert is application-specific (e.g., shutting down a service or triggering wider security checks) and is abstracted away in this work by having the contract~\AlertsCon invoke an \textbf{Alert} function.

We study protocols where the amount of stake a node locks is independent of the number of nodes~$n$, but above some minimum threshold and with a constant amount of tokens~$\operatorCost \geq 0$ which the protocol operator funds the contract with before the protocol starts called the \emph{operator cost} (as in, e.g., Ethereum~\cite{ethereum-staking} and UMA~\cite{umaDVM}). 
Since each node's stake does not scale with~$n$, any slashing is upper bounded by that stake, i.e., a constant amount.
Hence, we focus on protocols that use a constant per-node penalty for deviating.
\begin{definition}[($\operatorCost, \penaltySlash$)-Alerting Protocol]
  A \emph{($\operatorCost, \penaltySlash$)-alerting protocol} is an alerting protocol where the operator funds the contract with a constant initial budget~$\operatorCost \geq 0$, and the contract penalizes nodes that deviate from the protocol with a constant penalty~$\penaltySlash > 0$ each.
\end{definition}

We consider a probabilistic polynomial time (PPT) adversary, with standard cryptographic assumptions.
She can observe the blockchain and offer \emph{bribes} to nodes for deviating from the protocol. 
In practice, the adversary can realize such bribes either on-chain (e.g., via a smart contract) or off-chain (e.g., via a different chain) and can condition the bribe on the node's on-chain actions, or the lack thereof.
The adversary offers each node~$i$ a bribe~$\beta_i \ge 0$ that pays out only if the node chooses not to alert.
Each node's bribe is private information known only to the node and the adversary.

If the adversary successfully prevents all alerts, she gains a positive profit~$\advGain$.
Nodes are \emph{rational}, aiming to maximize their profit.
They follow the protocol, unless deviating from it results in a higher profit.

\subsection{Alerting Protocols and Bribery Resistance}
\label{sec:goals}
We are interested in designing alerting protocols that impose a high cost on an adversary trying to suppress alerts.
\begin{definition}[$\sigma$-Bribery Resistance]
An alerting protocol achieves \emph{$\sigma$-bribery resistance} if it guarantees that an alert gets published on-chain whenever the adversary's total budget for bribes is at most~$\sigma$.
\end{definition}




\begin{note2}[When No Alert is Needed]
We focus on the case when an alert is needed, i.e., when there exists~$\data$ such that $\checkreport{\data} = \tr$.
When $\checkreport{\data} = \fa$ the conditions for alerting are not met and there is no event that the smart contract needs to be notified about. 
This is the desired state of the system. 
Unless a node is offered a bribe to do so, it is never in its interest to alert when $\checkreport{\data} = \fa$ since, in this case, alerting yields no reward while incurring the cost of the transaction. 
For an adversary, there is similarly no benefit in inducing alerts when not needed since this does not change the system outcome. 
Therefore, in the rest of the paper, we focus on the case when an alert is needed, i.e., when~$\checkreport{\data} = \tr$.
\end{note2}

\section{Burned-Penalty Protocol Example}
\label{sec:publishing_protocol_example}

As a first example, we present a straightforward alerting protocol.
This protocol operates in the {lockstep network model}, where all transactions sent by nodes at a timestep are appended to the blockchain after exactly~$\NetworkWrite$ steps.
This strong synchrony assumption ensures that all nodes' alert transactions, if sent, appear on-chain simultaneously and nodes are not aware of other nodes' actions during the alerting period.

The alerting period lasts for a single time step.
All nodes can alert during this time step.
After~$\NetworkWrite$ steps following the end of the alerting period, the contract verifies whether at least one valid alert was published on-chain.
The contract implements the following reward and penalty structure:
It rewards all nodes that sent a valid alert with a share of the operator's budget~$\operatorCost$, divided equally among them.
And slashes all nodes that did not alert with a constant penalty~$\penaltySlash$ each, which is \emph{burned} (removed from the system).

The protocol induces a game between the adversary and the nodes.
The adversary can offer each node~$i$ a private bribe~$\beta_i$ to not alert.
Each node then decides whether to alert or accept the bribe, aiming to maximize its utility.

We analyze this protocol's bribery resistance.
Notice that the maximum reward a node can receive by alerting is~$\operatorCost$, which occurs when it is the only node that alerts.
If a node accepts a bribe~$\beta_i$ and does not alert, but at least one other node alerts, then the node is penalized with~$\penaltySlash$.
In this case, the node's profit is~$\beta_i - \penaltySlash$.
If~$\beta_i >  \operatorCost + \penaltySlash$, the node's dominant strategy is to accept the bribe and not alert, since this guarantees a profit larger than the maximum reward it could receive by alerting, even accounting for the potential penalty.

Therefore, for an adversary to successfully convince a node not to alert, it is sufficient to bribe her with~$\operatorCost + \penaltySlash$.
If the adversary bribes all~$n$ nodes with this amount, no node will alert, and the adversary succeeds in suppressing all alerts.
This requires a total budget of~$n \cdot (\operatorCost + \penaltySlash)$.
Therefore, this protocol achieves at most linear bribery resistance, i.e.,~$\Theta(n)$.

\section{Maximum Bribery Resistance}
\label{sec:max_bribery_resistance} 

We show that the maximum bribery resistance that a ($\operatorCost, \penaltySlash$)-protocol can achieve is quadratic in the number of nodes.
The Burned-Penalty protocol from Section~\ref{sec:publishing_protocol_example} rewards nodes for alerting, but the total reward is a constant value equal to the operator's cost~$\operatorCost$.
Instead, we consider protocols that use the slashed value from non-alerting nodes to pay rewards to alerting nodes.
This allows the total available reward to scale with the number of nodes, increasing the cost for an adversary to bribe all nodes not to alert.

The key insight is that although protocols may implement different reward and penalty structures, if each node's staked amount is independent of the number of nodes~$n$, then the maximum penalty a node can incur is bounded by a constant, which limits how much reward can be distributed.

\begin{claim}
[Maximum bribery resistance]
  The maximum bribery resistance that a protocol with~$n$ nodes and constant operator cost~$\operatorCost$ where each node stakes a constant amount of tokens~$\penaltySlash$ can achieve is~$\penaltySlash\, n^2 + \operatorCost \, n$.
\end{claim}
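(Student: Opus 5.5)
We establish the upper bound with an explicit bribery strategy that works against \emph{any} ($\operatorCost,\penaltySlash$)-alerting protocol. The strategy bribes every node with slightly more than the largest amount that node could possibly gain by alerting. The key observation is a budget-conservation argument. Whatever reward and penalty structure the protocol implements, the only funds the contract can ever pay out are the operator's budget $\operatorCost$ plus the slashed stakes of the nodes. Each node stakes, and hence can lose, at most $\penaltySlash$.

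First, we bound the total reward available in any execution. Let $\alerters$ be the set of alerters. The contract can distribute at most $\operatorCost + \penaltySlash\,(n-|\alerters|) \le \operatorCost + \penaltySlash\,(n-1)$ to the alerters whenever $\alerters \neq \emptyset$. Consequently, any single node's payoff from choosing \alert is at most $\penaltySlash(n-1)+\operatorCost$, regardless of the other nodes' actions.

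Second, we bound the payoff of \noalert from below. A node that does not alert and is offered $\beta_i$ receives at least $\beta_i - \penaltySlash$, since it can lose at most its stake. We then set
\[
\beta_i = \penaltySlash\, n + \operatorCost + \varepsilon .
\]
With this bribe, $\beta_i - \penaltySlash > \penaltySlash(n-1)+\operatorCost$, so \noalert strictly dominates \alert for every node and every profile of the other nodes. This mirrors the argument of the Burned-Penalty example in Section~\ref{sec:publishing_protocol_example} and Claim~\ref{clm:noalert-dominant_opcost}, but it uses the crude lower bound $\beta_i - \penaltySlash$ in all cases. That choice is what avoids any dependence on the protocol's structure.

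By dominance, no rational node alerts, so the attack succeeds with total outlay $\sum_i \beta_i = \penaltySlash n^2 + \operatorCost n + n\varepsilon$. Hence no protocol can guarantee an alert against a budget exceeding $\penaltySlash n^2 + \operatorCost n$. It follows that $\sigma$-bribery resistance with $\sigma$ larger than this value is impossible, which is the claimed maximum. Letting $\varepsilon \to 0$ handles the boundary; alternatively, one can use weak dominance with ties broken toward the briber.

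The main obstacle is making the first step rigorous for an \emph{arbitrary} protocol. Protocols might use sequential slots, multiple rounds, or payouts that depend on history, so the notion of a node's ``maximum gain from alerting'' must be argued generally. I would formalize it as a per-execution conservation invariant. Total payouts are at most $\operatorCost$ plus total slashing, and each node's slashing is at most $\penaltySlash$. Therefore any node's net gain is at most $\operatorCost + \penaltySlash(n-1)$, and any node's net loss is at most $\penaltySlash$. Both bounds hold pathwise, so dominance follows even in sequential or mixed settings.
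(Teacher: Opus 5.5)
Your proposal is correct and is essentially the paper's argument: you bound any alerter's reward by $\penaltySlash(n-1)+\operatorCost$, because the only funds available are the operator budget plus the other nodes' slashed stakes; you bound a non-alerter's loss by its stake $\penaltySlash$; and you conclude that a per-node bribe of about $\penaltySlash n+\operatorCost$ suffices, for a total of $\penaltySlash n^2+\operatorCost n$. The only differences are refinements. You add $\varepsilon$ to get strict dominance, whereas the paper's exact bribe $\penaltySlash n+\operatorCost$ only produces a tie. You also make explicit the pathwise fund-conservation invariant that the paper leaves implicit.
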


\begin{proof}
  Consider an alerting protocol with~$n$ nodes where each node stakes~$\penaltySlash$ tokens and the operator funds the contract with~$\operatorCost$ tokens.
  Consider a node~$i$. 
  The maximum amount available in the system comes from all other~$n-1$ nodes, which could be slashed, each incurring a penalty of at most~$\penaltySlash$.
  In this case, the maximum reward node~$i$ can receive by alerting is~$\penaltySlash (n-1) + \operatorCost$.

  Since the stake amount is not dependent on~$n$, the maximum penalty a node can incur if it deviates from the protocol is constant and bounded by her stake~$\penaltySlash$.
  Therefore, for an adversary to convince a node to deviate, it is sufficient to offer her a bribe of~$\penaltySlash \, n + \operatorCost$.
  Since this holds for any node, the adversary can convince all nodes to deviate by offering a total bribe of~$\penaltySlash \, n^2 + \operatorCost \, n$. 
\end{proof}

In a~($\operatorCost, \penaltySlash$)-protocol, the penalty for deviating from the protocol is not bounded by, but exactly equal to~$\penaltySlash$.
Therefore, any~($\operatorCost, \penaltySlash$)-protocol with~$n$ nodes has the same upper bound on bribery resistance.
In the following sections, we design~($\operatorCost, \penaltySlash$)-protocols that asymptotically achieve this upper bound, and thus are asymptotically optimal.

Since the operator cost~$\operatorCost$ only adds a linear term to the maximum bribery resistance, in the rest of the paper, for simplicity, we set~$\operatorCost = 0$ and focus on ($0, \penaltySlash$)-protocols.
We refer to these simply as \emph{$\penaltySlash$-alerting protocols}.
We present the full analysis with non-zero operator cost in Appendix~\ref{app:non_zero_operator_cost} and show that it does not change our results asymptotically.

\section{Simultaneous Alerting Game}
\label{sec:base_game}
Our goal is to design~$\penaltySlash$-alerting protocols that achieve asymptotically optimal bribery resistance of~$\Theta(n^2)$.
To this end, we first design a simple abstract game that runs in constant time called the \emph{simultaneous alerting game}, and show it achieves quadratic bribery resistance.
We start by defining the game as a Stackelberg game~(\S\ref{sec:simultaneous_alerting_game}).
We characterize its equilibria under different ranges of the adversary's gain from a successful attack.
First, we identify conditions under which either alerting or not alerting is a dominant strategy, leading to a symmetric pure equilibrium in each case~(\S\ref{sec:pure-strategy-nash-equilibria}).
We then show that in the intermediate range of bribes no symmetric pure equilibrium exists~(\S\ref{sec:pure-strategy-nash-equilibria}). 
Finally, we extend the analysis to mixed-strategy equilibria and study the adversary's optimal bribe choice and expected utility~(\S\ref{sec:mixed-strategy-equilibria}).

\subsection{Game Definition}
\label{sec:simultaneous_alerting_game}
We define the \emph{simultaneous alerting game}, or in short the \emph{alerting game}, an abstract game that asymptotically achieves the maximum bribery resistance. 
Two types of players are involved: the set of nodes $\mathcal{N}$ and an adversary.
The game proceeds in two stages.
First, the adversary offers bribes by choosing a vector $(\beta_1,\dots,\beta_n)$ specifying how much each node would be paid for not alerting, regardless of the game outcome.

For our primary analysis, we assume the adversary pays the bribe to any node that chooses \noalert, regardless of the game's final outcome. 
An alternative strategy for the adversary would be to offer a \textit{conditional bribe}, where a node that chooses \noalert receives the bribe~$\beta_i$ only if the attack succeeds, i.e., if no node alerts.
We discuss this modification in Appendix~\ref{app:conditional_bribes} and show that it does not change our results.

The bribes are private information between the adversary and each node, meaning that no other node knows the bribe offered to another node.
Then, given these offers, each node $i \in \mathcal{N}$ decides whether to accept the bribe and skip alerting or to alert and not get the bribe, choosing an action $a_i \in \{\alert, \noalert\}$.
Afterwards, the adversary pays the bribe only to nodes whose on-chain behavior matches the \noalert action.
The protocol rewards nodes that follow the protocol.

In an execution of the game, denote by~$\alerters$ the set of players who choose $\alert$.
If~$\alerters$ is not empty, all nodes~$j \notin \alerters$ pay a penalty of~$\penaltySlash$
and all nodes~$i \in \alerters$ share the slashed value equally as a reward.
Each alerter gets~$ \penaltySlash \frac{ (n - |\alerters|) }{|\alerters|}$.
Node~$i$ may choose not to alert in exchange for a bribe of~$\beta_i$.
A node that alerts receives no bribe.
In summary, the utility of each node~$i$ is its payoff:
\begin{equation}
\label{eq:sim_payoff}
\rho_i(a_i,\beta_i,\alerters)
= 
\begin{cases}
    \penaltySlash\frac{  (n-|\alerters|)}{|\alerters|}, 
      &  a_i = \alert\\  

    \beta_i - \penaltySlash, 
      &  a_i = \noalert, \alerters \neq \emptyset \\

    \beta_i, 
      & \alerters = \emptyset .
\end{cases}
\end{equation}

For the adversary, if none of the nodes alert, she gains~$\advGain$ but pays all bribes.
If a node alerts, the adversary gains nothing but pays the bribes to nodes that did not alert.
Thus, the adversary's utility is her payoff:
\begin{equation}
\label{eq:sim_payoff_adv}
\rho_{\text{adv}}(\beta_1,\dots,\beta_n,\alerters) = 
\begin{cases}
\advGain - \sum\limits_{i \in \mathcal{N}} \beta_i, & \alerters = \emptyset\\

- \sum\limits_{i \in \mathcal{N} \setminus \alerters} \beta_i, & \alerters \neq \emptyset 
\end{cases}
\end{equation}

 

\subsection{Pure-Strategy Nash Equilibria}
\label{sec:pure-strategy-nash-equilibria}
We analyze the pure-strategy Nash equilibria (NE) of the simultaneous alerting game.
We show that there are two cases where a pure dominant strategy exists.
If the bribe is sufficiently high, then each node's dominant strategy is to choose $\noalert$.
And if the bribe is sufficiently low, then each node's dominant strategy is to choose $\alert$.
In the latter case, the adversary's optimal strategy is to not offer any bribe.

\begin{claim}[{\noalert} dominance]
\label{clm:noalert-dominant}
In the simultaneous alerting game, if the adversary's offered bribe for a node~$i$ is~$\beta_i\ge \penaltySlash(n-1)$, then the action~$\noalert$ weakly dominates $\alert$ for node~$i$. 
And $\noalert$ is strictly dominant if $\beta_i>\penaltySlash(n-1)$.
\end{claim}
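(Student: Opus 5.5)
The plan is a direct case analysis on the number of \emph{other} alerters, comparing node~$i$'s payoff under \noalert\ and \alert\ via Equation~\ref{eq:sim_payoff}. This mirrors the structure already used for Claim~\ref{clm:noalert-dominant_opcost} in the appendix; in fact, the present claim is exactly that claim specialized to $\operatorCost=0$, so one option is simply to invoke it. I would nonetheless write out the short self-contained argument.

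Fix node~$i$ and let $y\in\{0,1,\ldots,n-1\}$ be the number of other nodes choosing \alert.

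\textbf{Case $y=0$.} If $i$ plays \noalert, then $\alerters=\emptyset$ and $i$ receives $\beta_i$. If $i$ plays \alert, then $\alerters=\{i\}$ and $i$ receives $\penaltySlash(n-1)$. The hypothesis $\beta_i\ge\penaltySlash(n-1)$ therefore gives weak preference for \noalert, and strict preference when $\beta_i>\penaltySlash(n-1)$.

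\textbf{Case $y\ge1$.} Under \noalert, $i$ receives $\beta_i-\penaltySlash$. Under \alert, $i$ receives $f(y):=\penaltySlash\frac{n-1-y}{y+1}$. This function is decreasing in $y$, so its maximum over $y\ge1$ is $f(1)=\penaltySlash\frac{n-2}{2}$. It then suffices to show $\beta_i-\penaltySlash\ge \penaltySlash\frac{n-2}{2}$, that is, $\beta_i\ge \penaltySlash\frac{n}{2}$. This follows from $\beta_i\ge\penaltySlash(n-1)$ because $n-1\ge n/2$ for $n\ge2$. Strictness carries over whenever $\beta_i>\penaltySlash(n-1)$.

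Combining the two cases gives weak dominance for all $y$ and strict dominance under the strict hypothesis. The only step needing care is the $y\ge1$ case: one has to notice that the binding comparison is at $y=1$, where the right-hand side is about $\penaltySlash n/2$ rather than $\penaltySlash(n-1)$, so the slack comes from $n\ge2$. The $y=0$ case is tight, and that is why the threshold is exactly $\penaltySlash(n-1)$.
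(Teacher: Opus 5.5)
Your proposal is correct and follows the paper's proof essentially step for step. It uses the same split into $y=0$ and $y\ge 1$, the same observation that $f(y)=\lambda\frac{n-1-y}{y+1}$ is maximized at $y=1$, and the same reduction to $\beta_i\ge\lambda\frac{n}{2}$ via $n-1\ge n/2$; your remark that the statement is the $c=0$ case of the appendix's operator-cost claim is also accurate.
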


\begin{proof}
Consider a node~$i$ and let~${y\in\{0,1,\ldots,n-1\}}$ denote the number of other alerters in an execution of the game.
If~$i$ alerts while~$y$ others alert, i.e.,~${|\alerters|=y+1}$, its payoff is~$\penaltySlash\frac{n-y-1}{y+1}$.
Otherwise, if~$i$ does not alert and someone alerts, its payoff is $\beta_i-\penaltySlash$.
Finally, if no one alerts, i.e.,~$y=0$, choosing $\noalert$ earns node~$i$ the offered bribe~$\beta_i$ while a single alerter earns $\penaltySlash(n-1)$.

We divide into two cases based on the value of $y$.

\begin{enumerate}
\item \textbf{$y=0$.}
Comparing $\noalert$ and $\alert$ gives ${u_i(\noalert,\beta_i, \alerters=\varnothing)=\beta_i}$ and ${u_i(\alert, \beta_i, \alerters=\{i\})=\penaltySlash(n-1)}$.
By assumption, $\beta_i\ge \penaltySlash(n-1)$, so $\noalert$ is at least as good, and strictly better if $\beta_i>\penaltySlash(n-1)$.

\item \textbf{ $y\ge 1$.}
Again we compare the utility of the two actions.
$u_i(\noalert, \beta_i, \alerters)=\beta_i-\penaltySlash$, while~$u_i(\alert, \beta_i, \alerters)=\penaltySlash\,\frac{n-1-y}{y+1}$.
The function~$f(y):=\frac{n-y-1}{y+1}$ is decreasing in~$y$ for
$y\ge 1$, so its maximum over $\{1,\ldots,n-1\}$ is at $y=1$ with~$f(1)=\frac{n-2}{2}$. 
Thus it suffices to show~$
\beta_i-\penaltySlash \;\ge\; \penaltySlash\, \frac{n-2}{2}.
$

This holds whenever $\beta_i\ge \penaltySlash \, \frac{n}{2}$, and in particular under our assumption $\beta_i\ge \penaltySlash (n-1)$ because ${\penaltySlash(n-1)\ge \penaltySlash \, \tfrac{n}{2}}$ for all ${n\ge 2}$. 
Therefore, for every~${y\ge 1}$,~${u_i(\noalert, \alerters)\ge u_i(\alert, \alerters)}$, with strict inequality when ${\beta_i>\penaltySlash(n-1)}$.
\end{enumerate}
Combining the two cases, $\noalert$ dominates $\alert$ for all nodes~$i$ whose~$\beta_i\ge \penaltySlash(n-1)$, and is strictly dominant when ${\beta_i>\penaltySlash(n-1)}$.
\end{proof}

\begin{corollary}[Profitable bribery when~$\advGain > \penaltySlash \, n(n-1)$]
\label{cor:profitable-bribery}
In the full two-stage alerting game, if the adversary's gain from a successful attack is~${\advGain > \penaltySlash \, n(n-1)}$, then there exists a profitable bribery strategy for the adversary that induces a unique pure-strategy equilibrium in the second stage of the game where all nodes choose $\noalert$.
\end{corollary}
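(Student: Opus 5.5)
The plan is to derive the corollary directly from Claim~\ref{clm:noalert-dominant}, mirroring the operator-cost version in Appendix~\ref{app:non_zero_operator_cost}. Since $\advGain > \penaltySlash\, n(n-1)$, I will fix a slack $\varepsilon := \frac{\advGain - \penaltySlash\, n(n-1)}{2n} > 0$. The adversary then offers every node the uniform bribe $\beta_i := \penaltySlash(n-1) + \varepsilon$. This bribe lies strictly above $\penaltySlash(n-1)$, so by the strict part of Claim~\ref{clm:noalert-dominant}, $\noalert$ strictly dominates $\alert$ for every node~$i$.

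Next I will argue uniqueness of the second-stage equilibrium. Because each node has a strictly dominant action, in any Nash equilibrium (pure or mixed) every node must play $\noalert$ with probability one. Any positive weight on $\alert$ is a strictly worse response, whatever the other nodes do. Hence the profile with $\alerters = \emptyset$ is the unique equilibrium of the second stage, and in particular the unique pure one.

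Finally, I will compute the adversary's utility from Equation~\ref{eq:sim_payoff_adv} with $\alerters=\emptyset$:
\[
\advGain - \sum_{i} \beta_i \;=\; \advGain - \penaltySlash\, n(n-1) - n\varepsilon \;=\; \frac{\advGain - \penaltySlash\, n(n-1)}{2} \;>\; 0 .
\]
This is strictly larger than the zero payoff she gets from not bribing, where all nodes alert by Claim~\ref{clm:alert-dominant} and nothing is paid. So the strategy is profitable.

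There is no real obstacle here. The only care needed is to choose a \emph{strict} margin above $\penaltySlash(n-1)$, so that strict dominance gives uniqueness; weak dominance at exactly $\penaltySlash(n-1)$ would leave ties and possibly other equilibria. The margin must also be small enough that the total outlay stays below $\advGain$, and the choice of $\varepsilon$ above handles both requirements.
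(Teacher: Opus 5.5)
Your proposal is correct and follows the paper's own proof. The paper likewise invokes Claim~\ref{clm:noalert-dominant} with bribes strictly above $\lambda(n-1)$ for every node, concludes that all-$\mathsf{NoAlert}$ is the unique second-stage response, and compares $G-\sum_i\beta_i$ with the zero payoff from not bribing; your explicit $\varepsilon=\frac{G-\lambda n(n-1)}{2n}$ makes precise a step the paper leaves implicit, namely that such bribes can be chosen with total outlay strictly below $G$.
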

\begin{proof}
By Claim~\ref{clm:noalert-dominant}, for every node $i$, if $\beta_i > \penaltySlash \, (n-1)$ then $\noalert$  dominates $\alert$.
Hence, if the adversary offers $\beta_i > \penaltySlash(n-1)$ to all nodes, the nodes' unique best-response profile is that all nodes choose $\noalert$. 
The adversary's utility at these $\beta_i$ values and nodes' response is $u_{\mathrm{adv}} = \advGain - \sum_{i=1}^{n} \beta_i$ if she chooses to bribe the nodes, and $u_{\mathrm{adv}} = 0$ if she does not. 
Thus, when the adversary's gain from a successful attack is~${\advGain > \penaltySlash \, n(n-1)}$, then bribery is profitable.
\end{proof}

We conclude that when the briber's gain from a successful attack is over~$\penaltySlash n(n-1)$, she can ensure that all nodes play $\noalert$ and achieve a positive utility.
We now consider the opposite case where the bribe is too low.

\begin{claim}[\alert\ is dominant when $\beta_i  \leq \penaltySlash$]
\label{clm:alert-dominant}
\lemmaAlertDominance
\end{claim}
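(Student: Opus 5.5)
The plan is to mirror the proof of Claim~\ref{clm:noalert-dominant}: fix node~$i$ and a bribe $\beta_i\le\penaltySlash$, condition on the number $y\in\{0,1,\ldots,n-1\}$ of \emph{other} nodes that alert, and compare node~$i$'s payoffs from \alert\ and \noalert\ using Equation~\ref{eq:sim_payoff}. Because the other nodes' actions enter the payoff only through $y$, dominance reduces to a pointwise comparison for each value of $y$. I will treat $y=0$ and $y\ge 1$ separately.

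For $y=0$, if $i$ plays \noalert\ then $\alerters=\emptyset$ and its payoff is $\beta_i$. If it plays \alert\ then $\alerters=\{i\}$ and its payoff is $\penaltySlash(n-1)$. Since $n\ge 2$, we have $\penaltySlash(n-1)\ge\penaltySlash\ge\beta_i$, so \alert\ is at least as good. It is strictly better when $\beta_i<\penaltySlash$, because then $\beta_i<\penaltySlash\le\penaltySlash(n-1)$.

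For $y\ge 1$, \noalert\ leaves $\alerters\neq\emptyset$, so the payoff is $\beta_i-\penaltySlash\le 0$, and this is strictly negative when $\beta_i<\penaltySlash$. \alert\ gives $\penaltySlash\frac{n-1-y}{y+1}$, which is nonnegative because $y\le n-1$. Hence \alert\ is again at least as good, and strictly better when $\beta_i<\penaltySlash$.

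Combining the two cases gives weak dominance for $\beta_i\le\penaltySlash$ and strict dominance for $\beta_i<\penaltySlash$. The only delicate point is the boundary $y=n-1$ with $\beta_i=\penaltySlash$: there both payoffs equal $0$. This is exactly why the claim asserts only weak dominance at equality, so the argument has to keep the two inequalities (weak and strict) separate throughout.
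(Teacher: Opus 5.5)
Your proof is correct and follows the paper's argument: the same split on the number $y$ of other alerters into $y=0$ and $y\ge 1$, with the same pointwise payoff comparisons. Your handling of the boundary is slightly more careful than the paper's. The paper asserts $\beta_i-\penaltySlash<0$ in the $y\ge 1$ case even when $\beta_i=\penaltySlash$, whereas you correctly write $\beta_i-\penaltySlash\le 0$ and point out the tie at $y=n-1$, $\beta_i=\penaltySlash$.
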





\begin{corollary}[Bribery is unprofitable when $\advGain < \penaltySlash \, n$]
  \label{cor:unprofitable-bribery}
\corUnprofitableBribery
\end{corollary}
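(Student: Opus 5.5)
We argue by contradiction on the adversary's bribe vector, using Claim~\ref{clm:alert-dominant} as the only game-theoretic ingredient. Assume $\advGain < \penaltySlash\, n$. First we show that offering bribes $\beta_i > \penaltySlash$ to every node can never yield positive utility. If all nodes then choose \noalert, the adversary's payoff is $\advGain - \sum_i \beta_i < \penaltySlash n - \penaltySlash n = 0$. If some node alerts, her payoff is $-\sum_{i\notin\alerters}\beta_i \le 0$ by Equation~\ref{eq:sim_payoff_adv}. So this strategy is weakly dominated by offering nothing.

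The remaining case is a bribe vector with some node $i$ offered $\beta_i \le \penaltySlash$. By Claim~\ref{clm:alert-dominant}, \alert weakly dominates \noalert for that node, and strictly dominates it when $\beta_i < \penaltySlash$. A rational node $i$ therefore alerts, so $\alerters \neq \emptyset$ and the attack fails. The adversary's payoff is then $-\sum_{j\in\mathcal N\setminus\alerters}\beta_j \le 0$, and it equals $0$ only if every non-alerter received a zero bribe.

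Combining the two cases, every bribe vector gives the adversary utility at most $0$. The all-zero vector achieves exactly $0$: by Claim~\ref{clm:alert-dominant} every node alerts and no bribes are paid. Hence not bribing is optimal and bribery is unprofitable.

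The main obstacle is the tie case $\beta_i = \penaltySlash$, where dominance is only weak. Here I would either invoke the standard assumption that indifferent nodes follow the protocol, i.e.\ the paper's convention that rational nodes deviate only for strictly higher profit, or note that at least one node can be taken to have $\beta_i < \penaltySlash$ unless every $\beta_i \ge \penaltySlash$. In that latter subcase $\sum_i \beta_i \ge \penaltySlash n > \advGain$, and the first-case argument applies verbatim with non-strict inequalities.
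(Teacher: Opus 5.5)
Your proof is correct and follows the paper's argument. It splits on whether every bribe exceeds $\lambda$ (total outlay then exceeds $G$, so a successful attack is a loss) or some node gets at most $\lambda$ (Claim~\ref{clm:alert-dominant} then makes it alert, the attack fails, and utility is $\le 0$), and your re-split into ``some $\beta_i<\lambda$'' versus ``all $\beta_i\ge\lambda$'' handles the weak-dominance tie more rigorously than the paper, whose ``$<0$'' should indeed be your ``$\le 0$'' when every node alerts.
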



The proofs of Claim~\ref{clm:alert-dominant} and Corollary~\ref{cor:unprofitable-bribery} are similar to those of Claim~\ref{clm:noalert-dominant} and Corollary~\ref{cor:profitable-bribery}, respectively, and can be found in Appendix~\ref{app:alert-dominance}.

We showed that if the offered bribe~$\beta_i\leq\penaltySlash$, or~$\beta_i\geq \penaltySlash(n-1)$, then there exists a dominant strategy for the node~$i$.
However, we now show that if the bribes offered~$\penaltySlash<\beta_i<\penaltySlash(n-1)$, then no symmetric pure NE exists.

\begin{lemma}[No symmetric pure equilibrium]
\label{lem:no-symm-pure}
 Consider the second phase of the simultaneous alerting game 
with~$n\ge2$ and a~$\penaltySlash>0$.
If there exists a node~$i$ whose bribe~$\beta_i$ satisfies 
\begin{equation}
\label{eq:interior}
\penaltySlash \;<\; \beta_i \;<\; \penaltySlash(n-1),
\end{equation}
then there exists no symmetric pure NE.
\end{lemma}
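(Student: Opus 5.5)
The plan is to enumerate the symmetric pure profiles and rule out each one with a profitable unilateral deviation by the node $i$ singled out in Equation~\ref{eq:interior}. A pure profile is symmetric only if all nodes choose the same action, so there are exactly two candidates: all nodes choose $\alert$ ($\alerters=\mathcal N$), and all nodes choose $\noalert$ ($\alerters=\emptyset$). It suffices to show that node $i$ strictly gains by deviating in each.

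\textbf{All $\alert$.} By Equation~\ref{eq:sim_payoff}, every node gets $\penaltySlash\frac{n-n}{n}=0$. If node $i$ switches to $\noalert$, then $\alerters\setminus\{i\}$ is still nonempty because $n\ge 2$. Its payoff becomes $\beta_i-\penaltySlash$, which is strictly positive by the left inequality $\beta_i>\penaltySlash$. So the deviation is strictly profitable, and all-$\alert$ is not a NE.

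\textbf{All $\noalert$.} Node $i$ gets $\beta_i$. If it switches to $\alert$, it is the unique alerter, so $|\alerters|=1$. It then collects the penalties of the other $n-1$ nodes, a payoff of $\penaltySlash(n-1)$. This exceeds $\beta_i$ by the right inequality, so all-$\noalert$ is not a NE either.

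Since these are the only symmetric pure profiles, the lemma follows. There is essentially no obstacle. The only points needing care are using $n\ge2$ to keep the alerter set nonempty after the deviation in the first case, and checking the payoff formulas for $|\alerters|=n$ and $|\alerters|=1$.
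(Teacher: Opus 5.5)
Your proposal is correct and matches the paper's proof: it also reduces to the two symmetric pure profiles, all-\alert and all-\noalert. It rules out each one by node $i$'s strictly profitable unilateral deviation, using the left and the right inequality of Equation~\ref{eq:interior} respectively.
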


\begin{proof}
There are only two symmetric pure profiles:
\begin{enumerate}
  \item \emph{All \alert: $\alerters=\mathcal N$.}
Then each node's payoff is~$0$ (Equation~\ref{eq:sim_payoff}).
A unilateral deviation by node~$i$ to \noalert yields payoff~${\beta_i-\penaltySlash}$ (Equation~\ref{eq:sim_payoff}) since~${\alerters\setminus\{i\}\neq\emptyset}$.
By the left inequality in Equation~\ref{eq:interior},~$\beta_i-\penaltySlash>0$, so deviating is strictly profitable.
Hence all~\alert is not a NE.

  \item  \emph{All \noalert: $\alerters=\emptyset$.}
Node~$i$'s payoff is~$\beta_i$.
A unilateral deviation by node~$i$ to~$\alert$ yields payoff~$\penaltySlash\frac{|\,\mathcal N\setminus\{i\}\,|}{|\,\{i\}\,|}=\penaltySlash(n-1)$.
By the right inequality in Equation~\ref{eq:interior},~$\penaltySlash(n-1)>\beta_i$, so deviating is strictly profitable.
Hence, all~\noalert is not a NE.
\end{enumerate}

Neither of the two symmetric pure profiles is~a~NE.
\end{proof}


\subsection{Mixed-Strategy Nash Equilibria}
\label{sec:mixed-strategy-equilibria}
By Lemma~\ref{lem:no-symm-pure}, whenever a bribe satisfies~${\penaltySlash < \beta_i < \penaltySlash(n-1)}$, no symmetric pure Nash equilibrium exists.
Since a node that is offered such a bribe has no dominant strategy, it may choose its action with some probability. 
We study the existence of mixed-strategy equilibria in this interior range of bribes, where the strategy of a node is her probability of choosing $\alert$.
In the next lemma, we give a lower bound on the adversary's \emph{expected} total bribe outlay that holds for any bribe vector and any strategy profile of the nodes.

\begin{claim}[Lower bound on expected bribe]
\label{lem:bribe-lower-bound}
\lemmaLowerBoundBribery
\begin{equation}
\label{eq:lower-bound-bribes}
\sum_{i=1}^n \beta_i\, q_i
\;\ge\;
\penaltySlash\, n(n-1)\, Q.
\end{equation}
\end{claim}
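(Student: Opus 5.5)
The plan is to work node by node and derive a per-node lower bound of the form $\beta_i q_i \ge \penaltySlash(n-1)Q$, then sum over $i$.

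\emph{Step 1: the two expected payoffs.} Fix node $i$ and let $Y_{-i}$ be the number of alerters among the other nodes. Since the $q_j$ are independent, $Q_{-i}:=\Pr(Y_{-i}=0)=\prod_{j\neq i}q_j$. By Equation~\ref{eq:sim_payoff}, an alerter with $y$ co-alerters receives $\penaltySlash(\frac{n}{1+y}-1)$. Hence
\[
U_i^{\alert}=\penaltySlash\Bigl(n\,\mathbb{E}\bigl[\tfrac{1}{1+Y_{-i}}\bigr]-1\Bigr).
\]
A non-alerter always keeps $\beta_i$ and is slashed exactly when $Y_{-i}\ge 1$, so
\[
U_i^{\noalert}=\beta_i-\penaltySlash(1-Q_{-i}).
\]

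\emph{Step 2: equilibrium conditions.} There are three cases.
\begin{itemize}
\item If $0<q_i<1$, node $i$ is indifferent, so $U_i^{\alert}=U_i^{\noalert}$. This gives
\[
\beta_i=\penaltySlash\Bigl(n\,\mathbb{E}\bigl[\tfrac{1}{1+Y_{-i}}\bigr]-Q_{-i}\Bigr).
\]
\item If $q_i=1$ (pure $\noalert$), the best-response condition gives $U_i^{\noalert}\ge U_i^{\alert}$. This turns the identity above into the inequality ``$\beta_i\ge\cdots$''.
\item If $q_i=0$, then $\beta_i q_i=0$ and $Q=0$, so the target per-node inequality holds trivially.
\end{itemize}
Thus in every case where $q_i>0$ we have $\beta_i\ge \penaltySlash(n\,\mathbb{E}[1/(1+Y_{-i})]-Q_{-i})$.

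\emph{Step 3: bound the expectation.} Since $1/(1+y)$ equals $1$ at $y=0$ and is nonnegative otherwise, $\mathbb{E}[1/(1+Y_{-i})]\ge Q_{-i}$. Substituting into Step~2 yields $\beta_i\ge\penaltySlash(n-1)Q_{-i}$.

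\emph{Step 4: multiply and sum.} Multiplying by $q_i$ and using $q_iQ_{-i}=Q$ gives $\beta_iq_i\ge\penaltySlash(n-1)Q$ for every $i$. Summing over $i=1,\dots,n$ gives $\sum_i\beta_iq_i\ge\penaltySlash n(n-1)Q$.

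The main point requiring care is Step~2. It must be checked that the equilibrium condition delivers the inequality in the right direction for pure-$\noalert$ nodes, and that degenerate pure-$\alert$ nodes are harmless. Both follow because a node with $q_i=0$ makes both sides of its per-node inequality vanish. The rest of the argument is routine algebra.
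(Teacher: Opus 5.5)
Your proof is correct and follows the same route as the paper: compute $U_i^{\alert}$ and $U_i^{\noalert}$, use the equilibrium condition to express $\beta_i$, bound $\mathbb{E}[1/(1+Y_{-i})]\ge Q_{-i}$, then multiply by $q_i$ and sum. Your case split in Step 2 is slightly more careful than the paper. The paper asserts indifference $U_i^{\alert}=U_i^{\noalert}$ for every node and treats only the all-pure boundary profiles separately afterward, whereas your best-response inequality for $q_i=1$ and the trivial case $q_i=0$ also cover profiles in which only some nodes mix.
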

We present a sketch of the proof, full details can be found in Appendix~\ref{app:mixed-strategy-equilibria}

\begin{proof}
Consider a node~$i$ and let~$Y_{-i}$ be the number of other alerters (whether or not~$i$ alerts). 
Let~${Q_{-i}:=\Pr(Y_{-i}=0)=\prod_{j\ne i} q_j}$ be the probability that other nodes do not alert.
If~$i$ alerts while~${Y_{-i}=y}$, its payoff is~${\penaltySlash(\frac{n}{1+y}-1)}$, so
\[
\mathbb{E}[\rho_i(a_i = \alert,\beta_i,\alerters)]
=\penaltySlash(n\,\mathbb{E}\!\Big[\frac{1}{1+Y_{-i}}\Big]-1).
\]
We denote this expected payoff by $U_i^{\alert}$.
If $i$ chooses $\noalert$, it always receives $\beta_i$ and pays the penalty iff someone else alerts:
\[
\mathbb{E}[\rho_i(a_i = \noalert,\beta_i,\alerters)]
= \beta_i + \penaltySlash\,Q_{-i} - \penaltySlash.
\]
We denote this expected payoff by $U_i^{\noalert}(\beta_i)$.

Notice that~$U_i^{\alert}$ is independent of~$\beta_i$ while~$U_i^{\noalert}(\beta_i)$ is strictly increasing in~$\beta_i$.
In equilibrium, for each node~$i$, its utility from alerting and not alerting must be equal. 
I.e.,~${U_i^{\alert} = U_i^{\noalert}(\beta_i)}$.
Otherwise, a node would deviate to the action with the higher expected utility.
Thus, in equilibrium, it holds that 
\[
\penaltySlash(n\,\mathbb{E}\!\Big[\frac{1}{1+Y_{-i}}\Big]-1) = \beta_i + \penaltySlash\,Q_{-i} - \penaltySlash.
\]

From this equation, and since $\tfrac{1}{1+y}\ge 1$ when $y=0$ and $\tfrac{1}{1+y}\ge0$ otherwise, we get that
$
\beta_i \;\ge\; \penaltySlash\,(n-1)\,Q_{-i}$.
Multiplying both sides by~$q_i$ and using~${q_i Q_{-i}=\prod_{j=1}^n q_j=Q}$ yields~$\beta_i q_i \ge \penaltySlash\,(n-1)\,Q$ for all such $i$.
Summing over $i=1,\ldots,n$, we get
\[
\sum_{i=1}^n \beta_i q_i \;\ge\; \penaltySlash\,(n-1)\,\sum_{i=1}^n Q
\;=\; \penaltySlash\,n(n-1)\,Q,
\]
concluding the proof.
\end{proof}

In the boundary cases where a dominant strategy exists, if all bribes are at least~$\penaltySlash(n-1)$, then all nodes choose $\noalert$, i.e., $q_i=1$ for all $i$ and $Q=1$.
Similarly, if all bribes are at most~$\penaltySlash$, then all nodes choose $\alert$, i.e., $q_i=0$ for all $i$ and $Q=0$.
So inequality~\ref{eq:lower-bound-bribes}~holds.

To analyze the adversary's optimal strategy, we consider her expected payoff.
The adversary's expected payoff is her gain from a successful attack times the probability that no node alerts, minus the expected total bribe paid to the nodes.
I.e.,~${\mathbb{E}[\rho_{\text{adv}}]=\advGain\,Q-\sum_i \beta_i q_i}$.
By Lemma~\ref{lem:bribe-lower-bound}, we get the upper bound~${\mathbb{E}[\rho_{\text{adv}}]\le (\advGain-\penaltySlash\,n(n-1))\,Q}$,
which is nonpositive for~$\advGain\le \penaltySlash\,n(n-1)$.

Therefore, if the adversary's gain from a successful attack is less than~$\penaltySlash\,n(n-1)$, then her expected payoff is nonpositive for any bribe vector and any mixed-strategy equilibrium.
So her optimal strategy is to not bribe any node, yielding zero expected payoff.

\section{Realizing the Simultaneous Game}
\label{sec:realizing_base}
We present two alerting protocols that realize the simultaneous alerting game.
The game requires all nodes to choose their actions simultaneously, i.e., no node can observe other nodes' actions before choosing its own, and neither can the adversary.
{We identify three properties of alerting protocols and show that they are sufficient for a constant-time~$\penaltySlash$-alerting protocol to satisfy to implement the game~(\S\ref{sec:realizing_base_requirements}).}
To ensure these properties in practice, we design two protocols that work under a different sets of assumptions.
The first protocol, \emph{Lockstep alerting}~(\S\ref{sec:synchronous_protocol}), works under a strong network assumption.
We relax this assumption in the second protocol, \emph{TEE-based alerting}~(\S\ref{sec:hardware_based_protocol}), by using trusted hardware components to implement a timed commitment scheme.

\subsection{Sufficient Conditions}
\label{sec:realizing_base_requirements}
We identify three properties that if a constant-time~$\penaltySlash$-alerting protocol satisfies, then it implements the simultaneous alerting game.

\begin{definition}[Alert Capability]\label{req:alert_capability}
Each node is able to raise an alert before the alerting period ends.
\end{definition}

\begin{definition}[Deniability]\label{req:private_decisions_no_evidence}
  For any execution~$E_0$ in which node~$i$ alerts (respectively, does not alert), 
  there exists another execution~$E_1$ in which node~$i$ does not alert (respectively, alerts) such that no node~$j \neq i$ and no adversary can distinguish between~$E_0$ and~$E_1$ before {choosing their own actions}. 
\end{definition}
{Note that if a protocol achieves Deniability, in particular, node~$i$ cannot generate any message that would help others distinguish between the two executions before time~$t_0+1$.}

\begin{definition}[Reward Distribution]\label{req:reward_distribution}
  The protocol distributes all slashed penalties symmetrically to the nodes that alerted.
\end{definition}

We prove that the three properties are sufficient.
\begin{proposition}
\label{prop:sufficiency}
Any constant-time $\penaltySlash$-alerting protocol that satisfies the properties of Alert Capability, Deniability, and Reward Distribution implements the simultaneous alerting game.
\end{proposition}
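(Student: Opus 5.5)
The plan is to show that every execution of such a protocol induces the same strategic situation as the two-stage simultaneous alerting game of Section~\ref{sec:simultaneous_alerting_game}. That game has three components: an action set $\{\alert,\noalert\}$ per node; an information structure in which each node (and the adversary) chooses without observing the others' choices; and payoffs given by Equations~\ref{eq:sim_payoff} and~\ref{eq:sim_payoff_adv}. I would match each component to one of the three properties, in that order.

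\textbf{Step 1: action sets, from Alert Capability.} Because the protocol is constant-time, there is a single alerting period. Any behavior of node~$i$ in that period results in one of two on-chain outcomes: either a valid alert from~$i$ is included before the period ends, or it is not. Alert Capability guarantees that the first outcome is always feasible for $i$, whatever the others do. The second outcome is trivially feasible by staying silent. So each node's strategic choice reduces to $a_i\in\{\alert,\noalert\}$. The adversary's bribe is conditioned on on-chain behavior, so it attaches exactly to $\noalert$.

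\textbf{Step 2: simultaneity, from Deniability.} Fix node $j$ (or the adversary) at the moment it commits to its action. By Deniability, for every other node $i$ and every execution, there is an indistinguishable execution in which $i$'s action is flipped. I would apply this iteratively over the nodes $i\neq j$, flipping one coordinate at a time, to argue that $j$'s view is consistent with every profile $a_{-j}$. Hence $j$'s strategy is a function of its own information only, namely its bribe $\beta_j$ and its private state. This is exactly a simultaneous-move second stage following the adversary's first-stage bribe vector. Bribes are private by the model, which matches the game's information assumptions.

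\textbf{Step 3: payoffs, from Reward Distribution.} In a $\penaltySlash$-alerting protocol, every non-alerter is slashed $\penaltySlash$ when an alert occurs. There is no operator budget, since $\operatorCost=0$. When no alert occurs, nothing is slashed, because no deviation is detectable and the contract takes no action. Reward Distribution splits the slashed total $\penaltySlash(n-|\alerters|)$ symmetrically among the alerters, giving each $\penaltySlash\frac{n-|\alerters|}{|\alerters|}$. Adding the bribe for $\noalert$ recovers Equation~\ref{eq:sim_payoff}. The adversary receives $\advGain$ exactly when $\alerters=\emptyset$ and pays the bribes of the non-alerters, which gives Equation~\ref{eq:sim_payoff_adv}.

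\textbf{Main obstacle.} I expect Step~2 to be the hardest. Deniability as defined is pairwise, flipping one node at a time, while simultaneity needs indistinguishability across whole profiles of the others' actions. It also has to hold against collusion, where a bribed node tries to prove its action early. I would handle the first issue with the hybrid argument above, chaining $E_0\sim E_1\sim\cdots$ and using transitivity of indistinguishability. This relies on the executions remaining valid at each step, which I would justify by the per-node independence of actions. For collusion, I would invoke the remark after Definition~\ref{req:private_decisions_no_evidence}: node $i$ cannot produce distinguishing evidence before the deadline, so the early-reveal strategy of Appendix~\ref{app:early_reveal_attacks} is unavailable.
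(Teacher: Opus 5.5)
Your proposal is correct and follows essentially the same route as the paper. Like you, the paper maps the strategy space $\{\alert,\noalert\}$ to Alert Capability, the simultaneous information structure to Deniability, and the payoffs of Equations~\ref{eq:sim_payoff} and~\ref{eq:sim_payoff_adv} to the $\penaltySlash$ penalty together with Reward Distribution; the paper just asserts that Deniability hides others' actions, so your hybrid argument over whole profiles $a_{-j}$ and your explicit no-alert case add rigor without changing the approach.
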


To prove this proposition, we show a mapping between the constraints enforced by the properties and the game-theoretic elements of the simultaneous alerting game.
\begin{proof}
  Let~$P$ be a $\penaltySlash$-alerting protocol that runs in constant time and satisfies the three properties.
  Let~$n$ be the number of nodes participating in the protocol.
  First, the players in the game correspond to the nodes in the protocol and the adversary.
  Second, each node~$i$ can either follow the protocol and alert when needed, or do anything else which is considered as not alerting.
  By the Alert Capability property, each node can always alert before the alerting period ends.
  I.e.,  node~$i$'s possible actions are~$\alert$ or~$\noalert$ as in the simultaneous alerting game. 
  For the adversary, she can offer a bribe~$\beta_i\ge 0$ to each node~$i$.
  Third, by the Deniability property, no party can distinguish between an execution where a node~$i$ alerts and another execution where node~$i$ does not alert before the alerting period ends.
  Therefore, no party can observe other nodes' actions before the alerting period ends.
  Thus, each node~$i$ chooses its action~$a_i$ without knowledge of any other node's choice~$a_j$.
  Similarly, the adversary offers bribes without prior knowledge of nodes' actions.
  Finally, by the definition of a $\penaltySlash$-alerting protocol, if a node deviates and does not alert when needed, it pays a penalty of~$\penaltySlash$.
  And by the Reward Distribution property, all nodes that alerted symmetrically share the slashed penalties.
  Therefore, if at least one node sends a valid alert, all non-alerting nodes pay a penalty of~$\penaltySlash$ and all alerting nodes share the slashed value equally as a reward.
  For the adversary, if no node alerts, she gains~$\advGain$ but pays all bribes.
  If at least one node alerts, she gains nothing but pays the bribes to non-alerting nodes.
  Thus, the node's and adversary's revenues in the protocol match the players' utilities defined in Equations~\ref{eq:sim_payoff} and~\ref{eq:sim_payoff_adv}, respectively.
  Since the protocol enforces the strategy space, information structure, and payoffs defined in the game, we conclude it implements the simultaneous alerting game.
\end{proof}

\subsection{Lockstep Alerting Protocol}
\label{sec:synchronous_protocol}
We present the \emph{Lockstep alerting} protocol (illustrated in Figure~\ref{fig:lockstep_alerting}) that realizes the simultaneous alerting game in the lockstep network model, where the time to write a transaction to the blockchain is a fixed constant~$\NetworkWrite$.
This network model is a strong variant of the standard synchronous network model that serves two purposes:
First, it allows us to illustrate the design of an alerting protocol that implements the simultaneous alerting game in a simple manner.
Second, it serves as a building block for a more complex protocol that works under relaxed assumptions and uses trusted-hardware components to implement the same knowledge constraints.

\begin{figure}[t]
\centering
\includegraphics[width=0.48\textwidth]{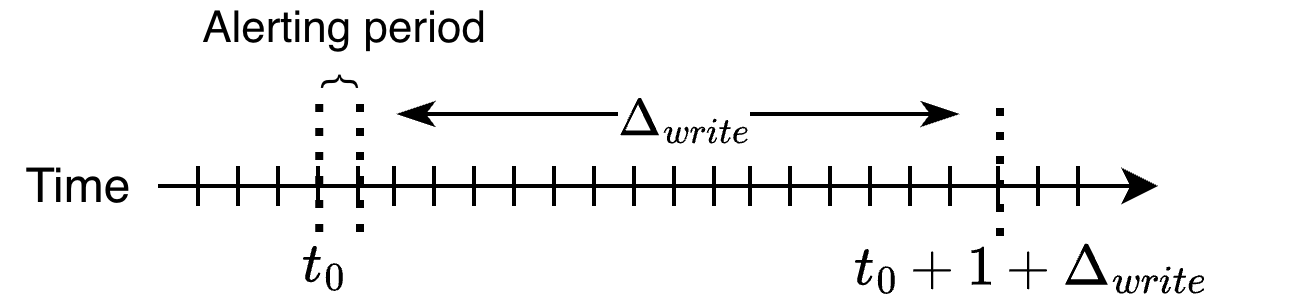}
\caption{Lockstep protocol: nodes alert during the same timestep.}
\label{fig:lockstep_alerting}
\end{figure}

\subsubsection{Protocol Description}
The Lockstep alerting protocol uses the smart contract~\AlertsCon to manage the alerting process.
In this protocol, the alerting period is of length one time step.
Nodes decide whether to alert or not prior to the alerting period and if a node decides to alert, it does so by sending an alert transaction during the alerting period.
An alert transaction includes a hash of a proof~$\Pi_\alert$ of the validity of her alert.

Once~$\NetworkWrite$ steps elapse after the alerting period, the smart contract checks if it received alert transactions.
Denote the block produced~$\NetworkWrite$ steps after the alerting period by~$B$.
By the lockstep assumption, any transaction sent during the alerting period is guaranteed to be included in block~$B$, and any transaction sent after that time is guaranteed not to be included in the same block.
If at least one node alerted, the protocol penalizes all non-alerting nodes with the constant penalty~$\penaltySlash$, and rewards all alerting nodes by sharing the slashed penalty symmetrically.
So if~$y < n$ nodes did not alert, each alerting node receives a reward of~$\penaltySlash \frac{y}{n-y}$.

\subsubsection{{Implementing the simultaneous alerting game.}}
We show that the  Lockstep alerting protocol implements the simultaneous alerting game.
We map the participants in the protocol and their actions to the players and actions in the game, and prove that the protocol satisfies Alert Capability, Deniability, and Reward Distribution. 

We first show that the information each player has when choosing its action is independent of other players' actions, as is the case in the simultaneous alerting game.

\begin{lemma}
\label{lem:lockstep_deniability}
The Lockstep alerting protocol achieves Deniability.
\end{lemma}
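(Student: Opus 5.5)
The plan is a direct simulation (indistinguishability) argument that exploits the lockstep timing guarantee. Fix an execution~$E_0$ of the Lockstep alerting protocol in which node~$i$ alerts, i.e., sends an alert transaction at the single alerting time step~$t_0$. We construct~$E_1$ identically to~$E_0$: same bribe offers, same actions and random coins of all nodes~$j\neq i$ and of the adversary, and same blockchain contents up to block~$B$. The only change is that node~$i$ does not send its alert transaction at~$t_0$ (symmetrically, if~$i$ does not alert in~$E_0$, we add the alert transaction at~$t_0$ in~$E_1$). We then argue that the views of every other party coincide in~$E_0$ and~$E_1$ up to the point where they must have fixed their own actions.

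The key steps, in order, are as follows. (1) Identify when each party must commit to its action. A node~$j$ acts only by sending (or not sending) its transaction during the alerting step~$t_0$. The adversary's bribe vector is fixed before that step, and any conditioning of bribes on behavior is resolved only after block~$B$. (2) Characterize what a party's view contains: the blockchain state, messages received from previous steps, and its local state and bribe offer. (3) Use the lockstep assumption: a transaction sent at step~$t_0$ is appended to the chain exactly at step~$t_0+\NetworkWrite$, in block~$B$, and not earlier. Hence up to and including step~$t_0$, the blockchain states in~$E_0$ and~$E_1$ are identical. (4) Argue that the protocol prescribes no off-chain messages, and that the decision is made locally before the alerting step. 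The alert transaction itself carries only a hash of~$\Pi_\alert$ and is not visible before inclusion. So node~$i$'s decision is reflected nowhere in any other party's view before~$t_0+1$. (5) Conclude that no node~$j\neq i$ and no adversary can distinguish~$E_0$ from~$E_1$ before choosing their actions, which is exactly Deniability.

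The main obstacle is step (4): the possibility that node~$i$ voluntarily leaks its decision, e.g., by colluding with the adversary and sending a proof or signature of its intended transaction off-chain before~$t_0$. I would handle this as follows. Any such pre-announcement is not binding: node~$i$ can announce an intention and still deviate at step~$t_0$, and it can produce the same announcement in~$E_1$. Moreover, the signed transaction, if shown early, proves only that~$i$ \emph{could} alert, not that it \emph{will} submit it at~$t_0$. Thus for every leaked message in~$E_0$ there is an identical message in~$E_1$, and the simulation goes through. The observation after Definition~\ref{req:private_decisions_no_evidence} is exactly the form needed: nothing~$i$ sends before~$t_0+1$ constitutes evidence of its action. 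Finally, the same argument also covers pre-mempool observation, since under lockstep the first moment any party can learn the transaction's effect is block~$B$, which is after all actions are fixed.
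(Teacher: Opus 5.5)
Your proposal is correct and follows the paper's route. Like the paper, you build $E_1$ identical to $E_0$ except for node $i$'s transmission at the single alerting step $t_0$, use the lockstep guarantee and the step model (anything sent at $t_0$ reaches no one before $t_0+1$, after all actions are fixed) to conclude indistinguishability, and handle the symmetric case the same way. Your explicit non-bindingness argument for off-chain pre-announcements spells out what the paper leaves implicit by making the two executions coincide before $t_0$; note only that the remark after the Deniability definition is a consequence of Deniability, so it states your goal rather than supporting it.
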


\begin{proof}
Let~$t_0$ be the start of the alerting period and~$B$ the block produced at~$t_0+1+\NetworkWrite$. 
First, notice that no node can alert strictly after time~$t_0$, since its alert transaction would not be included in block~$B$.
Consider a node~$i$ and an execution~$E_0$ of the alerting protocol where in~$E_0$ node~$i$ sends an alert transaction at time~$t_0$ (i.e., chooses~$\alert$).  
Consider a second execution~$E_1$ where all nodes and the adversary behave identically to~$E_0$ until time~$t_0+1$ except for node~$i$, which does not alert (i.e., chooses~$\noalert$).
Before time~$t_0$, both executions are identical, and any message sent at time~$t_0$ becomes available no earlier than time~$t_0+1$ by the lockstep assumption.
In particular, an alert transaction sent by node~$i$ in~$E_0$ becomes public on-chain only after~$\NetworkWrite$ steps, i.e., at time~$t_0+1+\NetworkWrite > t_0+1$.
Therefore, before the end of the alerting period at time~$t_0+1$, no party but node~$i$ itself can distinguish between executions~$E_0$ and~$E_1$.
In the same manner, for any execution~$E_0$ where node~$i$ does not alert, we can construct an execution~$E_1$ where node~$i$ alerts and no party can distinguish between the two executions before time~$t_0+1$, proving Deniability.
\end{proof}

\begin{theorem}
\label{thm:lockstep_implements_simultaneous}
The Lockstep alerting protocol is a $\penaltySlash$-alerting protocol that implements the simultaneous alerting game.
\end{theorem}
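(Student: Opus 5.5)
The proof reduces to Proposition~\ref{prop:sufficiency}: we check that the Lockstep alerting protocol is a constant-time $\penaltySlash$-alerting protocol and that it satisfies Alert Capability, Deniability, and Reward Distribution. Deniability is already Lemma~\ref{lem:lockstep_deniability}, so two properties and the structural conditions remain.

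\emph{Structural conditions.} The contract funds nothing from an operator budget, so $\operatorCost=0$, and it slashes every non-alerting node by exactly the constant $\penaltySlash$. Hence it is a $(0,\penaltySlash)$-, i.e., $\penaltySlash$-alerting protocol. It is constant-time because the alerting period lasts one time step and settlement happens at block $B$, exactly $\NetworkWrite$ steps later. Both quantities are independent of $n$.

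\emph{Alert Capability.} Any node $i$ that decides to alert sends its alert transaction, containing the hash of a valid proof $\Pi_\alert$, at time $t_0$. By the lockstep assumption this transaction is appended exactly $\NetworkWrite$ steps later, so it lies in block $B$. Since the contract counts precisely the valid alerts in $B$, node $i$'s alert is registered within the alerting period.

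\emph{Reward Distribution.} Let $\alerters$ be the set of nodes whose valid alert appears in $B$, and let $y=n-|\alerters|$. If $\alerters\neq\emptyset$, the contract slashes each of the $y$ non-alerters by $\penaltySlash$ and pays each alerter $\penaltySlash\frac{y}{n-y}=\penaltySlash\frac{n-|\alerters|}{|\alerters|}$. All slashed funds therefore go, in equal shares, to the alerters, which is a symmetric split and matches Equation~\ref{eq:sim_payoff}. If $\alerters=\emptyset$, no one is slashed or paid.

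With all three properties in hand, Proposition~\ref{prop:sufficiency} gives that the protocol implements the simultaneous alerting game. The one delicate point is timing: Alert Capability and Deniability must use the same boundary of block $B$. Under the lockstep assumption, transactions sent at $t_0$ are in $B$ and later ones are not, so alerts cannot be made reactively after seeing others' actions. This is exactly the fact already used in Lemma~\ref{lem:lockstep_deniability}, so no new argument is needed there.
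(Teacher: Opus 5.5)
Your proposal is correct and follows the paper's proof: you establish Alert Capability from the lockstep guarantee that a transaction sent at $t_0$ lands in block $B$, cite Lemma~\ref{lem:lockstep_deniability} for Deniability, read Reward Distribution off the payout rule $\penaltySlash\frac{y}{n-y}$, and conclude by Proposition~\ref{prop:sufficiency}. Your explicit check that the protocol is a constant-time $(0,\penaltySlash)$-alerting protocol, so that the proposition's hypotheses hold, is a small addition that the paper leaves implicit.
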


\begin{proof}
We prove that the game induced by the Lockstep alerting protocol is the simultaneous alerting game.
The players in the game correspond to the nodes in the protocol and a bribing adversary.
In an execution of the protocol, we map the action of node $i$ to~${a_i:= \alert}$ if node~$i$ sends an alert transaction during the alerting period, and~${a_i:=\noalert}$ otherwise.
A node can choose the action~$\noalert$ by simply not sending an alert transaction during the alerting period.
A node that decides to alert can send a valid alert transaction during the alerting period, ensuring (due to the lockstep assumption) that it is included in the corresponding block.
Therefore, the protocol achieves Alert Capability.

By Lemma~\ref{lem:lockstep_deniability}, the protocol achieves Deniability.
Finally, by definition of the protocol, if at least one node sends a valid alert transaction during the alerting period, all non-alerting nodes pay a penalty of~$\penaltySlash$, and all alerting nodes share the slashed value equally as a reward.
Therefore, the protocol achieves Reward Distribution.

To conclude, we proved that the Lockstep alerting protocol achieves the three sufficient properties and thus by Proposition~\ref{prop:sufficiency} induces the simultaneous alerting game.
\end{proof}


The Lockstep protocol requires zero transactions when no alert is needed, and~$n$ transactions otherwise.
The protocol runs in constant time, and the cost to bribe all nodes and cause its failure is~$\penaltySlash n (n-1)$ as shown in Lemma~\ref{lem:bribe-lower-bound}.

\begin{note2}[On the Lockstep Network Model]
  We require the strict synchrony assumption that writing takes exactly~$\NetworkWrite$ steps because the standard synchrony assumption does not specify a model, 
  e.g., probability distribution, over write times less than~$\NetworkWrite$. 
  One could consider a simple probabilistic model where a message sent at time~$t$ is included in the blockchain at time~$t+\NetworkWrite$ with probability~$1$, but messages sent at time~$t<t'<t+\NetworkWrite$ are included at time~$t+\NetworkWrite$ with probability~$0<p(t)$, decreasing with~$t$.
  It is still feasible to use the Lockstep alerting protocol in such a model, however, we get a tradeoff between the Alert Capability and Deniability properties.
\end{note2}




\subsection{Hardware-Based Alerting Protocol}
\label{sec:hardware_based_protocol}
The lockstep network model in which the Lockstep alerting protocol operates does not usually hold in practice~\cite{tx-done-yet,finalization_time_blockchains}.
To relax this assumption we propose a second protocol, the \emph{TEE-based alerting} protocol (illustrated in Figure~\ref{fig:commit_reveal}).
This protocol works under the standard bounded-delay network synchrony model, where the time to write a transaction to the blockchain is not constant but bounded by~$\NetworkWrite$ steps.
We start with a high-level overview of the protocol, where we present its main components, including a Proof of Publication (PoP) primitive~\cite{Ekiden}.
Finally, we present the protocol in detail and show that it implements the simultaneous game.

\begin{figure}[t]
\centering
\includegraphics[width=0.48\textwidth]{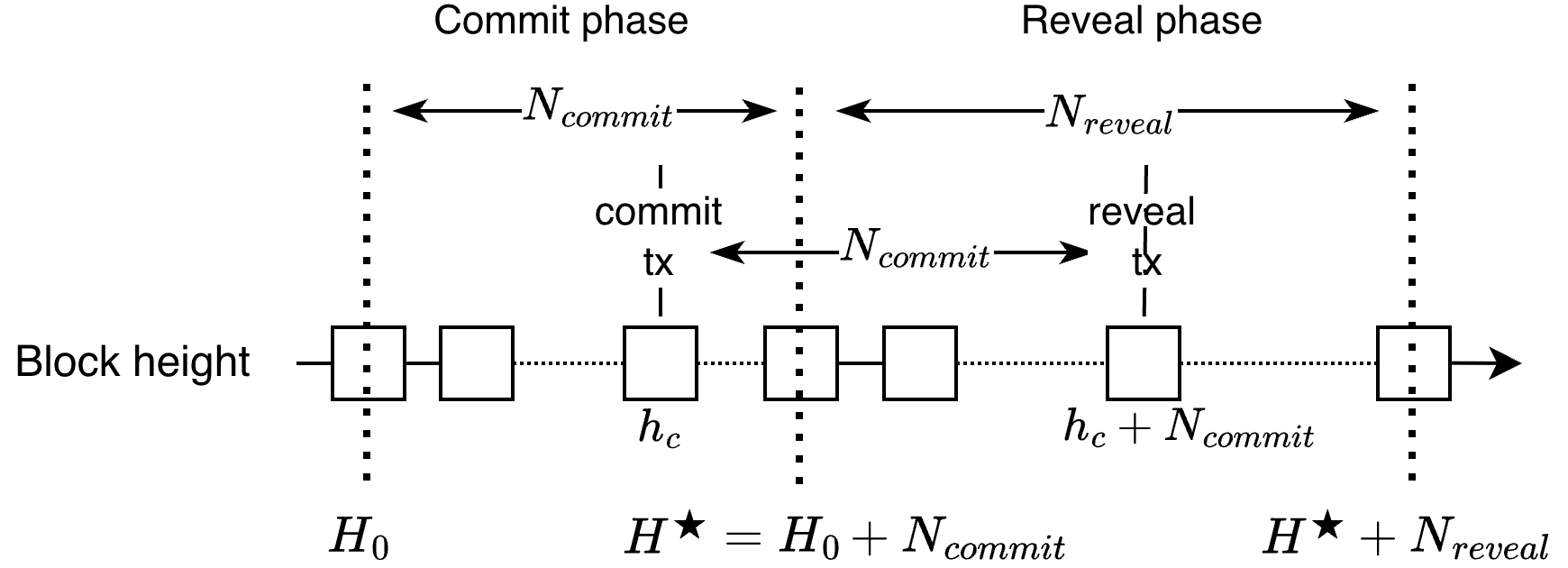}
\caption{TEE-based alerting protocol: all nodes commit during the commitment phase, then reveal after~$\CommitBlocks$ blocks. }
\label{fig:commit_reveal}
\end{figure}

\subsubsection{Motivation and Overview}
First, notice that if we were to use the Lockstep alerting protocol, a node could send its alert transaction after the alerting period ends and, with some probability, still have it included in the blockchain in time for the alerting round. 
This means that if a node sends its action early, other nodes can observe evidence of it and choose their actions accordingly, thereby breaking the information structure of the simultaneous alerting game.
Conversely, if a node delays and sends its action late, less than~$\NetworkWrite$ before the smart contract checks for alerts, it risks the transaction being excluded from the alerting block, violating the action structure of the game (as this node might not be able to choose the~$\alert$ action).
A naive solution to this problem is to have each node encrypt its action before sending it, and reveal the decryption key after the alerting period ends.
While this approach ensures a node can always alert if it chooses to, it fails to achieve Deniability. 
An adversary can  bribe a node to reveal the decryption key early, allowing the adversary to know that the node chose a specific action before the alerting period ends, resulting in an attack that costs a sub-quadratic bribe.
In Appendix~\ref{app:early_reveal_attacks}, we give an example of such an attack and show that bribery cost is~$\Theta(n\cdot \log n)$. 

To prevent such early proofs of action we design a timed commitment scheme that uses \emph{trusted execution environments} (TEEs) and a \emph{Proof of Publication~(PoP)} primitive~\cite{Ekiden} to ensure that nodes cannot reveal their actions before a minimal amount of time elapsed.

Each node has a trusted execution environment: a hardware-backed execution environment that runs a program~$T$ while protecting it from a potentially malicious host. 
Concretely, it provides (1)~\emph{confidentiality} for~$T$'s code and data during execution, so the host cannot read ~$T$'s memory, 
(2)~\emph{integrity} for~$T$'s code and data, so the host cannot tamper with~$T$'s memory, 
(3)~\emph{sealed storage}, which lets~$T$ persist secrets (e.g., nonces) across reboots, and
(4)~\emph{remote attestation}, where the device produces and signs an \emph{attestation} using a device-specific private key whose public key is certified by the hardware vendor, allowing others to verify which program is running in the TEE. 
Thus, for a program~$T$ running inside a TEE, the host cannot observe or modify~$T$'s internal state, and the only publicly verifiable outputs are \emph{attestations} that~$T$ explicitly produces and that others can use to verify~$T$'s execution.
Only the node can interact with its TEE, and the adversary cannot observe or interfere with this interaction.
Critically, the TEE cannot directly access any external information, it relies on the node to supply data for verification, which the TEE then validates cryptographically.

The TEE alerting protocol divides the alerting period into two phases, a \emph{commitment phase} and a \emph{reveal phase}. 
We define the length of these phases in terms of blocks.
The commitment phase lasts for~$\CommitBlocks$ blocks, and the reveal phase lasts for~$\RevealBlocks$ blocks, where~${\RevealBlocks > \CommitBlocks}$.
Both phases are longer than the maximum network delay, i.e., assuming a block is added to the blockchain every~$\blockInterval$ steps, we have~${\CommitBlocks, \RevealBlocks \geq \lceil \NetworkWrite / \blockInterval \rceil}$.
Let~$H_0$ be the block height at the start of the alerting period, and define a \emph{reveal barrier} to be~${H^\star := H_0 + \CommitBlocks}$.
Nodes submit their commitments during the commitment phase, and reveal their actions~$\CommitBlocks$ blocks after they commit, during the reveal phase.

\smallskip\noindent{\textbf{{Blockchain Structure and Block Headers.}}}
A blockchain is a chain of blocks, each containing transactions and a \emph{header} that cryptographically commits to both the block's transactions and the previous block's header.
Additionally, this structure allows to verify the presence of any specific transaction at a specific point in the blockchain.
The TEE uses this structure to verify block progression and that a transaction was published at a particular block height.

\smallskip\noindent{\textbf{{Proof of Publication (PoP) primitive.}}}
Upon first execution, the TEE program~$T$ loads a hardcoded trusted checkpoint, for example, the first block on the same chain as \AlertsCon. 
The TEE updates this checkpoint as the node proves the addition of new blocks.
This ensures that all subsequent block verifications are cryptographically tied to the correct blockchain where the checkpoint is.
To verify that sufficient time has elapsed, the TEE validates a chain of {block headers} input by the node, where each header commits to its predecessor.
This allows us to instantiate a \emph{proof-of-publication (PoP)} protocol~\cite{Ekiden}, to prove a transaction was published before the end of the commitment phase.
Specifically, to verify that at least~$k$ blocks have passed since a transaction~$\tx$ reached the blockchain, before the node sends the transaction, the TEE stores a checkpoint header~$C_B$ and generates a random challenge~$r$.
The node creates the transaction~$\tx$ that is cryptographically tied to~$r$ (e.g., includes a hash of~$r$) and sends the transaction~$\tx$ to the blockchain.
To prove that at least~$k$ blocks have elapsed since~$\tx$, the node sends to the TEE a sequence of block headers that starts from the checkpoint~$C_B$, and includes all the block headers up to the block that includes~$\tx$ and the next~$k$ headers.
I.e., it sends a subchain~${\Pi_\textit{PoP}=(C_B,B_1,\dots,B_l)}$, in which the transaction~$\tx$ tied to~$r$ appears in block~$B_i$ and~$l-i \ge k$, which the TEE verifies.
This provides a TEE-verifiable guarantee that at least~$k$ blocks have elapsed~since~$\tx$. 

{
After successfully verifying a PoP whose final header is~$B_l$, the TEE updates its internal checkpoint to~$B_l$.
The TEE maintains this checkpoint monotonically.
This prevents rollback across protocol rounds and ensures that all PoP verifications are performed with respect to a non-decreasing view of the blockchain.
}

{Crucially, the PoP must be bound to the same chain that hosts $\AlertsCon$.
An adversary could otherwise feed the TEE~$k$ headers from some other chain (e.g., a faster side chain) to claim that the deadline has passed, which says nothing about progress on the blockchain where \AlertsCon is deployed.
To prevent this, the TEE accepts only PoP instances tied to the same chain as \AlertsCon: it verifies the chain's identity via the checkpoint block hash and checks that the commitment transaction~$\tx$ appears on that chain.
This ties the notion of elapsed time to the same blockchain that enforces rewards and penalties.}

{We assume that no single node nor the adversary can fork the blockchain after a sufficiently recent checkpoint block~$C_B$.
In proof-of-work blockchains, creating a fork requires significant computational resources that individual nodes lack.
In proof-of-stake blockchains, creating a fork would result in slashing, which rational nodes avoid.}

In our case, nodes use the TEE to commit to their chosen action.
We use PoP to ensure that a node can only reveal its committed action after~$\CommitBlocks$ blocks have been added to the blockchain since it published its commitment.
Since the reveal period is longer than the commit period~($\RevealBlocks > \CommitBlocks$), all nodes have sufficient time to reveal their actions after the commitment phase ends.


\subsubsection{Participants and Interfaces}
Each node runs a program~$T$ inside a TEE. 
{We denote by~$\mathsf{Hash}$ a cryptographic hash function.}
The program maintains a trusted checkpoint of the blockchain (e.g., the hash of the first block) and exposes the following API:
\begin{itemize}
    \item \textsf{Seal}$(m, \CommitBlocks)$: 
    Takes a message~$m$ (the action and, if an alert is needed, a hash of a proof that it is, otherwise,~$\bot$) and the required delay in blocks~$\CommitBlocks$. 
    It generates a random nonce~$r$, and stores the tuple $(m, r, \CommitBlocks)$ in its sealed storage.
    It returns a public commitment handle~${\gamma=\mathsf{Hash}(m,r,\CommitBlocks)}$.    
    \item \textsf{Unseal-After}$(\gamma, \Pi_\textit{PoP})$: 
    Takes the commitment handle~$\gamma$ and a Proof of Publication~$\Pi_\textit{PoP}$ provided by the host. 
    The proof~$\Pi_\textit{PoP}$ contains a sequence of headers and an inclusion proof of the commitment transaction.
    The TEE verifies that:
    (1)~The headers form a valid chain extending from its trusted checkpoint;
    (2)~The commitment~$\gamma$ is included in a block within this chain; and
    (3)~At least~$\CommitBlocks$ blocks have been appended after the inclusion block.
    If valid, it returns the secret opening~$(m,r)$ and a signed attestation~$A$, {then updates its checkpoint to the last header in~$\Pi_\textit{PoP}$}.
\end{itemize}

  The smart contract $\AlertsCon$ records commitments in an array \textit{commit}, validates timed reveals, and stores valid reveals in an array \textit{reveal}.
  Both arrays are indexed by node identifiers and initialized to~$\bot$.
  It also verifies the TEEs' remote attestations~\cite{chen2025agora, zhang2024teamwork, socrates1024_2023_demystifying, marlin2024nitro} to ensure the node performed the time-check.

\subsubsection{Protocol Description}
  
We detail the commitment and reveal phases of the TEE-based alerting protocol.
The full protocol details can be found in Appendix~\ref{app:tee_based_protocol}.

\textbf{Commitment Phase: }
Each node chooses an action~${a \in \{\alert, \noalert\}}$.
If $a=\alert$, it generates a validity proof~$\Pi_\alert$ and computes its hash $h_\alert = \mathsf{Hash}(\Pi_\alert)$. 
If ${a=\noalert}$, it sets ${h_\alert = \bot}$.
The node calls~${\textsf{Seal}(m= a \mathbin\Vert h_\alert, \CommitBlocks)}$ to generate the commitment handle~$\committment$ and sends a transaction containing~$\committment$ to the blockchain.
Since the commitment phase length is greater than~$\NetworkWrite$, all nodes can ensure their commitments are included.

\textbf{Reveal Phase: }
After the commitment phase concludes, and after~$\CommitBlocks$ blocks were added to the blockchain after the commitment transaction, the node constructs a Proof of Publication~$\Pi_\textit{PoP}$ consisting of the block headers added to the chain since the last checkpoint that the TEE maintains.
It calls~$\textsf{Unseal-After}(\committment, \Pi_\textit{PoP})$ locally in the TEE.
The TEE verifies the proof and, if sufficiently many blocks have been added to the blockchain, returns the tuple~$(m,r)$ containing the message and the nonce and an attestation~$A$.
The node then sends a reveal transaction containing the action~$a$, the proof~$\Pi_\alert$, the nonce~$r$, and the attestation~$A$.

\textbf{On-Chain Verification: }
After the reveal phase ends, the smart contract~$\AlertsCon$ verifies the following for the input from each node~$i$:
First, that the commitment~$\committment_i$ was published during the commitment window, i.e., after height~$H_0$ and before height~$H^\star$.
Then that the attestation~$A$ is valid and signed by a trusted TEE (e.g., as in~\cite{chen2025agora, zhang2024teamwork, socrates1024_2023_demystifying, marlin2024nitro}).
Finally, that the revealed values $(a, h_\alert, r)$ match the commitment~$\committment_i$, and
if $a=\alert$, the hash of the full proof~$\Pi_\alert$ matches~$h_\alert$ and that the proof~$\Pi_\alert$ itself is valid.

\textbf{Rewards, Penalties, and Triggering Alerts: }
If at least one valid alert is revealed, the protocol triggers an on-chain alert, penalizes all non-alerting nodes with penalty~$\penaltySlash$, and rewards all alerting nodes by sharing the slashed penalty symmetrically.
If a node does not send a commitment during the commitment phase, does not reveal its action during the reveal phase, or sends an invalid reveal, the node is considered a non-alerter.
In this case, the protocol triggers an alert on-chain regardless of other nodes' actions.

\begin{note2}[Missing and Incorrect Commitments/Reveals]
  \label{note:incorrect_commit_reveal}
If a node sends an incorrect commitment or reveal transaction, or does not send one at all, it triggers an alert on-chain regardless of other nodes' actions.
An adversary whose goal is to prevent alerts will not benefit from causing nodes to miss or incorrectly send their commitments or reveals and thus will not bribe nodes to do so.  
Instead it may bribe nodes to choose the~$\noalert$ action when an alert is needed.
\end{note2}

\subsubsection{Implementing the simultaneous alerting game}
We show that the TEE-based protocol implements the  simultaneous alerting game.
We start by mapping the participants in the protocol to the players in the game.
The set of nodes in the protocol, each with its own TEE, corresponds to the set of nodes in the game and the adversary in the protocol corresponds to the adversary in the game.
We then map the actions of nodes in the protocol to actions in the game and show that nodes can always commit and reveal within the respective phases.
Then, we show that nodes' actions remain private until the reveal phase ends, and that no transferable evidence of action choice exists before the reveal phase ends, proving that the protocol preserves the information structure of the game.
Finally, we show that the payoffs in the protocol match those in the game.

For each node~$i\in\mathcal{N}$, we define its \emph{induced action}~${a_i \in \{\alert,\noalert\}}$  in an execution of the TEE-based alerting protocol
as follows.
If a node follows the protocol and sends an alert, its induced action is~$\alert$; otherwise, its induced action is~$\noalert$.

To show that the two actions above are feasible, we prove that nodes can always commit and reveal within the respective phases.
\begin{lemma}
\label{lem:timing-feasibility}
\lemmaTEETimingFeasibility
\end{lemma}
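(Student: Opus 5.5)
The plan is to show directly that any node $i$, whichever action it picks, can get a valid commit transaction included during the commit window $[H_0,H^\star)$. It must then be able to obtain a valid PoP and get a reveal transaction included during the reveal window $[H^\star,H^\star+\RevealBlocks)$. Together these give Alert Capability in the sense of Definition~\ref{req:alert_capability}. I would treat the two phases separately, using only the bounded-delay synchrony assumption (inclusion within $\NetworkWrite$ steps), the window lengths $\CommitBlocks,\RevealBlocks\ge\lceil\NetworkWrite/\blockInterval\rceil$ with $\CommitBlocks<\RevealBlocks$, and the interface guarantees of \textsf{Seal}, \textsf{Unseal-After} and the PoP primitive.

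\textbf{Commit phase.} Let $t_0$ be the time of height $H_0$, so the commit window spans $[t_0,t_0+\CommitInterval)$ with $\CommitInterval\ge\NetworkWrite$. Local work (forming $\Pi_\alert$, hashing, calling \textsf{Seal}) is treated as negligible. A node that sends its commit transaction at any time $t\le t_0+\CommitInterval-\NetworkWrite$ therefore has it included by $t+\NetworkWrite$, which is inside the window. In particular, sending at $t_0$ always works, so the contract records $\textit{commit}[i]$ at some height $h_c$ with $H_0\le h_c<H^\star$.

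\textbf{Reveal phase.} This is the step needing the most care. By construction, \textsf{Unseal-After} succeeds only once $\CommitBlocks$ blocks follow the block containing $\committment_i$, that is, at height $h_c+\CommitBlocks$. I must show this height is reached early enough inside the reveal window to leave $\NetworkWrite$ time for the reveal transaction. From $h_c<H_0+\CommitBlocks$ we get $h_c+\CommitBlocks<H_0+2\CommitBlocks$. Since $\CommitBlocks<\RevealBlocks$, this is less than $H_0+\CommitBlocks+\RevealBlocks$, the end of the reveal window. Also $h_c+\CommitBlocks\ge H^\star$, so unsealing never happens before the barrier.

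Once that height is reached, the chain headers give a valid $\Pi_{\textit{PoP}}$ extending the TEE's checkpoint, since the chain cannot be forked by the node or adversary. The TEE returns $(m,r)$ and the attestation $A$, and the node sends its reveal.

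\textbf{Main obstacle.} The strict inequality above gives only one block of slack, not $\NetworkWrite$ time. To close the gap I would argue with the integer margin more carefully. An early commit (at $t_0$) lands by height about $H_0+\lceil\NetworkWrite/\blockInterval\rceil\le H_0+\CommitBlocks$, so unsealing is possible by roughly $H_0+2\CommitBlocks$. The remaining $\RevealBlocks-\CommitBlocks$ blocks are then used, taking $\RevealInterval\ge\NetworkWrite$ in the sense needed (possibly by strengthening the assumption to $\RevealBlocks\ge\CommitBlocks+\lceil\NetworkWrite/\blockInterval\rceil$). With that, synchrony places the reveal transaction on-chain before the reveal window closes. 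The contract's checks (commit height, attestation binding, hash opening, and $\Pi_\alert$ validity when $a=\alert$) then all pass by construction, so node $i$'s alert is counted.
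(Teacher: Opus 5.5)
Your decomposition matches the paper's proof: the commit phase via bounded-delay synchrony and $\CommitInterval\ge\NetworkWrite$, then the reveal phase via $h_c+\CommitBlocks<H_0+2\CommitBlocks<H^\star+\RevealBlocks$. The paper closes the reveal phase by observing that $\RevealInterval\ge\NetworkWrite$ makes any reveal sent at least $\NetworkWrite$ steps before the window closes land in time. It never shows that the node can unseal that early. The obstacle you flag is therefore a genuine hole, and it is in the paper's argument too.

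Under the stated hypotheses the hole cannot be closed. Take $\NetworkWrite=w\,\blockInterval$ with $w\ge 2$, $\CommitBlocks=w$ and $\RevealBlocks=w+1$. This satisfies $\CommitBlocks<\RevealBlocks$ and $\CommitBlocks,\RevealBlocks\ge\lceil\NetworkWrite/\blockInterval\rceil$. A commit sent at $t_0$ may be delayed to height $h_c=H^\star-1=H_0+w-1$. The PoP then needs the header at height $H_0+2w-1$, so the reveal can enter the chain no earlier than height $H_0+2w$. That is the last height of the reveal window $[H_0+w,H_0+2w+1)$, yet synchrony allows the reveal to be delayed to height $H_0+3w-1>H_0+2w$. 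So Alert Capability does not follow from $\CommitBlocks<\RevealBlocks$ and $\RevealInterval\ge\NetworkWrite$ alone.

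Your argument therefore becomes a correct proof only once you adopt the extra hypothesis outright, not ``possibly''. Write $w=\lceil\NetworkWrite/\blockInterval\rceil$ and assume $\RevealBlocks\ge\CommitBlocks+w$. Then any $h_c\le H^\star-1$ lets the node unseal once height $H^\star+\CommitBlocks-1$ exists. The reveal is included by height $H^\star+\CommitBlocks-1+w<H^\star+\RevealBlocks$, whenever in the window the node committed.

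Drop the ``integer margin'' step. As written, bounding an early commit by $H_0+w\le H_0+\CommitBlocks$ only reproduces the bound $h_c+\CommitBlocks\le H_0+2\CommitBlocks$ you already had, so the added hypothesis does all the work. (If you kept the sharper $h_c\le H_0+w$ for a commit sent at $t_0$, you would need only roughly $\RevealBlocks>2w$. That is weaker than your condition when $\CommitBlocks>w$, but it is still not implied by the paper's assumptions.)

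The same boundary care is needed in your commit phase. The window $[t_0,t_0+\CommitInterval)$ is half-open, so $t+\NetworkWrite\le t_0+\CommitInterval$ does not place the commit inside it. With $\CommitInterval=\NetworkWrite=w\,\blockInterval$, a commit sent at $t_0$ may land exactly at height $H^\star$ and be rejected.
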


The proof relies on the length of the commitment and reveal phases being longer than the maximum network delay~$\NetworkWrite$, and the reveal phase being longer than the commitment phase.
The proof is given in Appendix~\ref{app:tee_alert_capability}.

We now turn to show that the TEE-based alerting protocol preserves the information structure of the simultaneous alerting game.

\begin{lemma}
\label{lem:action-privacy-non-verifiability}
The TEE-based alerting protocol achieves Deniability.
\end{lemma}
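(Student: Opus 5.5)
The plan is to mirror the proof of Lemma~\ref{lem:lockstep_deniability}. Given an execution $E_0$ where node~$i$ has induced action $\alert$ (the $\noalert$ case is symmetric), I will construct an execution $E_1$ in which node~$i$ follows the protocol identically except that it seals $m'=(\noalert \mathbin\Vert \mathsf{Hash}(\bot))$ instead of $m=(\alert\mathbin\Vert h_\alert)$, and every other party behaves exactly as in $E_0$. I will then argue that, before the reveal barrier (i.e., before any node's own commitment can be opened, which is before other players fix their committed actions at the end of the commit window $[H_0,H^\star)$), the views of every node $j\neq i$ and of the adversary have indistinguishable distributions in $E_0$ and $E_1$.

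The first step is to enumerate everything observable by others: the on-chain commit transaction carrying $\committment_i=\mathsf{Hash}(m,r,\CommitBlocks)$, its timing and inclusion height, and any message node~$i$ (possibly colluding with the adversary) can produce from its host. For the on-chain part, I use that $r$ is a fresh random nonce generated and kept inside the TEE (confidentiality and sealed storage), so under the standard assumption on $\mathsf{Hash}$ (hiding, modeled as a random oracle for the PPT adversary) the handle $\committment_i$ is computationally indistinguishable between $m$ and $m'$. Timing and transaction format are identical by construction, since Algorithm~\ref{alg:node} sends the same kind of commit tx regardless of $a_i$; the proof $\Pi_\alert$ is only published at reveal time.

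The second, and main, step is ruling out \emph{transferable evidence} produced by a colluding node~$i$, which is exactly the early-reveal attack of Appendix~\ref{app:early_reveal_attacks}. Node~$i$ knows its own $a_i$ but can only prove it via an output of its TEE. By TEE integrity, the only attested outputs are those of \textsf{Seal} (revealing only $\committment_i$) and \textsf{Unseal-After}, which releases $(m,r)$ and $A$ only upon a valid PoP showing $\CommitBlocks$ blocks after the block including $\committment_i$. Since $\committment_i$ lies at height $\ge H_0$, such a PoP requires the chain to reach height $\ge H_0+\CommitBlocks=H^\star$, and by the no-fork assumption after checkpoint $C_B$, the chain binding via the checkpoint hash, and the monotone checkpoint, the node cannot forge or supply side-chain headers. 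Hence no attestation (and, by hiding, no information about $r$) exists before $H^\star$. Any unattested claim or a claimed $(m,r)$ pair is worthless because the host never learns $r$. I must also note that claims like a signed statement ``I chose $\noalert$'' are unverifiable, so the adversary's and other nodes' views are simulatable identically in $E_1$.

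Finally, I conclude that all commitments of other nodes must be fixed by $H^\star$ (the contract rejects commits outside $[H_0,H^\star)$, and by Note~\ref{note:incorrect_commit_reveal} non-commitment only triggers an alert), so every other player chooses its action while $E_0$ and $E_1$ are indistinguishable, establishing Deniability. I expect the delicate point to be pinning down ``before choosing their own actions'' as the commit deadline and justifying that the PoP cannot be satisfied earlier, which hinges on the TEE and fork assumptions.
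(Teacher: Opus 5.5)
Your proposal is correct and takes essentially the same approach as the paper. Both proofs change only node~$i$'s sealed action and rely on the hiding of the nonce-based hash commitment and on TEE confidentiality, integrity, and attestation unforgeability, which make \textsf{Unseal-After} the only release path. The PoP requirement plus the no-fork assumption then rules out an opening before $H^\star$, and the contract's commit-window check ensures everyone else fixes their action before that point. Your explicit height bound $h_c+\CommitBlocks \ge H_0+\CommitBlocks = H^\star$ is a slightly sharper version of the paper's step. Like the paper, though, it tacitly assumes the TEE accepts only a PoP pointing to the contract-accepted commit transaction, not to an earlier on-chain appearance of the same handle~$\committment_i$.
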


\begin{proof}
  Consider a node~$i$ and an execution~$E_0$ in which node~$i$ alerts.
Consider another execution~$E_1$ where the adversary and all nodes but~$i$ behave as in~$E_0$, but node~$i$ chooses not to alert.
By definition of the two executions, before node~$i$ commits to its action, both executions are identical and thus indistinguishable. 

Recall that during the commitment phase, node~$i$:
\begin{enumerate}
  \item chooses an action $a_i\in\{0,1\}$,
  \item forms the proof hash~${h_\alert = \mathsf{Hash}(\Pi_\alert)}$, and
  \item invokes the TEE to compute the commitment~${\committment_i}$, the TEE samples a random nonce $r$, outputs a public handle ${\committment_i  = \mathsf{Hash}(m,r,\CommitBlocks)}$, and stores $(m,r)$ in sealed state.
\end{enumerate}
The node then sends a {commit transaction} carrying $\committment_i$ to the contract.
Denote the randomness and the resulting commitments in executions~$E_0$ and~$E_1$ by~$r^0$, $\committment_i^0$ and~$r^1$, $\committment_i^1$, respectively.

Since~$\committment_i^0$ and~$\committment_i^1$ are computed using a cryptographic hash function,
they cannot help distinguish between executions~$E_0$ and~$E_1$.
By the integrity and confidentiality properties of the TEE, the internal state (including the action and randomness) is inaccessible to any party, including the host (node~$i$) in both executions. 
By the unforgeability of TEE attestations, in both executions, the node cannot fabricate a signature claiming to be from $T$.
By the correctness property of program $T$, the only output path for the randomness and thus~the action is the \textsf{Unseal-After} function, which strictly requires a valid PoP for time $t > H^\star$. 
The \textsf{Unseal-After} function takes the commitment handle~$\committment_i$ and a Proof of Publication~$\Pi_\textit{PoP} = (C_B,B_1,\dots,B_m)$, a sequence of headers starting from the checkpoint~$C_B$, where~$\committment_i$ is included in some block~$B_j$ and ~$j+\CommitBlocks \geq m$. 
The TEE verifies that the headers form a valid chain extending from the checkpoint it has, that~$\committment_i$ is included in a block within this chain, and that at least~$\CommitBlocks$ blocks have been appended after the inclusion block.
It returns the secret opening~$(m,r)$ only if all these checks pass.
By our assumption that no single node nor the adversary can fork the blockchain after a sufficiently recent checkpoint block~$C_B$, neither node~$i$ nor the adversary can create a faster path to revealing the node's action.
Therefore, before $H^\star$, no party can obtain the randomness or action from the TEE in either execution.
Thus, in both executions, the only observable difference is the commitment~$\committment_i$, which is indistinguishable between the two executions.

We conclude that no party can distinguish, with more than negligible advantage, between execution~$E_0$ in which node~$i$ chooses $\alert$ and~$E_1$ in which it chooses~$\noalert$ before the commitment phase ends.
Since the contract checks that the commitment transactions are sent only during the commitment phase, and any commitment sent outside this phase is deemed invalid, no party can distinguish between~$E_0$ and~$E_1$ before choosing its action.  
The same argument applies when starting from an execution where node~$i$ does not alert and constructing an execution where it alerts.
\end{proof}

Finally, we prove that the payoffs in the TEE-based alerting protocol match those of the simultaneous alerting game.
\begin{lemma}
\label{lem:payoff-equivalence}
\lemmaTEEPayoffEquivalence
\end{lemma}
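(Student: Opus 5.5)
The plan is to verify Reward Distribution directly from the settlement logic of \AlertsCon in Algorithm~\ref{alg:contract}. Along the way we check that every realized payoff coincides with Equations~\ref{eq:sim_payoff} and~\ref{eq:sim_payoff_adv}, under the induced-action mapping defined above: a node has induced action $\alert$ if and only if it committed in $[H_0,H^\star)$, revealed in the reveal window with a valid attestation and matching opening, and supplied a valid $\Pi_\alert$ hashing to $h_\alert$.

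First, fix an execution and let $\alerters$ be the set of nodes whose reveal passes every check and carries $a=\alert$. This is exactly the contract's set $\alerters$, and by the mapping it is also exactly the set of nodes with induced action $\alert$. Let $N_0$ be the complement, which contains nodes that revealed $\noalert$, failed to commit, failed to reveal, or revealed invalidly. Thus $|N_0| = n-|\alerters|$.

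Next, split into two cases according to the settlement branch.
\begin{enumerate}
  \item If $\alerters\neq\emptyset$, every $j\in N_0$ is slashed $\penaltySlash$. Every $i\in\alerters$ receives the same amount $\penaltySlash|N_0|/|\alerters| = \penaltySlash\frac{n-|\alerters|}{|\alerters|}$. Hence all slashed value is distributed, and it is distributed symmetrically among the alerters, which is exactly Reward Distribution.
  \item If $\alerters=\emptyset$, no node is slashed or rewarded, so the property holds vacuously.
\end{enumerate}
Adding the bribes (paid exactly to induced non-alerters) recovers Equation~\ref{eq:sim_payoff} for nodes. On the adversary's side she gains $\advGain$ only when no \textbf{Alert()} is invoked, and pays bribes to non-alerters, which recovers Equation~\ref{eq:sim_payoff_adv}.

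The main obstacle is the subcase of the $\alerters=\emptyset$ branch in which some node committed but did not reveal. There the contract still invokes \textbf{Alert()} without any slashing, so the adversary does not obtain $\advGain$ even though $\alerters=\emptyset$. I would handle this by appealing to Note~\ref{note:incorrect_commit_reveal}: such deviations only trigger an alert, so a rational adversary never bribes for them and a rational node gains nothing from them. Consequently, on every execution reachable under rational play, $N_0\setminus N_1=\emptyset$ whenever $\alerters=\emptyset$, and the payoffs coincide with the game. Reward Distribution itself, the symmetric sharing of all slashed penalties, holds unconditionally by the case split above.
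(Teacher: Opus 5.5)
Your proposal is correct and follows essentially the paper's route. Like the paper, you read off the settlement rule of \AlertsCon: when $\alerters\neq\emptyset$ every non-alerter is slashed $\penaltySlash$ and every alerter receives $\penaltySlash\frac{n-|\alerters|}{|\alerters|}$, nothing is slashed or paid when $\alerters=\emptyset$, and the resulting payoffs match Equations~\ref{eq:sim_payoff} and~\ref{eq:sim_payoff_adv}; your explicit treatment of the commit-but-no-reveal branch via Note~\ref{note:incorrect_commit_reveal} is more careful than the paper's proof, which simply asserts that the adversary collects $\advGain$ whenever $\alerters=\emptyset$.
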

The proof follows directly from the protocol description and can be found in Appendix~\ref{app:tee_reward_dist_proof}.

\begin{theorem}
The TEE-based alerting protocol implements the simultaneous alerting game.
\end{theorem}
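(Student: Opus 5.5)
The theorem follows by assembling the three preceding lemmas and invoking Proposition~\ref{prop:sufficiency}, mirroring the proof of Theorem~\ref{thm:lockstep_implements_simultaneous}. Concretely, I would first check that the TEE-based protocol is a constant-time $\penaltySlash$-alerting protocol. Its alerting period consists of the commitment phase of $\CommitBlocks$ blocks followed by the reveal phase of $\RevealBlocks$ blocks, both fixed constants independent of $n$, so the protocol runs in constant time. Its operator budget is $\operatorCost = 0$. Every node whose induced action is $\noalert$ is slashed exactly $\penaltySlash$ when an alert is raised; this includes nodes that fail to commit, fail to reveal, or reveal invalidly.

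Next I would fix the mapping between protocol and game. Players are the nodes, each with its TEE, together with the adversary. Each node's strategy is collapsed to its induced action $a_i \in \{\alert,\noalert\}$. Any deviation other than a valid $\noalert$ reveal is folded into $\noalert$.

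Here Note~\ref{note:incorrect_commit_reveal} matters. Missing or invalid commitments and reveals trigger \textbf{Alert()} anyway, so a rational adversary has no reason to pay for them. For such a node the payoff is then no better than that of $\noalert$ in the game, and its only effect is to make suppression fail. Hence restricting the adversary's bribes to bribes for $\noalert$ is without loss of generality.

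With the mapping in place, I would cite the three properties in turn:
\begin{itemize}
\item Alert Capability follows from Lemma~\ref{lem:timing-feasibility}.
\item Deniability follows from Lemma~\ref{lem:action-privacy-non-verifiability}. It says that before the commitment phase ends, no node and no adversary can distinguish, with more than negligible advantage, an execution in which node $i$ alerts from one in which it does not. Since actions are fixed at commit time, this is exactly the simultaneity the game requires.
\item Reward Distribution follows from Lemma~\ref{lem:payoff-equivalence}.
\end{itemize}
Proposition~\ref{prop:sufficiency} then yields that the protocol implements the simultaneous alerting game.

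The main obstacle is that Deniability here holds only computationally, since it rests on hash hiding and attestation unforgeability. Proposition~\ref{prop:sufficiency} treats indistinguishability as exact. I would address this by noting that any negligible advantage changes each player's expected utility by at most a negligible amount. The equilibrium analysis, including the bound in Lemma~\ref{lem:bribe-lower-bound}, therefore holds up to negligible terms against a PPT adversary.

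A second subtlety is that the window for choosing an action, namely the commitment phase, is not the whole alerting period. I would argue that the commitment is binding: the contract rejects commitments outside $[H_0,H^\star)$ and checks that each reveal opens to the committed $\committment_i$. Consequently, information that becomes available after $H^\star$ cannot alter any node's action, so the information structure of the game is preserved.
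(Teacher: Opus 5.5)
Your proposal is correct and takes the same route as the paper, which maps players and combines Lemma~\ref{lem:timing-feasibility} (Alert Capability), Lemma~\ref{lem:action-privacy-non-verifiability} (Deniability) and Lemma~\ref{lem:payoff-equivalence} (Reward Distribution) before concluding via Proposition~\ref{prop:sufficiency}. Your extra checks (constant running time, folding missing or invalid commits and reveals into $\noalert$ via Note~\ref{note:incorrect_commit_reveal}, the negligible-advantage caveat, and commitment binding) go beyond the paper's one-paragraph argument but are consistent with it; one correction is that slashing happens only when some node validly reveals $\alert$, because if no valid alert is revealed but some node failed to commit or reveal, the contract invokes \textbf{Alert()} without slashing anyone, which is exactly why your appeal to the Note is needed.
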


\begin{proof}
We prove that the game induced by the TEE-based alerting protocol is the simultaneous alerting game.
First, we map the participants in the protocol to the players in the game. 
By Lemma~\ref{lem:timing-feasibility}, the protocol satisfies Alert Capability.
Then, in Lemma~\ref{lem:action-privacy-non-verifiability}, we proved Deniability holds.
Finally, in Lemma~\ref{lem:payoff-equivalence}, we proved that Reward Distribution holds.
By Proposition~\ref{prop:sufficiency}, the TEE-based alerting protocol implements the simultaneous alerting game.
\end{proof}

The TEE-based protocol requires~$2n$ transactions per round.
The protocol runs in constant time, and the cost to bribe all nodes and cause its failure is~$\penaltySlash n (n-1)$ as shown in Lemma~\ref{lem:bribe-lower-bound}.

\section{Sequential Alerting Protocol}
\label{sec:sequential_alerting_protocol}
While the Lockstep and TEE-based protocols both achieve optimal~$O(n^2)$ bribery resistance in constant time, they require linear on-chain storage incurring up to~$n$ or~$2n$ transactions per alerting period. 
When alerts are infrequent or on-chain storage is expensive, this overhead may be undesirable.
We now present a third protocol that trades constant-time execution for reduced storage costs.

The \emph{sequential alerting protocol}~(\S\ref{sec:sequential_alerting_overview}) forces nodes to act in a specific order, one after the other, with full knowledge of all prior actions.
In exchange for~$O(n)$ worst-case time, this protocol incurs \emph{zero} on-chain storage when no alert is needed, making it suitable for scenarios where alerting is rare.
When an alert is raised, the protocol terminates early with a single transaction.
We analyze the game induced by this protocol~(\S\ref{sec:sequential_alerting_game}), characterize its equilibria under different bribe ranges~(\S\ref{sec:seqalertinganalysis}), and show that a pure Nash equilibrium always exists~(\S\ref{sec:spe_analysis}).

\subsection{Protocol Overview}
\label{sec:sequential_alerting_overview}
Instead of enforcing simultaneity, the \emph{sequential alerting} (illustrated in Figure~\ref{fig:sequential_protocol}) breaks the alerting period into~$n$ equal time intervals called \textit{slots}, each of length~$\SlotLen > \NetworkWrite$, and assigns each node a distinct slot in which it can raise an alert.
The protocol assigns slots to nodes according to a permutation~$\pi$.
Denote by~$\pi(s)$ the node in slot~$1 \leq s \leq n$.
And by~$\pi^{-1}$ the inverse of the permutation~$\pi$, such that~$\pi^{-1}(\pi(s)) = s$.
The function~$\pi^{-1}$ maps the node to its assigned slot.
In a slight abuse of notation, we write~$s_i$ instead of~$s_{\pi^{-1}(i)}$ for the slot assigned to node~$i$.

Each node can raise an alert only during its assigned slot.
If a node raises an alert, the smart contract~\AlertsCon verifies the node acted during its assigned slot, and if so, it penalizes all nodes assigned to earlier slots with~$\penaltySlash$ for not raising an alert. 
The alerting node gets the sum of these slashed penalties as a reward.

{
\subsubsection{Protocol Properties}
\label{sec:seq_properties}
The sequential alerting protocol departs from the simultaneous setting by relaxing the constant-time requirement. 
As a result, the original Deniability property no longer holds, but a relaxed version does hold.

\begin{definition}[Relaxed Deniability]\label{req:private_decisions_no_evidence2}
  For any execution~$E_0$ in which node~$i$ alerts (respectively, does not alert), 
  there exists another execution~$E_1$ in which node~$i$ does not alert (respectively, alerts) such that no node~$j$ \emph{in a later slot than~$i$ (i.e.,~$s_j < s_i$)} and no adversary can distinguish between~$E_0$ and~$E_1$ before {choosing their own actions}. 
\end{definition}

Importantly, however, Alert Capability and Reward Distribution are preserved in this protocol, each node can raise an alert during its assigned slot, all non-alerting nodes are penalized with~$\penaltySlash$, and since there can be up to one alerting node, an alerting node gets all the slashed penalties as a reward.

The question of whether other relaxations of these properties result in additional designs of alerting protocols with asymptotically optimal bribery resistance is left to future work.
}
\subsubsection{Node Sequencing}
\label{sec:node_sequencing}
The protocol uses a lexicographic next-permutation algorithm~\cite{Worlton1968TheAO} to sequence the nodes.
Each node and the smart contract maintain  the current order of nodes~$\pi_t$ and have a function~$\nextperm{\cdot}$ that deterministically computes the next permutation~${\pi_{t+1}= \nextperm{\pi_t}}$. 
This operation requires no communication among nodes or with the smart contract.
The full details can be found in Appendix~\ref{app:node_sequencing}.
Our analysis does not depend on the sequencing method.



\begin{figure}[t]
\centering
\includegraphics[width=0.48\textwidth]{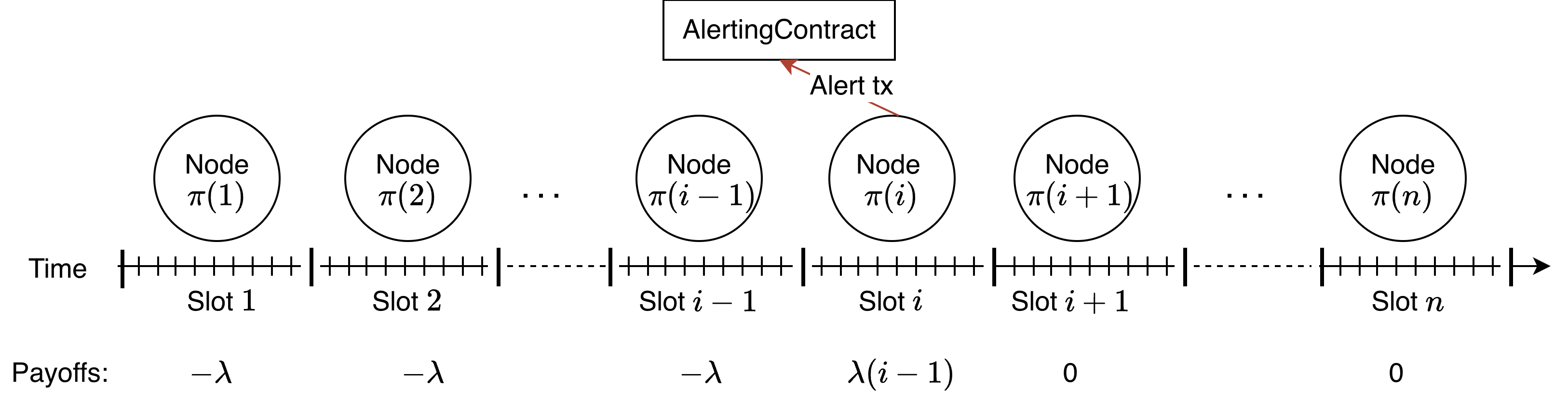}
\caption{Sequential alerting protocol: nodes are assigned to sequential time slots, each node can alert during its slot. The protocol terminates as soon as one node alerts.}
\label{fig:sequential_protocol}
\end{figure}





\subsection{Game Definition}
\label{sec:sequential_alerting_game}
Although the setup is similar to the simultaneous alerting game, in the sequential alerting game, as soon as one node alerts in its assigned slot, the alerting period terminates and later slots are skipped.
Again, two types of players are involved: the set of nodes~$\mathcal{N}$ and an adversary.
The game proceeds across~$n$ time slots. 
In every slot, the node~$i$, assigned to the current slot~$s_i$, can either alert or not alert.
In each time slot, the game consists of two steps:
First, the adversary offers node~$i$ a bribe~$\beta_i$.
Then, node~$i$ decides whether to accept the bribe and skip alerting or to alert and not get the bribe, choosing an action $a_i \in \{\alert, \noalert\}$.
The adversary pays the bribe only to nodes who had the chance to alert and whose on-chain behavior matches the \noalert action.
Denote by~$\beta = (\beta_1, \beta_2, \dots, \beta_n)$ the bribe vector offered by the adversary to all nodes.

The protocol rewards the first node to alert.
In an execution of the game, denote by~$k$ the first node to alert, i.e.,~$a_k = \alert$, and let~$s_k$ be the slot assigned to node~$k$ in the permutation~$\pi$.
Then all other nodes~$j \neq k$ choose $\noalert$, regardless of whether they could have alerted or not.
Node~$k$ that alerts and all nodes~$i$ who did not get the chance to alert, i.e.,~$s_i > s_k$, receive no bribe, we write~$\beta_i = 0$ for all $s_i \geq s_k$.
If no node alerts, we write~$k=\bot$.

Since node~$k$ is the first to alert, all nodes~$i$ such that~$s_i < s_k$ pay~$\penaltySlash$ and take the bribe~$\beta_i$,
all nodes~$i$ such that~$s_i > s_k$ are unaffected, and node~$k$ gets rewarded by the slashed value~$\penaltySlash (s_k-1)$.
If no node alerts, i.e.,~$k=\bot$, then all nodes get their bribe and no one is slashed or rewarded.
To summarize, the payoff of node~$i$ is given by
    \begin{equation}
      \label{eq:seqalerting_payoff_node}
    \rho_i(a_1,\dots,a_n , \beta_i) =
    \begin{cases}
      \penaltySlash (s_k-1), 
        & i = k \\
      \beta_i-\penaltySlash, 
        & s_i < s_k \\
    0,
        & s_i > s_k \\
    \beta_i, 
        & k = \bot \\
    \end{cases}\
    .
  \end{equation}

The adversary pays the bribe only to nodes that get the chance to alert and choose not to.
If no node alerts, the adversary gains~$\advGain$  but pays all bribes.
Therefore, the adversary's payoff is:
$$
\rho_{\text{adv}}(a_1,\dots,a_n, \beta) = 
\begin{cases}
\advGain - \sum_{i=1}^n \beta_i, &  k = \bot\\[6pt]
- \sum_{i=1}^{s_k-1} \beta_i, &  a_k = \alert 
\end{cases}
.
$$

All players' utilities are equal to their payoffs.

\subsection{Equilibrium Analysis}
\label{sec:seqalertinganalysis}

We start by showing that the adversary's gain~$\advGain$ must be greater than~${ \penaltySlash\frac{n(n-1)}{2}}$ to prevent an alert. 

\begin{proposition}
\label{claim:budget_threshold}
If the adversary's gain from an attack~$\advGain$ satisfies~$\advGain \leq  \penaltySlash\frac{n(n - 1)}{2}$, then there exists a node~$i \in \mathcal{N}$ that will alert.
\end{proposition}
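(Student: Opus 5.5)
We argue by backward induction on the slots. The key per-slot fact is that the node in slot $s$ is deterred unless its bribe exceeds $\penaltySlash(s-1)$. Summing these thresholds then shows that suppressing every alert costs at least $\penaltySlash\frac{n(n-1)}{2}$, which the adversary cannot profitably pay when $\advGain \le \penaltySlash\frac{n(n-1)}{2}$.

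\textbf{Step 1: per-node threshold.} Consider the node $i=\pi(s)$ in slot $s$, reached after all earlier nodes chose \noalert. Using Equation~\ref{eq:seqalerting_payoff_node}, alerting yields exactly $\penaltySlash(s-1)$, independent of later play. For \noalert, there are two possible continuations. If some later node alerts, node $i$ gets $\beta_i-\penaltySlash$. If nobody alerts, it gets $\beta_i$. Hence the best payoff \noalert can possibly give is $\beta_i$, and it is attained only if every later node also declines. We conclude that if $\beta_i \le \penaltySlash(s-1)$, then \alert is a (weak) best response in the subgame regardless of continuation. For $s=1$ this threshold is $0$, so the first node is indifferent at $\beta=0$; we break ties in favour of following the protocol, as in the model (nodes deviate only if strictly profitable).

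\textbf{Step 2: induction over slots.} We show by backward induction from slot $n$ down to slot $1$ that, in the SPNE, no alert occurs only if $\beta_{\pi(s)} > \penaltySlash(s-1)$ for every slot $s$. The last node compares $\beta$ against $\penaltySlash(n-1)$ directly. For an earlier slot $s$, there are two cases:
\begin{itemize}
\item If some later bribe is below its threshold, a later node alerts. Then \noalert yields $\beta_i-\penaltySlash$, so the node needs $\beta_i > \penaltySlash s$, which is even larger than $\penaltySlash(s-1)$.
\item Otherwise, all later nodes decline, and by Step 1 the node needs $\beta_i > \penaltySlash(s-1)$.
\end{itemize}
Either way, a necessary condition for node $\pi(s)$ to decline on the all-silent path is $\beta_{\pi(s)} > \penaltySlash(s-1)$. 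This is presumably the content of the helper Lemma~\ref{claim:adversary_gain_threshold_helper}.

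\textbf{Step 3: summing the thresholds.} Suppressing all alerts requires total bribe
\[
\sum_{s=1}^n \beta_{\pi(s)} > \sum_{s=1}^n \penaltySlash(s-1) = \penaltySlash\frac{n(n-1)}{2}\ge \advGain ,
\]
so the adversary's utility on the success path, $\advGain-\sum_i\beta_i$, is negative. Any bribe vector that leaves at least one node below its threshold leads to an alert. Moreover, by Step 2 that alert comes at the first slot (from the end) whose condition fails, so an alert occurs on the equilibrium path. In every case some node alerts; in particular, the adversary's rational choice against a losing success path still results in an alert.

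\textbf{Main obstacle.} The hardest part is handling the $s=1$ tie and the weak-versus-strict inequality at the boundary $\advGain=\penaltySlash\frac{n(n-1)}{2}$. I would use the stated tie-breaking: nodes follow the protocol unless deviating is strictly better, so at equality the node alerts. This makes each threshold strict, and the strict total exceeds $\advGain$ even when equality holds in the hypothesis.
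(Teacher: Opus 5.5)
Your proposal is correct and follows essentially the paper's argument. The paper also notes that \noalert pays at most $\beta_i$ while \alert pays $\lambda(s_i-1)$. Hence, with ties going to \alert, every node on an all-silent path needs $\beta_i > \lambda(s_i-1)$. Summing gives a total exceeding $\lambda\frac{n(n-1)}{2} \ge G$, so the adversary prefers the zero-bribe alternative, and an alert occurs.

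Your Step~2 backward induction is superfluous, since Step~1 alone already gives the per-slot condition; the paper argues directly by contradiction. Your side remark about where the alert occurs is also off: the alert happens no later than the earliest failing slot, possibly earlier. Your explicit strict inequality at the boundary $G=\lambda\frac{n(n-1)}{2}$ is cleaner than the paper's $\ge$.
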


\begin{proof}
Consider the adversary's utility.
If she bribes all~$n$ nodes and none alert, then her utility is~$\advGain - \sum_{i=1}^{n} \beta_i$.
If she bribes the first~$k<n$ nodes to not alert, and does not bribe node~$k+1$,
then node~$k+1$ will alert since doing so yields a positive utility of~$\penaltySlash k$ compared to not alerting which will either result in it receiving her bribe~$\beta_{k+1}=0$ or being penalized~$\beta_{k+1} - \penaltySlash =- \penaltySlash < 0$. 
Then the adversary's utility is~$-\sum_{i=1}^{k} \beta_i$.
To maximize her utility, the adversary will bribe nodes only if the gain from an alert not being raised~$\advGain$ is larger than the total bribes paid, i.e., $\advGain > \sum_{i=1}^{n} \beta_i$.

Assume for contradiction that~$\advGain \leq \penaltySlash\frac{n(n-1)}{2}$ and that no node alerts, meaning all nodes choose $\noalert$ in their slot.
Consider an arbitrary node~$i \in \mathcal{N}$ occupying slot~$s_i$. 
From our assumption, all nodes in slots~$s<s_i$ choose $\noalert$. 

If node~$i$ alerts, its utility is~$\penaltySlash (s_i - 1)$, and its utility from not alerting is~$\beta_i$ if all others also choose~$\noalert$, but~$\beta_i - \penaltySlash$ if a node alerts in a later slot. 
Therefore, her utility for not alerting is at most~$\beta_i$.
If~$\penaltySlash(s_i - 1) \geq \beta_i$, node~$i$ will alert.
But we assumed no node alerts, so we must have~$\beta_i > \penaltySlash(s_i - 1)$.
Summing this requirement across all nodes~$i \in \mathcal{N}$, the adversary's total bribe must satisfy:
\[
\sum_{i=1}^{n} \beta_i \geq \sum_{i=1}^{n} \penaltySlash (s_i - 1) = \penaltySlash \frac{n(n-1)}{2}.
\]

However, by assumption, the adversary's gain from the attack is limited to:
$
\advGain \leq \penaltySlash \frac{n(n - 1)}{2}.
$
Therefore, the adversary's total bribe exceeds her gain from the attack, and bribing all nodes to not alert yields a negative utility, lower than her utility if she chooses not to bribe at all.
We conclude that the adversary does not bribe all nodes to not alert.
Thus, at least one node must alert.
\end{proof}

\begin{corollary}
\label{cor:adversary_gain_threshold}
\corProfitableBribery
\end{corollary}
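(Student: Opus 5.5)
The plan is to derive Corollary~\ref{cor:adversary_gain_threshold} from Proposition~\ref{claim:budget_threshold} by comparing the adversary's utility under every bribe vector with her utility from offering no bribes. When $\beta_i=0$ for all $i$, the node in slot $1$ faces the choice between alerting for reward $\penaltySlash(s_1-1)=0$ and not alerting for $0$ (or $-\penaltySlash$ if someone later alerts). The node in slot $2$ strictly prefers alerting, since it earns $\penaltySlash>0$. In either case an alert is raised and the adversary pays nothing, so her utility is exactly $0$. This gives the benchmark every other strategy must be compared against.

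Next I would take an arbitrary bribe vector $\beta$ and consider the two possible outcomes. In the first, some node $k$ is the first to alert, and by the payoff definition the adversary's utility is $-\sum_{i:\,s_i<s_k}\beta_i\le 0$. This is strictly negative whenever a bribed node preceded $k$, so such a strategy never beats the zero-bribe benchmark. In the second, no node alerts. Here I reuse the argument inside Proposition~\ref{claim:budget_threshold}: because each node's utility from not alerting is at most $\beta_i$, while alerting yields $\penaltySlash(s_i-1)$, an all-\noalert outcome requires $\beta_i>\penaltySlash(s_i-1)$ for every $i$. Summing over $i$ gives $\sum_i\beta_i>\penaltySlash\frac{n(n-1)}{2}>\advGain$, so her utility $\advGain-\sum_i\beta_i$ is negative.

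Combining the two cases, every bribe vector yields utility at most $0$, and the zero vector attains $0$. So not bribing is optimal, and it is strictly better than any vector that actually pays a positive bribe.

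The main obstacle is the equilibrium subtlety in the second case. A node's decision to decline the bribe depends on what later nodes do, and the bound "utility of \noalert is at most $\beta_i$" is exactly what removes that dependence. I would state explicitly that this bound holds regardless of later behaviour, so the argument is valid along any equilibrium path. I would also note the tie case $\beta_i=\penaltySlash(s_i-1)$, where I assume nodes follow the protocol and alert, matching the proposition's convention.
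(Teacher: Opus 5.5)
Your proposal is correct and follows essentially the same route as the paper. Both rest on the Proposition~\ref{claim:budget_threshold} argument that suppressing every alert forces $\beta_i>\penaltySlash(s_i-1)$ for all $i$, and on the observation that once some node $k$ alerts the adversary's utility is $-\sum_{s_i<s_k}\beta_i\le 0$. Your version is slightly more careful in two ways. You treat the all-\noalert outcome directly as one with negative utility, instead of citing the Proposition's conclusion that some node alerts, which itself presupposes the adversary's rationality. You also check explicitly that the zero vector attains utility $0$.
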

The proof can be found in Appendix~\ref{app:sequential_adversary_optimal}.

\subsection{Subgame-Perfect Nash Equilibrium Analysis}
\label{sec:spe_analysis}
To analyze rational node behavior in the sequential alerting game, we employ the concept of Subgame Perfect Nash Equilibrium (SPNE). 
Since the game is sequential, each node observes previous actions. 
To characterize the SPNE, we apply backward induction starting from the last node in the permutation and working back to the first.
We define the \emph{adversary's remaining budget} in slot~$s$ as the amount of bribes the adversary can still pay to nodes to prevent them from alerting while ensuring her total bribes do not exceed her gain, i.e.,~$B(s) = \advGain - \sum_{j=1}^{s-1} \beta_{\pi(j)}$.

\begin{lemma}[Adversary's Gain Threshold for Alerting]
\label{claim:adversary_gain_threshold_helper}
In an execution of the sequential alerting game, if the nodes assigned to the first~$s-1$ slots did not alert, then the adversary can bribe nodes in slots~$s,\dots,n$ to not alert if and only if the adversary's remaining budget~$B(s)$ is at least~$\sum_{j=s}^{n} \penaltySlash (j-1)$. 
\end{lemma}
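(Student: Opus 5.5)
We argue by backward induction on the slot index $s$, from $s=n$ down to $s=1$. The hypothesis is the strengthened statement of the lemma itself: conditioned on the nodes in slots $1,\dots,s-1$ having chosen $\noalert$, the adversary can induce $\noalert$ from all nodes in slots $s,\dots,n$ if and only if $B(s) \ge \sum_{j=s}^{n}\penaltySlash(j-1)$. Read with the tie-breaking of Claim~\ref{clm:noalert-dominant}, the threshold is weak at equality. We treat the node in slot $s$ as indifferent at $\beta=\penaltySlash(s-1)$ and resolve ties in the adversary's favour, so that the claimed non-strict inequality is the right one.

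\textbf{Per-slot threshold.} First we characterize one slot. Suppose slots $1,\dots,s-1$ did not alert. By Equation~\ref{eq:seqalerting_payoff_node}, node $\pi(s)$ obtains $\penaltySlash(s-1)$ if it alerts. If it declines, it obtains $\beta_{\pi(s)}$ when no later node alerts, and $\beta_{\pi(s)}-\penaltySlash$ otherwise. So in any SPNE where the continuation has all later nodes silent, node $\pi(s)$ does not alert if and only if $\beta_{\pi(s)}\ge \penaltySlash(s-1)$. When some later node would alert, node $\pi(s)$ instead needs $\beta_{\pi(s)}\ge\penaltySlash s$.

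\textbf{Induction.} The base case is $s=n$. Node $\pi(n)$ has no successors, so it stays silent exactly when $\beta_{\pi(n)}\ge\penaltySlash(n-1)$, and the adversary can afford this exactly when $B(n)\ge\penaltySlash(n-1)$.

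For the inductive step, suppose slots $1,\dots,s-1$ were silent.

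For the ``if'' direction, assume $B(s)\ge\sum_{j=s}^n\penaltySlash(j-1)$. The adversary offers $\beta_{\pi(s)}=\penaltySlash(s-1)$. Then
\[B(s+1)=B(s)-\penaltySlash(s-1)\ge\sum_{j=s+1}^n\penaltySlash(j-1),\]
so by the induction hypothesis all later nodes can be kept silent. Given that continuation, node $\pi(s)$ weakly prefers $\noalert$.

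For the ``only if'' direction, suppose every node in slots $s,\dots,n$ is silent on path. Then each node $\pi(j)$ with $j\ge s$ faces a silent continuation. By the per-slot threshold it must receive $\beta_{\pi(j)}\ge\penaltySlash(j-1)$. Summing over $j$ and using that total bribes cannot exceed the remaining budget gives $B(s)\ge\sum_{j\ge s}\penaltySlash(j-1)$. Equivalently, if the budget is short, then at some slot the induction hypothesis forces a later alert or makes the current bribe unaffordable.

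\textbf{Main obstacle.} The delicate point is making ``can bribe'' precise. Two facts are needed:

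\begin{itemize}
\item The adversary never wants to overspend her remaining budget. Spending beyond $B(s)$ makes her total exceed $\advGain$, and Corollary~\ref{cor:adversary_gain_threshold} together with the proof of Proposition~\ref{claim:budget_threshold} shows such spending is irrational.
\item Node $\pi(s)$'s belief about the continuation is pinned down by the SPNE of the subgame. This is exactly what the induction hypothesis supplies.
\end{itemize}

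We also check that underpaying an early node and overpaying a later one never helps. This follows because each slot's constraint $\beta_{\pi(j)}\ge\penaltySlash(j-1)$ is separate and binds individually.
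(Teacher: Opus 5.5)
Your proposal is correct and takes essentially the same route as the paper: backward induction from slot $n$, the per-slot threshold $\lambda(s-1)$ for a node facing a silent continuation, and summing these thresholds against the remaining budget $B(s)$. Your split into the two directions and your explicit tie-breaking in the adversary's favour are tidier than the paper's text, which has the node alert at equality and ends its inductive step with a strict inequality; your convention is what makes the non-strict ``at least'' in the statement correct.
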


\begin{proof}
We use backward induction over the sequentially ordered set of nodes from slot $n$ to slot $1$ to prove that if the nodes assigned to the first~$s-1$ slots did not alert, then the adversary bribes nodes in slots~$s,\dots,n$ to not alert if and only if the adversary's remaining budget~$B(s)$ is at least~$\sum_{j=s}^{n} \penaltySlash (j-1)$. 

\emph{Base Case:}  
Consider the node~$i = \pi(n)$ assigned to the last slot~$n$. 
If any previous node in slots~$1$ to~$n-1$ alerted, then the game would have ended, and node~$i$ would not have the opportunity to alert.
Otherwise, all previous nodes assigned to slots~$1$ to~$n-1$ choose not to alert. 
Thus, by Equation~\ref{eq:seqalerting_payoff_node}, node~$i$'s utility is~$\penaltySlash(n-1)$ if it alerts, $a_i = \alert$,
or~$\beta_{\pi(n)} $ if it does not,~${a_i = \noalert}$.

Hence, node $i$ alerts if and only if~$\penaltySlash(n-1)  \geq \beta_{\pi(n)}$.
So the adversary bribes node $i$ to not alert if and only if the adversary's remaining budget~${B(n) \geq \penaltySlash(n-1)}$.

\emph{Inductive Step:}
Moving backward, consider an arbitrary slot~$s<n$ and assume the claim holds for all~$s<j\leq n$.

I.e., if the nodes assigned to the first~$j$ slots did not alert, the adversary bribes nodes in slots~$j+1,\dots,n$ to not alert if and only if the adversary's remaining budget~$B(j+1)$ is at least~$\sum_{t=j+1}^{n} \penaltySlash \cdot t$. 

If the nodes assigned to the first~$s-1$ slots did not alert, then the node~$i$ assigned to slot~$s$ needs to decide whether to alert or not.
If it alerts, it receives a reward of~$\penaltySlash(s-1)$.
If it does not alert, it receives a bribe of~$\beta_i$, but risks being penalized later if any of the nodes in slots $s+1, \dots, n$ alerts.

By Corollary~\ref{cor:adversary_gain_threshold}, if the adversary cannot bribe all nodes to not alert, then she will not bribe any.
By Equation~\ref{eq:seqalerting_payoff_node}, node~$i$'s utility is~$\penaltySlash(s-1)$ if it alerts,~$a_i = \alert$,
or~$\beta_i$ if it does not alert,~$a_i = \noalert$.
Thus, node~$i$ alerts if and only if $
\penaltySlash(s-1)  \geq \beta_i.
$

The adversary bribes node $i$ and all nodes in later slots~$s+1, \dots, n$ to not alert if and only if the adversary's remaining budget is 
    \[
    B(s) > \penaltySlash(s-1) + B(s+1) \\
    = \sum_{j=s}^{n} \penaltySlash(j-1).
  \]
\end{proof}

\begin{corollary}
\label{cor:sequential_bribery_cost}
  \corBriberyCost
\end{corollary}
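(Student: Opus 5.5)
The plan is to apply Lemma~\ref{claim:adversary_gain_threshold_helper} at the very first slot, $s=1$. At that point no earlier nodes exist, so the hypothesis ``the nodes assigned to the first $s-1$ slots did not alert'' holds vacuously.

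The remaining budget at slot $1$ is $B(1)=\advGain-\sum_{j=1}^{0}\beta_{\pi(j)}=\advGain$, since the sum is empty. The lemma then says the adversary can bribe the nodes in slots $1,\dots,n$ not to alert if and only if
\[
\advGain \ \ge\ \sum_{j=1}^{n}\penaltySlash\,(j-1) = \penaltySlash\,\frac{n(n-1)}{2}.
\]
This sum is the arithmetic series $0+1+\dots+(n-1)$, multiplied by $\penaltySlash$.

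The cost lower bound and the quadratic-growth claim come from the per-slot requirement inside the backward induction. The node in slot $s$ refuses a bribe unless $\beta_{\pi(s)}>\penaltySlash(s-1)$. Because slots are a permutation, summing this over all slots gives a total outlay of more than $\penaltySlash\,n(n-1)/2$, which is $\Theta(n^2)$ for constant $\penaltySlash$.

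The main obstacle is the boundary case, where the inequality could be weak or strict. The statement says ``larger than.'' Proposition~\ref{claim:budget_threshold} and Corollary~\ref{cor:adversary_gain_threshold} already establish that when $\advGain\le\penaltySlash\,n(n-1)/2$ some node alerts and the adversary bribes nobody. This is because at equality each node's alert reward ties with its bribe, and ties are resolved in favor of alerting.

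I would therefore prove the two directions separately.
\begin{itemize}
\item For the ``only if'' direction, I would cite Proposition~\ref{claim:budget_threshold} directly.
\item For the ``if'' direction, suppose $\advGain>\penaltySlash\,n(n-1)/2$. I would exhibit the bribes $\beta_{\pi(s)}=\penaltySlash(s-1)+\varepsilon$, with $\varepsilon>0$ small enough that their total is still below $\advGain$. By the inductive argument of Lemma~\ref{claim:adversary_gain_threshold_helper}, each node then strictly prefers $\noalert$, so the adversary's utility is positive.
\end{itemize}
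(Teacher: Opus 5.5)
Your proposal is correct and follows essentially the paper's route: the paper's one-line proof also applies Lemma~\ref{claim:adversary_gain_threshold_helper} to all nodes (i.e., at slot $1$, where $B(1)=G$) and sums the per-slot thresholds $\lambda(s-1)$ to obtain $\lambda n(n-1)/2$. Your separate handling of the boundary is a worthwhile addition: you cite Proposition~\ref{claim:budget_threshold} for the ``only if'' direction and exhibit the bribes $\lambda(s-1)+\varepsilon$ for the ``if'' direction, which covers a point the paper glosses over, namely the mismatch between the lemma's ``at least'' (its inductive step actually derives a strict inequality) and the corollary's ``larger than''.
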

We prove by applying Lemma~\ref{claim:adversary_gain_threshold_helper} to all nodes. The proof is found in Appendix~\ref{app:sequential_bribery_cost}

\section{Conclusion}
\label{sec:conclusion}
We formalize the alerting problem in blockchain systems in a cryptoeconomic setting, where~$n$ nodes monitor off-chain events and raise alerts on-chain when certain conditions hold.
We present an upper bound on the bribery resistance that any $\penaltySlash$-alerting protocol can achieve, i.e., the total amount an adversary must pay in bribes to prevent any alert from being published on-chain.
Then, we present a constant-time simultaneous alerting game that asymptotically achieves this upper bound.
We present the Lockstep and the TEE-based alerting protocols that implement this game, each under different assumptions. 
Both protocols are constant-time, but require~$O(n)$ on-chain storage in the worst case.
We relax the constant-time requirement and analyze the sequential alerting protocol that requires~$O(1)$ on-chain storage, at the cost of~$O(n)$ time, while also achieving asymptotically optimal bribery resistance.

%% file: related_work.tex
Many critical workflows in blockchain systems rely on off-chain actors sending alerts with external information to maintain their functionality.
Previous research has examined mechanisms for ensuring that decentralized systems can detect and respond to external events, faults, or attacks, but without modeling the economics of suppressing those detection systems.
Examples of such systems include rollups~\cite{eth-optimistic-rollups,arbitrum-sequencer,optimism-batcher,sok-rollups} execute transactions off-chain and periodically commit snapshots to a base chain.
{Oracle networks}~\cite{chainlink-feeds,uma-oo-overview} deliver external data, such as asset prices or randomness, to smart contracts.
Cross-chain protocols~\cite{cross1,cross2,cross3}, transfer messages and assets between blockchains.
And misbehavior reporting protocols~\cite{cheating1,cheating2,cheating3}, submit evidence of protocol violations on-chain to enforce penalties.

In each of these cases, alert types are different, but they all share the property that a missing or incorrect alert can lead to significant financial loss or protocol failure.
All of these systems rely on off-chain actors to send alerts to trigger on-chain functionality.
The system rewards alerters when they raise alerts based on a pre-defined set of conditions.
Some systems also penalize participants for incorrect or missing alerts (e.g.,~\cite{uma-oo-overview,tellor-disputes}).
However, existing works do not analyze the incentives in the presence of a bribing adversary who aims to suppress alerts.
Meanwhile, other works show that adversaries use bribes to induce protocol deviations in other blockchain systems~\cite{bribes1, bribes2, lloyd2023emergent, vote-buying, bribes5}.

Most existing mechanisms for the blockchain alerting problem pay a fixed bounty to the first valid alert.
However, we show that such mechanisms are vulnerable to bribery-based suppression attacks.
In particular, a bribing adversary can suppress alerts by paying an amount linear in the number of participants, $O(n)$.
Chen et al.~\cite{Prrr} study incentivizing timely on-chain alerts but do not consider a bribing adversary.

The Chainlink~2.0 whitepaper~\cite{chainlink2021whitepaper} proposes a sequential mechanism in which nodes who participate in a protocol to agree on off-chain data, can raise alerts on incorrect outputs of the protocol.
This protocol runs in linear time in the number of nodes.
The whitepaper does not analyze the adversary's incentives or prove the claimed quadratic bound.
We build two constant-time alerting protocols and consider the sequential alerting protocol proposed in~\cite{chainlink2021whitepaper}.
We formalize and analyze the three alerting protocols and compare their trade-offs in terms of bribery resistance, on-chain overhead, latency, and practical assumptions. 
Our work is the first, to our knowledge, to formalize the \emph{alerting problem} as a cryptoeconomic game, analyze its equilibria, establish asymptotically optimal bribery resistance, and specify different practical instantiations.